\documentclass[onecolumn,journal,a4paper,oneside,final]{IEEEtran}

\usepackage[utf8]{inputenc}
\usepackage[T1]{fontenc}
\usepackage{amsmath}
\usepackage{amssymb}
\usepackage{amsthm}
\usepackage{amsfonts}
\usepackage[dvipsnames]{xcolor}
\usepackage[all]{xy}
\usepackage[hidelinks]{hyperref}
\usepackage{authblk}
\usepackage[nolist]{acronym}
\usepackage{mathrsfs}
\usepackage{bm}
\usepackage{bigstrut}
\usepackage{psfrag}
\usepackage{stackengine}
\usepackage{mathbbol}
\usepackage{lipsum}
\usepackage[sans]{dsfont}
\usepackage{balance}
\usepackage{mathtools,scalerel}
\mathtoolsset{showonlyrefs}

% To include some graphics
%\usepackage[pdftex]{graphicx}
%\usepackage{subcaption}

% For Figures, Tables etc.
\usepackage{float} 	% To place a figure precisely at a given point on the page
\usepackage{array}
\usepackage{wrapfig}
\usepackage{mathtools}
\usepackage{verbatim}   % For comments
\usepackage{pgfplots,relsize}
\pgfplotsset{compat=1.17}

\usepackage{tikz}
\usetikzlibrary{shapes,arrows}
\usetikzlibrary{calc}
\usetikzlibrary{patterns} % LATEX and plain TEX when using Tik Z
\usepackage{mathrsfs}
\usetikzlibrary{fadings}
\usetikzlibrary{shadows.blur}
\usetikzlibrary{spy}
\usetikzlibrary{positioning,chains,fit}

% Arrays and Cells
\usepackage{tabularx}
\usepackage{multirow} 	% To combine rows
\usepackage{array, makecell}

% Set margins --> inches possible with 'in' instead of 'cm'
%\usepackage[margin = 3.5cm, top = 3.5cm, bottom = 3.5cm]{geometry}

\theoremstyle{plain}
\newtheorem{theorem}{Theorem}[section]
\newtheorem{lemma}[theorem]{Lemma}

\newtheorem{corollary}[theorem]{Corollary}

\theoremstyle{definition}
\newtheorem{definition}[theorem]{Definition}

\newtheorem{remark}[theorem]{Remark}
\newtheorem{example}[theorem]{Example}

% For pseudo codes
\usepackage{algorithm}
\usepackage{algpseudocode}

% Introduce colors
%\usepackage[dvipsnames]{xcolor}
%\definecolor{green}{HTML}{2CA02C}
%\setlength {\marginparwidth }{2cm}
\usepackage{todonotes}

% ----------------------------
%   Math Operators
% ----------------------------

\newcommand{\me}{\mathrm{e}}
\newcommand{\st}{\, | \,} % "such that"-bar in a set

% Vectors
\newcommand{\bfc}{\mathbf{c}}
\newcommand{\bfe}{\mathbf{e}}
\newcommand{\bfk}{\mathbf{k}}
\newcommand{\bfp}{\mathbf{p}}
\newcommand{\bft}{\mathbf{t}}
\newcommand{\bfu}{\mathbf{u}}
\newcommand{\bfv}{\mathbf{v}}
\newcommand{\bfx}{\mathbf{x}}
\newcommand{\bfy}{\mathbf{y}}
\newcommand{\bfz}{\mathbf{z}}

\newcommand{\bfE}{\mathbf{E}}
\newcommand{\bfH}{\mathbf{H}}
\newcommand{\bfL}{\mathbf{L}}
\newcommand{\bfX}{\mathbf{X}}
\newcommand{\bfY}{\mathbf{Y}}
\newcommand{\bfzero}{\mathbf{0}}

\newcommand{\veca}{\mathbf{a}}
\newcommand{\vecb}{\mathbf{b}}

\newcommand{\vecu}{\mathbf{u}}
\newcommand{\vecv}{\mathbf{v}}

% useful math functions
\newcommand{\card}[1]{\left\vert{#1}\right\vert} % Cardinality
\newcommand{\floor}[1]{\lfloor{#1}\rfloor} % Floor function
 % Ceiling function
\newcommand{\set}[1]{\left\lbrace{#1}\right\rbrace} % set

\newcommand{\tendsto}{\longrightarrow}
 % --> inf
 % select uniformly at random

% Algebra
\newcommand{\intring}{\mathbb{Z}}
\newcommand{\intmodq}[1]{\intring/{#1}\intring}
\newcommand{\units}[1]{(\intmodq{#1})^\times}

\newcommand{\NN}{\mathbb{N}}
\newcommand{\Zqn}{(\intmodq{q})^n}
\newcommand{\orbit}[1]{\mathcal{O}_{#1}}
\newcommand{\divisors}[1]{\mathbb{D}_{#1}}

% Probability Theory
\newcommand{\prob}{\mathbb{P}} % Probability
\newcommand{\expect}{\mathbb{E}}

 % Independent RVs

\newcommand{\perror}{P_B}
\newcommand{\pep}{\mathsf{PEP}}

% Coding Theory
\newcommand{\code}{\mathcal{C}}
\DeclareMathOperator{\weight}{wt}
\newcommand{\LW}{\weight_{\scaleto{\mathsf{L}}{4pt}}}
\newcommand{\HW}{\weight_{\scaleto{\mathsf{H}}{4pt}}}
\DeclareMathOperator{\dist}{d}
\newcommand{\LD}{\dist_{\scaleto{\mathsf{L}}{4pt}}}

% Information Theory

\newcommand{\KLD}[2]{D(#1 \, || \, #2)} 

% Lee sphere and ball
\newcommand{\leesphere}[1]{S_{#1}^{(n)}}
\newcommand{\leeball}[1]{V_{#1}^{(n)}}

% ----------------------------
%   For this paper
% ----------------------------

    % set of compositions/paritions of integer #1 with restrictions given in #2

    % Vectors of type #1 (partition and t) of length #2

    % Set of vectors of length #1 with properties #2

    % Set of all vectors with given Lee weight

    % \newcommand{\neigh}{\mathcal{N}}
    \newcommand{\vn}{\mathsf{v}}
    \newcommand{\cn}{\mathsf{c}}

    \newcommand{\neigh}[1]{\ensuremath{\mathcal{N}\left(#1\right)}}
    \newcommand{\msg}[2]{\ensuremath{m_{#1\rightarrow#2}}}
    
    \newcommand{\mchVec}{\ensuremath{\bm{m}_{\vn}}}

    \newcommand{\msgVec}[2]{\ensuremath{\bm{m}_{#1\rightarrow#2}}}
    
    \newcommand{\ens}{\mathscr{C}}

    \newcommand{\permMat}{\bm{\Pi}}

    \newcommand{\thr}{\ensuremath{\delta^\star}}
    \newcommand{\thrSMP}{\ensuremath{\delta^\star_{\scaleto{\mathsf{SMP}}{3.5pt}}}}
    \newcommand{\thrNBP}{\ensuremath{\delta^\star_{\scaleto{\mathsf{BP}}{3.5pt}}}}
    \newcommand{\thrSH}{\ensuremath{\delta^\star_{\scaleto{\mathsf{SH}}{3.5pt}}}}
    
    \DeclareMathOperator*{\argmin}{argmin}
    \DeclareMathOperator*{\argmax}{argmax}
    
    \newcommand{\circConv}{\mathop{\vphantom{\sum}\mathchoice
        {\vcenter{\hbox{\huge $\circledast$}}}
        {\vcenter{\hbox{\Large A}}}{\mathrm{A}}{\mathrm{A}}}\displaylimits}
    \newcommand{\hadProd}{\mathop{\vphantom{\sum}\mathchoice
        {\vcenter{\hbox{\huge $\odot$}}}
        {\vcenter{\hbox{\Large A}}}{\mathrm{A}}{\mathrm{A}}}\displaylimits}

% ----------------------------
%   Comments
% ----------------------------

\definecolor{dark_red}{RGB}{150,0,0}
\definecolor{dark_green}{RGB}{0,150,0}
\definecolor{dark_blue}{RGB}{0,0,150}
\definecolor{dark_pink}{RGB}{80,120,90}

\newcommand{\memLC}{memoryless Lee channel}
\newcommand{\constLC}{constant Lee weight channel}
\usepackage{flushend}

\begin{document}
%%%%%%%%%%%%%%%%%%%%%%%%%%%%%%%%%%%%%%%%%%%%%%%%%%%%%%%%%%%%%
% Title and Autors
%%%%%%%%%%%%%%%%%%%%%%%%%%%%%%%%%%%%%%%%%%%%%%%%%%%%%%%%%%%%%
\title{Error-Correction Performance of Regular Ring-Linear LDPC Codes over Lee Channels}

\author{Jessica~Bariffi,~\IEEEmembership{Student Member,~IEEE},
Hannes~Bartz,~\IEEEmembership{Member,~IEEE},
Gianluigi~Liva,~\IEEEmembership{Senior Member,~IEEE} and
Joachim~Rosenthal,~\IEEEmembership{Fellow,~IEEE}
\thanks{
    This paper was presented in part at 2022 IEEE Global Communications Conference.
    
    J.~Bariffi was with the German Aerospace Center, Münchener Strasse 20, D-82234 Wessling, Germany, and is currently with the Technical University of Munich, Theresienstrasse 90, D-80333 München, Germany (e-mail: {\tt jessica.bariffi@tum.de}).
    
    H.~Bartz and G.~Liva are with the German Aerospace Center, Münchener Strasse 20, D-82234 Wessling, Germany (e-mail: {\tt \{hannes.bartz,gianluigi.liva\}@dlr.de}).

    J.~Rosenthal is with the University of Zurich, Institut für Mathematik, Winterthurerstrasse 190, CH-8057 Zürich, Switzerland (e-mail: {\tt rosenthal@math.uzh.ch}).

    J.~Rosenthal has been supported in part by the Swiss National Science Foundation under the grant No. $212865$. J. Bariffi, H. Bartz and G. Liva acknowledge the financial support by the Federal Ministry of Education and Research of Germany in the program of "Souver\"an. Digital. Vernetzt." Joint project 6G-RIC, project identification number: 16KISK022.
}}

\maketitle
 \thispagestyle{empty}
 \IEEEoverridecommandlockouts

 \markboth
    {submitted to IEEE Transactions on Information Theory}
    {J. Bariffi et al.: Performance of Regular Ring-Linear LDPC Codes over Lee Channels}

\begin{abstract}
    Most low-density parity-check (LDPC) code constructions are considered over finite fields. 
    In this work, we focus on regular LDPC codes over integer residue rings and analyze their performance with respect to the Lee metric.
    Their error-correction performance is studied over two channel models, in the Lee metric. 
    The first channel model is a discrete memoryless channel, whereas in the second channel model an error vector is drawn uniformly at random from all vectors of a fixed Lee weight. 
    It is known that the two channel laws coincide in the asymptotic regime, meaning that their marginal distributions match.
    For both channel models, we derive upper bounds on the block error probability in terms of a random coding union bound as well as sphere packing bounds that make use of the marginal distribution of the considered channels. 
    We estimate the decoding error probability of regular LDPC code ensembles over the channels using the marginal distribution and determining the expected Lee weight distribution of a random LDPC code over a finite integer ring. 
    By means of density evolution and finite-length simulations, we estimate the error-correction performance of selected LDPC code ensembles under belief propagation decoding and a low-complexity symbol message passing decoding algorithm and compare the performances. The analysis developed in this paper may serve to design regular \ac{LDPC} codes over integer residue rings for storage and cryptographic application.
\end{abstract}

\begin{IEEEkeywords}
Belief propagation, Lee metric, LDPC codes, ring-linear codes, symbol message passing decoding, weight enumerator
\end{IEEEkeywords}

%%%%%%%%%%%%%%%%%%%%%%%%%%%%%%%%%%%%%%%%%%%%%%%%%%%%%%%%%%%%%
% Acrynoms
%%%%%%%%%%%%%%%%%%%%%%%%%%%%%%%%%%%%%%%%%%%%%%%%%%%%%%%%%%%%%

\begin{acronym}
    \acro{BP}{belief propagation}
    \acro{VN}{variable node}
    \acro{CN}{check node}
    \acro{DE}{density evolution}
    \acro{EXIT}{extrinsic information transfer}
    \acro{i.i.d.}{independent and identically distributed}
    \acro{LDPC}{low-density parity-check}
    \acro{MAP}{maximum a posteriori probability}
    \acro{APP}{a posteriori probability}
    \acro{MCM}{Monte Carlo method}
    \acro{r.v.}{random variable}
    \acro{p.m.f.}{probability mass function} 
    \acro{ML}{maximum likelihood}
    \acro{WEF}{weight enumerating function}
    \acro{MDS}{maximum distance separable}
    \acro{AWE}{average weight enumerator}
    \acro{DMC}{discrete memoryless channel}
    \acro{w.r.t.}{with respect to}
    \acro{BSC}{binary symmetric channel}
    \acro{QSC}{$q$-ary symmetric channel}
    \acro{SMP}{symbol message-passing}
    \acro{RV}{random variable}
    \acro{PMF}{probability mass function}
    \acro{qSC}[$q$-SC]{$q$-ary symmetric channel}
    \acro{RCU}{random coding union}
    \acro{BMP}{binary message-passing}
    \acro{PEG}{progressive edge growth}
    \acro{LSF}{Lee symbol flipping}
    \acro{TV}{total variation}
    \acro{MD}{minimum distance}
\end{acronym}

%%%%%%%%%%%%%%%%%%%%%%%%%%%%%%%%%%%%%%%%%%%%%%%%%%%%%%%%%%%%%
% Sections
%%%%%%%%%%%%%%%%%%%%%%%%%%%%%%%%%%%%%%%%%%%%%%%%%%%%%%%%%%%%%
\section{Introduction}\label{sec:intro}

\IEEEPARstart{T}{he} Lee metric has been introduced in \cite{lee1958some,Ulrich} for phase shift keying modulation purposes, where the first notion of a channel ``matching'' the Lee metric appeared. 
The construction of Lee-metric codes was explored in various contexts (\cite{berlekamp1966negacyclic,chiang1971channels,etzion2010dense,golomb1968algebraic,prange1959use}).
Currently, the Lee metric is considered for applications in post-quantum cryptography (\cite{bariffi2022properties, bariffi2022informationset, chailloux2021classical, cryptoeprint:fuLeeca, weger2020hardness}). It has been shown that the syndrome decoding problem in the Lee metric (originally introduced over $\intmodq{4}$ in \cite{chailloux2021classical}) is NP-hard over any integer residue ring modulo $p^s$, where $p$ is a prime \cite{weger2020hardness}. The paper also provides several generic decoding algorithms to attack the syndrome decoding problem. In \cite{weger2020hardness} the authors showed that the Lee metric information set decoding variants are more costly than their counterparts in the Hamming metric. Therefore, the Lee metric is a promising metric to reduce the key sizes or signature sizes. Furthermore, codes in the Lee metric have potential applications in the context of magnetic \cite{roth1994lee} and DNA \cite{gabrys2017asymmetric} storage systems.

In \cite{bariffi2022properties} a channel model has been introduced over a $q$-ary ring, that adds to the channel input an error vector of given constant Lee weight. We will refer to this channel as the \textit{\constLC}. In the limit of large block length, the single-letter (i.e., marginal) distribution of the error vector elements follows a Boltzmann-like distribution.
This distribution results to be the dominant empirical distribution for vectors of a fixed Lee weight, i.e., under the fixed weight constraints, it is the entropy-maximizing distribution. Introducing an error vector of fixed weight is motivated by code-based cryptosystems, where the error vector is typically generated at the encryption side, with a constant Lee weight. The underlying syndrome decoding problem's hardness is highly dependent on the weight of the error term. As the block length of the code grows large, it is not possible to reduce the Lee weight by a scalar multiplication as shown in \cite{bariffi2022properties}. The decoding and error-correction performance is additionally determined by the minimum distance of a code. Hence, deriving bounds for the minimum distance is an important task. For the Lee metric several analogue bounds to the Hamming metric, such as the Singleton bound, Gilbert-Varshamov bound, the sphere packing bound, have been developed (see \cite{astola1984asymptotic, byrne2023bounds, loeliger1994upper}). These bounds are all with respect to the minimum Lee distance.\\

In this paper, we consider two channel models: The \constLC, and a \ac{DMC} matched to the Lee metric \cite{chiang1971channels}. The first is a channel where a constant-weight error pattern is added to the transmitted codeword, where the error pattern is chosen uniformly at random from the set of vectors with fixed Lee weight and length equal to the block length. It is possible to show that, in the limit of large block length, and with Lee weights that are proportional to the block length, the marginal distribution of the additive error term follows the well-known Boltzmann distribution. The second channel is an additive \ac{DMC}, where the additive error term follows the Boltzmann distribution (however, differently from the constant-weight channel, the Lee weight of the error vector is not fixed). We refer to the second channel as \textit{\memLC}.

Making use of 
the marginal distribution of the channels, we derive upper bounds on the error-correction capability achievable by a code for given block length and rate.
We derive random coding union bounds for both channels as well as a sphere-packing bound over the \memLC, providing a finite-length performance benchmark to evaluate the block error probability of practical coding schemes. In the case of the \memLC, we also derive an upper bound on the decoding failure probability of a general linear block code under \ac{ML} decoding based on the Lee weight distribution (i.e., the Lee distance spectrum) of the code. We compute the average Lee weight distribution of \ac{LDPC} code ensembles \cite{studio3:GallagerBook} over finite integer rings \cite{Fuja05:Ring} and analyze its spectral growth rate. 

Finally, we study the decoding performance of \ac{LDPC} codes over finite integer rings over both channel models. We consider \ac{LDPC} codes over $q$-ary integer residue rings and analyze their performance with respect to the Lee metric from a code ensemble point of view, via density evolution analysis. For simplicity, we focus on regular \ac{LDPC} code ensembles, since this class of \ac{LDPC} code ensembles is mainly used in cryptography. The extension to irregular or protograph-based LDPC code ensembles is straightforward --- the main difference in the analysis being the introduction of variable/check node degree-dependent generating functions in the Lee distance spectrum analysis, as well as degree-dependent density evolution recursions.
The decoding algorithms considered are the well-known \ac{BP} decoding algorithm \cite{DM98, Fuja05:Ring} and the \ac{SMP} algorithm \cite{Lazaro19:SMP}. The \ac{SMP} decoder was originally defined for the $q$-ary symmetric channel. In this work we adapt the decoder to Lee channels accordingly. The performance of both decoders will additionally be compared to the \ac{LSF} decoder presented in \cite{santini2020low}. We provide finite-length simulation results for both the {\memLC} and the {\constLC} for the decoders mentioned. The results are compared to the finite-length performance bounds derived for the corresponding channel model.

The paper is organized as follows. Section \ref{sec:prelim} serves as preliminary section, where we state important definitions and results needed throughout the paper. In Section \ref{sec:boundsChannels}, we derive finite-length bounds on the block error probability achievable by block codes over Lee channel models. In Section \ref{sec:LDPCdistances}, by means of asymptotic enumeration techniques, we derive the average Lee weight spectrum of a regular \ac{LDPC} code ensembles. The Lee weight spectrum serves then to derive bounds on the error probability in Section \ref{sec:LDPCperformance}. We then analyze and compare the performance of \ac{LDPC} codes over both channel models under \ac{BP} and \ac{SMP} decoding. We discuss the main ingredients to adapt the \ac{SMP} from the original setting to the two channel models in the Lee metric, which relies on an assumption for the extrinsic channel probability. We justify this assumption using empirical results. Finally, conclusions follow in Section \ref{sec:conc}. 
%%%%%%%%%%%%%%%%%%%%%%%%%%%%%%%%%%%%%%%%%%%%%%%%%%%%%%%%%%%%%

%%%%%%%%%%%%%%%%%%%%%%%%%%%%%%%%%%%%%%%%%%%%%%%%%%%%%%%%%%%%%
\section{Preliminaries}\label{sec:prelim}
In this section we introduce the basic notation and results required in the course of the paper. In the following we denote by $\intmodq{q}$ the ring of integers modulo $q$, where $q$ is a positive integer. For simplicity, we assume that $\intmodq{q}$ is represented by the set $\set{0, 1, \dots, q-1}$. The set of units of $\intmodq{q}$ will be denoted by $\units{q}$. By abuse of notation we will call an element of $\Zqn$ a vector of length $n$ and we will denote it by bold lower case letters. Similarly, matrices are denoted by bold upper case letters. For any real number $x$, we use the notation $[\, x \, ]^+ := \max(0, x)$.  We denote by $X$ a random variable over a discrete alphabet $\mathcal{X}$ and let $x \in \mathcal{X}$ be its realization. For every $x \in \mathcal{X}$, we will denote the probability distribution of $X$ by $$P_X(x) := \prob(X = x).$$

Given a positive integer $n$ and an $s$-tuple of nonnegative integers $\bfk :=(k_1, \ldots, k_s)$ satisfying $\sum_{i = 1}^s k_i = n$, we denote the multinomial coefficient by
\begin{align*}
    \binom{n}{\bfk} := \binom{n}{k_1, \ldots, k_s} = \frac{n!}{k_1! \ldots k_s!}.
\end{align*}

%%%%%%%%%%%%%%%%%%%%%%%%%%%%%%%%%%%%%%%%%%%%%%%%%%%%%%%%%%%%%%%%%
%%%%%%%%%%%%%%%%%%%%%%%%%%%%%%%%%%%%%%%%%%%%%%%%%%%%%%%%%%%%%%%%%
\subsection{The Lee Metric}\label{subsec:leeMetric}

\begin{definition}\label{def:leeweight}
    Let $a \in \intmodq{q}$. Its \textit{Lee weight} is defined as
	\begin{align*}
		\LW(a) := \min (a, q-a).
	\end{align*}
	For a vector $\veca = (a_1, \dots, a_n) \in \Zqn$ of length $n$, its Lee weight is defined to be the sum of the Lee weights of its entries, i.e.
	\begin{align*}
		\LW(\veca) = \sum_{i=1}^n \LW(a_i).
	\end{align*}
\end{definition}

Intuitively, we can view the elements of $\intmodq{q}$ on a circle with equal distances between them. Then the Lee weight of $a \in \intmodq{q}$ is the minimal number of arcs separating $a$ from the origin $0$. This yields the following symmetry property of the Lee weight,
\begin{align}\label{eq:symmetry_Lee}
    \LW(a) = \LW(q-a).
\end{align} 
Equation \eqref{eq:symmetry_Lee} implies that the Lee weight of any element in $\intmodq{q}$ can never exceed $\floor{q/2}$. Furthermore, we observe that the Lee weight of $a \in \intmodq{q}$ is always lower bounded by its Hamming weight, denoted by $\HW (a)$, which is equal to $1$ if $a$ is nonzero and equal to zero otherwise. Hence, similarly we have $\LW(a) \leq \HW(a)\floor{q/2}$. Equality between the two weights holds if and only $q \in \set{2, 3}$ for every choice of $a$. Hence, for a vector $\veca \in \Zqn$ we have
\begin{align*}
    \HW(\veca) \leq \LW(\veca) \leq n\cdot \floor{q/2}.
\end{align*}

Similar to the Hamming weight, the Lee weight induces a distance between two vectors.
\begin{definition}\label{def:leeDistance}
    Let $\veca$ and $\vecb$ be two vectors in $\Zqn$. The \textit{Lee distance} between $\veca$ and $\vecb$ is the Lee weight of their difference, i.e.
    \begin{align*}
        \LD (\veca, \vecb) := \LW(\veca-\vecb).
    \end{align*}
\end{definition}
It is easy to show that the Lee distance is a metric over the finite ring of integers $\intmodq{q}$.\\

Lemma \ref{lem:expectLW} shows the expected Lee weight of a randomly chosen element in $\intmodq{q}$.
\begin{lemma}[\cite{wyner1968upper}]\label{lem:expectLW}
    Let $A$ be a uniformly distributed random variable over $\intmodq{q}$. The expected Lee weight of $A$ is
    \begin{align}\label{equ:expectLW}
        \delta_q := \expect\left(\LW(A)\right) =
        \begin{cases}
            \left(q^2-1\right)/{4q} & \text{if } q \text{ is odd},\\
            q/4 & \text{if } q \text{ is even}.
        \end{cases}
    \end{align}
\end{lemma}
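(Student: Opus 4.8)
The plan is to compute the expectation $\expect(\LW(A)) = \frac{1}{q}\sum_{a=0}^{q-1}\LW(a)$ directly by summing the Lee weights of all residues, splitting into the even and odd cases for $q$.

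\textbf{Odd case.} Write $q = 2m+1$. By Definition~\ref{def:leeweight}, $\LW(a) = a$ for $a = 0, 1, \dots, m$ and $\LW(a) = q - a$ for $a = m+1, \dots, 2m$. By the symmetry in \eqref{eq:symmetry_Lee}, the second block $\{q-a : a = m+1,\dots,2m\}$ contributes the same values $\{1, \dots, m\}$ again. Hence
\begin{align*}
    \sum_{a=0}^{q-1}\LW(a) = \sum_{a=0}^m a + \sum_{a=1}^m a = 2\sum_{a=1}^m a = m(m+1).
\end{align*}
Substituting $m = (q-1)/2$ gives $m(m+1) = \frac{(q-1)(q+1)}{4} = \frac{q^2-1}{4}$, so dividing by $q$ yields $\delta_q = \frac{q^2-1}{4q}$.

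\textbf{Even case.} Write $q = 2m$. Then $\LW(a) = a$ for $a = 0, 1, \dots, m$ and $\LW(a) = q - a$ for $a = m+1, \dots, 2m-1$; note $\LW(m) = m = q/2 = \floor{q/2}$ is the unique element of maximal weight and is not doubled. The values $\{q-a : a = m+1,\dots,2m-1\}$ equal $\{1,\dots,m-1\}$, so
\begin{align*}
    \sum_{a=0}^{q-1}\LW(a) = \sum_{a=0}^m a + \sum_{a=1}^{m-1} a = \frac{m(m+1)}{2} + \frac{(m-1)m}{2} = m^2.
\end{align*}
With $m = q/2$ this is $q^2/4$, and dividing by $q$ gives $\delta_q = q/4$.

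The computation is entirely elementary — the only thing to be careful about is the bookkeeping of which residues get "counted twice" under the fold $a \mapsto q-a$: in the odd case the residues pair up perfectly (with $0$ fixed), whereas in the even case both $0$ and $m = q/2$ are fixed points of the involution, which is exactly what produces the slightly different closed form. There is no real obstacle here; the result also appears in \cite{wyner1968upper}.
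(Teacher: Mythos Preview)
Your proof is correct and follows exactly the same approach as the paper: write the expectation as $\frac{1}{q}\sum_{a=0}^{q-1}\LW(a)$, use the symmetry \eqref{eq:symmetry_Lee} to reduce the sum to (twice) a sum of consecutive integers in the odd/even cases, and apply the closed form for $\sum_{i=1}^m i$. The only difference is cosmetic --- you introduce $m=\floor{q/2}$ and carry out the arithmetic a bit more explicitly.
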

\begin{proof}
    As $A$ is chosen uniformly at random from $\intmodq{q}$, we have $\prob(A = i) = \frac{1}{q}$ for every $i \in \intmodq{q}$. Then,
    %The proof then follows by computing 
    $$\expect(\LW(A)) = \sum_{i = 0}^{q-1} \LW(i) \prob(A = i) = \frac{1}{q} \sum_{i = 0}^{q-1} \LW(i),$$
    where the summation over the Lee weights can be represented as a sum of integers. In fact, if $q$ is odd, then $\sum_{i = 0}^{q-1} \LW(i) = 2 \sum_{i = 1}^{(q-1)/2} i$, whereas for $q$ even we have $\sum_{i = 0}^{q-1} \LW(i) =q/2 +  2 \sum_{i = 1}^{q/2 - 1} i$. Applying the formula for the sum of the first $n$ integers, for $n \in \mathbb{N}$, yields the desired statement.
\end{proof}

Let us define now the $n$-dimensional Lee sphere, $\leesphere{t, q}$, (respectively the $n$-dimensional Lee ball, $ \leeball{t, q}$) over $\intmodq{q}$ centered at the origin of radius $t$ by
\begin{align*}
    \leesphere{t, q} &:= \set{\bfx \in \Zqn \st \LW(\bfx) = t }\\
    \leeball{t, q} &:= \set{\bfx \in \Zqn \st \LW(\bfx) \leq t }.
\end{align*}

\begin{lemma}\cite[Lemma 1]{bariffi2022properties}\label{lem:marginal}
    Assume that $\bfx \in (\intmodq{q})^n$ has been drawn uniformly at random among all vectors of Lee weight $t$.
    Let $X$ denote the random variable defining the realizations of an entry $x$ of $\bfx$. As $n$ grows large, for every $i \in \intmodq{q}$, the probability of $X$ taking the value $i$ is given by
    \begin{align}\label{equ:marginal}
        \prob(X = i) = \frac{1}{Z(\beta)} \exp{(-\beta \LW(i))}
    \end{align}
    where $\beta$ is the unique real solution to the weight constraint $t/n = \sum_{i = 0}^{q-1} \LW(i) \prob (X = i)$ and $Z(\beta)$ denotes the normalization constant.
\end{lemma}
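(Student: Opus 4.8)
The statement to prove is Lemma~\ref{lem:marginal}, the asymptotic marginal (Boltzmann) distribution of a single entry of a uniformly random fixed-Lee-weight vector.

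\medskip

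The plan is to set up the problem as a combinatorial counting argument combined with a saddle-point / maximum-entropy analysis. First I would introduce, for each residue $i \in \intmodq{q}$, the \emph{type} (empirical frequency) $n_i$ counting how many coordinates of $\bfx$ equal $i$; a vector of Lee weight $t$ corresponds precisely to a composition $(n_0, \dots, n_{q-1})$ with $\sum_i n_i = n$ and $\sum_i n_i \LW(i) = t$. The number of vectors with a given type is the multinomial coefficient $\binom{n}{n_0, \dots, n_{q-1}}$, so $\card{\leesphere{t,q}} = \sum_{(n_0,\dots,n_{q-1})} \binom{n}{n_0, \dots, n_{q-1}}$ where the sum ranges over all such admissible types. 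The probability that coordinate $x$ takes value $i$ is then $\prob(X = i) = \expect[n_i]/n$, which equals the type-averaged value of $n_i/n$ weighted by the multinomial count.

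\medskip

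Next I would pass to the exponential scale. Writing $n_i = n \theta_i$, Stirling's approximation gives $\frac{1}{n}\log\binom{n}{n_0,\dots,n_{q-1}} \to \entropy(\theta)$, the entropy of the distribution $\theta = (\theta_0,\dots,\theta_{q-1})$, uniformly over types. Hence the sum defining $\card{\leesphere{t,q}}$ is dominated, to first exponential order, by the type $\theta^\star$ maximizing $\entropy(\theta)$ subject to the two linear constraints $\sum_i \theta_i = 1$ and $\sum_i \theta_i \LW(i) = t/n$. By a standard concentration argument (the maximizer is unique because $\entropy$ is strictly concave and the constraint set is a simplex slice, so the second-order term in the Laplace expansion is nondegenerate), the type of a uniformly random weight-$t$ vector concentrates around $\theta^\star$ as $n \to \infty$, and therefore $\prob(X = i) = \expect[n_i/n] \to \theta_i^\star$. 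Finally, solving the constrained maximization via Lagrange multipliers: $\partial_{\theta_i}\big(\entropy(\theta) - \lambda_0 \sum_i \theta_i - \beta \sum_i \theta_i \LW(i)\big) = 0$ yields $-\log\theta_i - 1 - \lambda_0 - \beta \LW(i) = 0$, i.e. $\theta_i^\star \propto \exp(-\beta \LW(i))$, with the normalization absorbed into $Z(\beta) = \sum_i \exp(-\beta\LW(i))$ and $\beta$ fixed by the weight constraint $t/n = \sum_i \LW(i)\,\theta_i^\star$. Existence and uniqueness of $\beta$ follow because the map $\beta \mapsto \sum_i \LW(i)\exp(-\beta\LW(i))/Z(\beta)$ is strictly decreasing (its derivative is $-\variance$ of $\LW$ under the Boltzmann law) and sweeps the open interval of attainable average weights as $\beta$ ranges over $\RR$.

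\medskip

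The main obstacle is making the concentration step rigorous rather than merely heuristic: going from "the exponential growth rate is maximized at $\theta^\star$" to "$\expect[n_i/n] \to \theta_i^\star$" requires controlling the polynomial-order (pre-exponential) factors in the Laplace/saddle-point expansion and showing the contribution of types bounded away from $\theta^\star$ is exponentially negligible, uniformly. A clean way to do this is a local central limit theorem for the sum of $n$ i.i.d.\ $\intmodq{q}$-valued variables with the Boltzmann law: conditioning such an i.i.d.\ sequence on having total Lee weight exactly $t$ recovers the uniform distribution on $\leesphere{t,q}$, and the LCLT gives that the conditional marginal converges to the (unconditional) Boltzmann marginal at rate $O(1/n)$. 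Since the excerpt cites \cite{bariffi2022properties} for this lemma, I would present the i.i.d.-tilting-plus-LCLT argument as the cleanest route and relegate the detailed estimates to that reference, emphasizing the Lagrangian computation and the monotonicity argument for $\beta$ as the self-contained core.
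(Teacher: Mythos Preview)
The paper does not actually prove this lemma: it is stated with a citation to \cite[Lemma~1]{bariffi2022properties} and no proof is given here, so there is no in-paper argument to compare against. Your proposal is a correct and standard route---the method of types combined with a Lagrangian/maximum-entropy characterization, followed by either a Laplace-type concentration argument or (equivalently) the exponential-tilting plus conditional-limit-theorem approach---and it is exactly the kind of argument one expects to underlie the cited result. The monotonicity argument for uniqueness of $\beta$ is also right. If you write this up, the only place that needs care is the step you already flagged: upgrading ``the dominant type is $\theta^\star$'' to ``$\expect[n_i/n]\to\theta_i^\star$''; invoking the conditional limit theorem (e.g., \cite[Theorem~11.6.2]{CoverThomasBook}) or an LCLT for the tilted i.i.d.\ model is the clean way to close that gap.
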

In the following, let $\delta := t/n$ be the normalized Lee weight. Note that if $\delta = \delta_q$, then $X$ is distributed uniformly over $\intmodq{q}$ and hence $\beta = 0$. Moreover, $\beta > 0$ if and only if $\delta < \delta_q$. The distribution in \eqref{equ:marginal} is closely related to the Boltzmann distribution \cite{CoverThomasBook, boltzmann1868studien}. The Boltzmann distribution gives the probability that a system will be in a certain state depending on that states' energy and temperature. In statistical mechanics the distribution is used for systems of fixed compositions all being in a thermal equilibrium. Additionally, the distribution maximizes the entropy subject to a mean energy state. In our case the Lee weight may be interpreted as the energy value of a state $e \in \intmodq{q}$. Hence, we will refer to the distribution in \eqref{equ:marginal} as \textit{Boltzmann distribution} and we will denote it by $B_\delta$.

Finally, we introduce the normalized logarithmic surface (respectively, volume) spectra
\begin{align*}
    \sigma_{\delta n}^{(n)} &:= \frac{1}{n}\log_2 \left(\card{\leesphere{\delta n, q}}\right) \quad \text{and} \\
    \nu_{\delta n}^{(n)} &:= \frac{1}{n}\log_2 \left(\card{\leeball{\delta n, q}}\right)
\end{align*}
while their asymptotic counterparts are denoted by
\begin{align*}
    \sigma_{\delta} &:= \lim_{n \to \infty} \frac{1}{n}\log_2 \left(\card{\leesphere{\delta n, q}}\right) \quad \text{and} \\
    \nu_{\delta} &:= \lim_{n \to \infty} \frac{1}{n}\log_2 \left(\card{\leeball{\delta n, q}}\right).
\end{align*}

%%%%%%%%%%%%%%%%%%%%%%%%%%%%%%%%%%%%%%%%%%%%%%%%%%%%%%%%%%%%%%%%%
%%%%%%%%%%%%%%%%%%%%%%%%%%%%%%%%%%%%%%%%%%%%%%%%%%%%%%%%%%%%%%%%%
\subsection{Information-Theoretic Definitions}

The entropy of $X$ is then defined to be
\begin{align}\label{equ:entropy}
    H(X) := H(P_X) = -\sum_{x \in \mathcal{X}}P_X(x) \log_2 P_X(x),
\end{align}
where by convention for $P_X(x) = 0$ we set $P_X(x)\log_2P_X(x) = 0$. We will make use of the following, well-known result.
\begin{theorem}\cite[Theorem 2.5.1]{CoverThomasBook}\label{thm:chainRule}
    Let $X_1, \dots, X_n$ be a sequence of random variables drawn according to a probability distribution $P(x_1, \dots , x_n)$. Then the entropy of the sequence satisfies
    \begin{align}
        H(X_1, \dots , X_n) 
        &= \sum_{i=1}^n H(X_i \st X_{i-1}, \dots, X_1)\\
        &\leq \sum_{i=1}^n H(X_i)
    \end{align}
    where equality holds if and only if the $X_i$ are independent.
\end{theorem}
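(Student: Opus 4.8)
The plan is to reduce the statement to two elementary facts: the two-variable chain rule $H(U,V) = H(U) + H(V \st U)$, and the fact that conditioning does not increase entropy, $H(V \st U) \le H(V)$. First I would establish the equality. Since $P(x_1,\dots,x_n) = \prod_{i=1}^n P(x_i \st x_1,\dots,x_{i-1})$, taking $-\log_2$ of both sides turns the joint log-probability into a sum, and taking expectations with respect to $P$ — recognizing each summand as the conditional entropy $H(X_i \st X_{i-1},\dots,X_1)$, with the convention $0\log_2 0 = 0$ — yields $H(X_1,\dots,X_n) = \sum_{i=1}^n H(X_i \st X_{i-1},\dots,X_1)$. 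Equivalently one may induct on $n$ via $H(X_1,\dots,X_n) = H(X_1,\dots,X_{n-1}) + H(X_n \st X_1,\dots,X_{n-1})$, which is the two-variable chain rule applied with $U = (X_1,\dots,X_{n-1})$ and $V = X_n$; the two-variable rule itself is just $\log_2 P(u,v) = \log_2 P(u) + \log_2 P(v \st u)$ under the expectation.

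For the inequality it suffices to show $H(X_i \st X_{i-1},\dots,X_1) \le H(X_i)$ termwise and then sum over $i$. This is the nonnegativity of the mutual information $I(X_i ; X_1,\dots,X_{i-1}) = H(X_i) - H(X_i \st X_1,\dots,X_{i-1})$, which I would prove by writing it, with the shorthand $X^{i-1} := (X_1,\dots,X_{i-1})$, as a Kullback--Leibler divergence $\KLD{P_{X_i, X^{i-1}}}{P_{X_i} P_{X^{i-1}}}$ and invoking Jensen's inequality (equivalently the log-sum inequality) for the convex function $-\log_2$ to conclude this divergence is $\ge 0$. Summing over $i = 1,\dots,n$ and combining with the equality already proved gives $H(X_1,\dots,X_n) \le \sum_{i=1}^n H(X_i)$.

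Finally, for the equality characterization: equality in the sum forces $H(X_i \st X_{i-1},\dots,X_1) = H(X_i)$, i.e. $I(X_i; X^{i-1}) = 0$, for every $i$. The equality case of Jensen's inequality for the strictly convex $-\log_2$ then gives $P_{X_i, X^{i-1}} = P_{X_i} P_{X^{i-1}}$, i.e. $X_i$ is independent of $(X_1,\dots,X_{i-1})$, for each $i \ge 2$. A short induction upgrades these $n-1$ conditional-independence statements to full mutual independence: assuming $P(x_1,\dots,x_{i-1}) = \prod_{j<i} P(x_j)$, multiply by $P(x_i \st x_1,\dots,x_{i-1}) = P(x_i)$ to get $P(x_1,\dots,x_i) = \prod_{j \le i} P(x_j)$. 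Conversely, if the $X_i$ are independent, every conditional entropy collapses to the unconditional one and equality is immediate. The only delicate point in the whole argument is the bookkeeping around zero-probability events in the divergence/Jensen step — handled by restricting the sums to the support and using the stated $0\log_2 0$ convention — together with assembling the per-index independences into genuine mutual independence rather than merely asserting it; everything else is a direct computation.
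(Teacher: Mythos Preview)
Your proof is correct and is essentially the standard argument from \cite[Theorem~2.5.1]{CoverThomasBook}; note that the paper does not give its own proof of this statement but simply cites that reference. Nothing to add.
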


Given an empirical distribution $\bfp :=(p_1, p_2, \ldots, p_{\card{\mathcal{X}}})$ we have that  that for any positive integer $n$ (see \cite[Theorem 11.1.3]{CoverThomasBook})
\begin{align}\label{equ:bounds_multinomial} 
    \frac{1}{(n+1)^{\card{\mathcal{X}}}} 2^{nH(\bfp)} \leq \binom{n}{n\bfp} \leq  2^{nH(\bfp)}.
\end{align}
Consider two probability distributions $P_X(x)$ and $\widetilde{P}_X(x)$ over a shared alphabet $\mathcal{X}$. Their Kullback-Leibler divergence is denoted as
    \begin{align}\label{equ:def:relative_entropy}
    	\KLD{P_X}{\widetilde{P}_X}
    	    := 
    	\sum_{x\in \mathcal{X}}P_X(x) \log_2\left( \frac{P_X(x)}{\widetilde{P}_X(x)} \right).
    \end{align} 

An alternative measure of the similarity of two probability distributions is the total variation distance. We define the distance only for discrete probability distributions, since this paper only deals with this case. We follow the description of \cite[Proposition 5.2]{wilmer2009markov} and define the total variation distance between two distributions $P_X$ and $\widetilde{P}_X$ over $\mathcal{X}$ as
\begin{align*}
    \mathsf{TV}(P_X, \widetilde{P}_X) := \frac{1}{2} \sum_{x \in \mathcal{X}} \card{ P_X(x) - \widetilde{P}_X(x) }.
\end{align*}

%%%%%%%%%%%%%%%%%%%%%%%%%%%%%%%%%%%%%%%%%%%%%%%%%%%%%%%%%%%%%%%%%
%%%%%%%%%%%%%%%%%%%%%%%%%%%%%%%%%%%%%%%%%%%%%%%%%%%%%%%%%%%%%%%%%
\subsection{Low-Density Parity-Check Codes over Finite Integer Rings}
We will now introduce linear codes over integer residue rings $\intmodq{q}$ and we will focus in particular on \ac{LDPC} codes over $\intmodq{q}$.
A \textit{ring-linear code} $\code \subset (\intmodq{q})^n$ is a $\intmodq{q}$-submodule of $(\intmodq{q})^n$. Similar to codes over finite fields, codes over rings have a length given by $n$ and a $\intmodq{q}$-dimension given by $k := \log_q(\card{\code})$. We then refer to $\code$ as $[n, k]$ linear code over $\intmodq{q}$. A code $\code$ can be represented by the kernel of a parity-check matrix $\bfH \in (\intmodq{q})^{m\times n}$ with $m \geq n-k$, i.e.,
\begin{align*}
    \code = \set{ \bfx \in (\intmodq{q})^n \st \bfH \bfx^{\top} = \bfzero^{\top} }. 
\end{align*}

We denote by $M := \card{\code}$ the number of codewords in $\code$. Recall that the code rate of an $[n, k]$ linear code $\code$ of size $M$ over $\intmodq{q}$ is given by
\begin{align}
    R_2 &= \frac{\log_2 M}{n} \qquad \text{bits per channel use}\\
\intertext{or}
    R &= \frac{\log_q M}{n} \qquad \text{symbols per channel use}.
\end{align}
depending on the choice of logarithm's base.

Different to codes over finite fields, a ring-linear code $\code \subseteq (\intmodq{q})^n$ does not always admit a basis. If it admits a basis, we call $\code$ a \textit{free code}. In that case, a parity-check matrix $\mathbf{H}$ is of size $(n-k) \times n$.
\ac{LDPC} codes \cite{studio3:GallagerBook} are binary linear error-correcting codes characterized by a sparse parity-check matrix. In \cite{Fuja05:Ring}, the authors analyzed \ac{LDPC} codes over $\intmodq{q}$ defining the nonzero entries of the parity-check matrix over the units $\units{q}$. Restricting to only unit elements as nonzero entries of a parity-check matrix immediately implies that the code is free. In this way, when randomly drawing the nonzero entries of a parity-check matrix among the unit elements, we assure that the resulting LDPC code is always free. This is not always true when drawing the nonzero entries also among the zero divisors. Additionally, allowing zero divisors in a parity-check matrix can yield to the situation where there is a column whose nonzero entries are all zero divisors. Over $\intmodq{p^s}$, for a prime number $p$, the resulting code would have a minimum Lee distance of at most $\floor{p^s/2}$ (see \cite{Fuja05:Ring} for more details). In the following, let $\code$ always denote an $[n, k]$ linear block code over $\intmodq{q}$ and let $\bfH \in (\intmodq{q})^{m \times n}$ be a parity-check matrix of $\code$, where $m \geq n-k$ and where $m = n-k$ if and only if the code $\code$ is free. A parity-check matrix $\bfH$ can be described by a bipartite graph $\mathcal{G} = (\mathcal{V}, \mathcal{E})$ consisting of a set of vertices $\mathcal{V}$ and a set of edges $\mathcal{E}$ connecting the vertices. The set of vertices consists of two disjoint sets: the set of \acp{VN} $\set{\vn_1, \dots, \vn_{n}}$, representing the columns of $\bfH$, and the set of \acp{CN} $\set{\cn_1, \dots , \cn_{m}}$, representing the rows of $\bfH$. A variable node $\vn_j$ is connected to a check node $\cn_i$ by an edge if and only if the corresponding entry $h_{ij}$ in the parity-check matrix is nonzero. The edge carries as label the entry $h_{ij}$.
The degree $d_\vn$ of a variable node $\vn$ is the number of edges connected to $\vn$. The neighbors $\mathcal{N}(\vn)$ of a variable node $\vn$ is the set of check nodes connected to $\vn$. Similarly, we define the degree $d_\cn$ and the neighbors $\mathcal{N}(\cn)$ of a check node $\cn$.

We consider regular nonbinary \ac{LDPC} codes which have a constant variable node degree $d_\vn = d_v$ and a constant check node degree $d_\cn = d_c$.  We denote by $\ens_{d_v,d_c}^n$ the unstructured regular \ac{LDPC} code ensemble of length $n$, i.e. the set of all \ac{LDPC} codes defined by an $(m \times n)$ parity-check matrix, whose associated bipartite graph has constant variable node degree $d_v$ and constant check node degree $d_c$. This ensemble has then the designed rate $R_0 = 1 - m/n$. As proposed in \cite{Fuja05:Ring}, when sampling an \ac{LDPC} code from $\ens_{d_v, d_c}^n$, we assume that the nonzero entries of a parity-check matrix are drawn independently and uniformly at random from the set of units $\units{q}$.

%%%%%%%%%%%%%%%%%%%%%%%%%%%%%%%%%%%%%%%%%%%%%%%%%%%%%%%%%%%%%%%%%
%%%%%%%%%%%%%%%%%%%%%%%%%%%%%%%%%%%%%%%%%%%%%%%%%%%%%%%%%%%%%%%%%
\subsection{Message Passing Decoders}\label{subsec:messagepassing}
We briefly recall two message passing algorithms for nonbinary \ac{LDPC} codes. The first algorithm is the well-known (nonbinary) \ac{BP} algorithm. The second algorithm is a message passing algorithm where the messages exchanged between variable and check nodes are hard symbol estimates. The latter algorithm, dubbed \ac{SMP}, generalizes the Gallager-B algorithm \cite{studio3:GallagerBook} and \ac{BMP} algorithm \cite{lechner2011analysis} to nonbinary alphabets.

Let us fix some notation used in the description of the two decoders. We denote by $\msgVec{\vn}{\cn}$ the message sent from variable node $\vn$ to a neighboring check node $\cn$ and vice versa $\msgVec{\cn}{\vn}$ is the message sent from $\cn$ to $\vn$. Furthermore, we will denote the likelihood at the variable node $\vn$ input (associated with the corresponding channel observation) by
\begin{align*}
    \mchVec:=\left(P_{Y|X}(y\mid 0),\dots,P_{Y|X}(y\mid q-1)\right),
\end{align*}
i.e., this is the vector of probabilities of the channel output $y$, conditioned on the $q$ possible channel input values. For every connected variable node $\vn$ and check node $\cn$ we denote by $h_{\cn\vn}$ the corresponding entry in the parity-check matrix $\bfH$. Note, since the nonzero entries of $\bfH$ were chosen to be units modulo $q$, the inverse $h_{\cn\vn}^{-1}$ is guaranteed to exist.

\subsubsection{Belief Propagation Decoding}\hfill\\
We consider now the \ac{BP} algorithm for nonbinary \ac{LDPC} codes over finite rings. The decoder consists of four main steps that are outlined below, where Step \ref{itm:CN-to-VN-NBP} and \ref{itm:VN-to-CN-NBP} are repeated at most $\ell_{\max}$ times. For every connected variable node $\vn$ and check node $\cn$ we let $\permMat_{\cn \vn}$ be the $(q \times q)$ permutation matrix induced by $h_{\cn\vn}$.\footnote{In fact, consider two random elements $X,X' \in \intmodq{q}$ with probability distributions $P(X)$ and $Q(X')$, respectively. Denote by $\bm{m} = (P(0), P(1), \ldots, P(q-1))$ and by $\bm{m}' = (Q(0), Q(1), \ldots, Q(q-1))$. If $X'=h X$ for $h \in \units{q}$, then since $Q(X') = P(hX)$ the distribution $\bm{m}'$ is obtained by permuting $\bm{m}$ with a permutation that is completely determined by the multiplication coefficient $h$.}
\begin{enumerate}
    \item \textbf{Initialization.}
        Each variable node $\vn$ receives the channel observation in the form of $\mchVec$. Then, the variable node $\vn$ sends to each $\cn \in \mathcal{N}(\vn)$ the permuted channel observation, i.e.,
        \begin{align*}
            \msgVec{\vn}{\cn} = \mchVec \cdot \permMat_{\cn \vn} .
        \end{align*}
        
    \item\label{itm:CN-to-VN-NBP} \textbf{\ac{CN}-to-\ac{VN} step.} 
        Consider a given check node $\cn$ and a neighboring variable node $\vn \in \mathcal{N}(\cn)$. For the message $\msgVec{\cn}{\vn}$, the check node computes the circular convolution $\circConv$ of the incoming messages $\msgVec{\vn'}{\cn}$ from all neighboring variable nodes $\vn'\in \mathcal{N}(\cn) \setminus \set{\vn}$ as
        \begin{align*}
            \vecu=\circConv_{\vn'\in\neigh{\cn}\setminus\{\vn\}}\msgVec{\vn'}{\cn}
        \end{align*}
        and sends to every neighboring variable node $\vn \in \mathcal{N}(\cn)$ a permuted version of $\vecu$ according to the permutation $\permMat_{\cn \vn}^{-1}$, i.e., the \ac{CN}-to-\ac{VN} message is
        \begin{equation}
            \msgVec{\cn}{\vn}=\vecu\cdot\permMat_{\cn \vn}^{-1}.
        \end{equation}

    \item\label{itm:VN-to-CN-NBP} \textbf{\ac{VN}-to-\ac{CN} step.} 
        The variable node $\vn$ computes the Schur product $\odot$ of all incoming messages but the one from check node $\cn$ and normalizes the result by a constant $K$ (to obtain a proper probability vector)
        \begin{equation}
            \vecv = K \hadProd_{\cn'\in\neigh{\vn}\setminus\{\cn\}}\msgVec{\cn'}{\vn}.
        \end{equation}
        Finally, it applies the permutation matrix $\permMat_{\cn \vn}$ to the vector $\vecv$ and sends the following message to the check node $\cn$
        \begin{equation}
            \msgVec{\vn}{\cn}=\vecv\cdot\permMat_{\cn \vn}. 
        \end{equation}

    \item \textbf{Final decision.}
        The final decision happens at the variable node side. After at most $\ell_{\max}$ iterations of steps \ref{itm:CN-to-VN-NBP} and \ref{itm:VN-to-CN-NBP} each variable node computes the Schur product of all incoming messages, yielding the \ac{APP} estimate
        \begin{align*}
            \mchVec^{\scriptscriptstyle \textsf{APP}}=\hadProd_{\cn\in\neigh{\vn}}\msgVec{\cn}{\vn}.
        \end{align*}
        The decision $\hat{x}$ is the index of the maximal entry of $\mchVec^{\scriptscriptstyle \textsf{APP}}$
        \begin{equation}
            \hat{x}=\underset{i\in\intmodq{q}}{\arg\max}\, m_{\vn,i}^{\scriptscriptstyle \textsf{APP}}.
        \end{equation}
\end{enumerate}

\subsubsection{Symbol Message Passing Decoding}\hfill\\
 The \ac{SMP} algorithm is a message-passing algorithm for nonbinary \ac{LDPC} codes, where each message exchanged by a variable node/check node pair is a symbol, i.e., a hard estimate of the codeword symbol associated with the variable node. Following the principle outlined in \cite{lechner2011analysis}, the messages sent by check nodes to variable nodes are modeled as observations at the output a $q$-ary input, $q$-ary output \ac{DMC}. By doing so, the messages at the input of each variable node can be combined by multiplying the respective likelihoods (or by summing the respective log-likelihoods), providing a simple update rule at the variable nodes.

Assume we have a \ac{DMC} over $\intmodq{q}$ with output $w$ and channel law $P_{W \st X}(w \st x)$. We define the log-likelihood of $w$ given $x$ by $L_x(w) := \log \left(P_{W \st X}(w \st x)\right)$ and the log-likelihood vector by
\begin{align*}
    \bfL(w):=(L_0(w),L_1(w),\dots,L_{q-1}(w)).
\end{align*}
With a slight abuse of notation, we will use the $\bfL(\cdot)$ for different channels, where the channel law to be applied is made clear by the argument.

\begin{enumerate}
    \item \textbf{Initialization.}
        The decoder is initialized by forwarding the channel observation $y$ to every variable node $\vn$. Then, the variable node $\vn$ sends to each $\cn \in \mathcal{N}(\vn)$
        \begin{align*}
            \msg{\vn}{\cn} = y  .
        \end{align*}
        
    \item \textbf{\ac{CN}-to-\ac{VN} step.}
     Consider a given check node $\cn$ and a neighboring variable node $\vn \in \mathcal{N}(\cn)$. For the message $\msgVec{\cn}{\vn}$, the check node computes
        \begin{equation}
            \msg{\cn}{\vn} =h_{\cn,\vn}^{-1}\sum_{\vn'\in\neigh{\cn}\setminus\{\vn\}} h_{\cn,\vn'}\msg{\vn'}{\cn}.
        \end{equation}
        
    \item \textbf{\ac{VN}-to-\ac{CN} step.}
        At each variable node $\vn$, incoming messages are treated as observations of the codeword symbol at the output of an ``extrinsic channel'' (\cite{lechner2011analysis, Ashikhmin:AreaTheorem}) modelled as a \acf{qSC} with error probability $\xi \in [0, 1]$, i.e., with conditional probability 
        \begin{align}
             P_{M \st X}(m \st x)=
            \begin{cases}
                1-\xi & \text{if } m = x
                \\ 
                \xi/(q-1) & \text{otherwise}
            \end{cases}.\label{eq:QSCapproxdec}
        \end{align}
        For the calculation of the message to be sent of each check node $\cn \in \mathcal{N}(\vn)$, \eqref{eq:QSCapproxdec} is used to compute the log-likelihood vector
        \begin{equation}\label{eq:def_E_vec}
            \bfE=\bfL(y)+\sum_{\cn'\in\neigh{\vn}\setminus\{\cn\}}\bfL\left(\msg{\cn'}{\vn}\right).
        \end{equation}
        For each $\cn \in \mathcal{N}(\vn)$, the message sent by the variable node $\vn$ is then
        \begin{equation}
            \msg{\vn}{\cn}=\underset{i\in\intmodq{q}}{\arg\max} \, E_i.
        \end{equation}
    
    \item \textbf{Final decision.}
        After at most $\ell_{\max}$ iterations for each variable node $\vn$ we compute 
        \begin{equation}\label{eq:final}
            \bfL^{\scriptscriptstyle \textsf{FIN}}=\bfL(y)+\sum_{\cn\in\neigh{\vn}}\bfL\left(\msg{\cn}{\vn}\right).
        \end{equation}
        Then the final decision, $\hat{x}$, is the index of the maximal entry of $\bfL^{\scriptscriptstyle \textsf{FIN}}$, i.e.,
        \begin{equation}
            \hat{x}=\underset{i\in\intmodq{q}}{\arg\max}\, L_i^{\scriptscriptstyle \textsf{FIN}}.
        \end{equation}
\end{enumerate}

Note that the extrinsic channel of \eqref{eq:QSCapproxdec} is modelled as a \ac{qSC} with error probability $\xi$. As it will be shown in Section \ref{subsec:DE}, this choice yields an accurate description of the extrinsic channel conditional probability, despite of its simplicity. The extrinsic channel parameter $\xi$ is iteration-dependent. Its evaluation can be performed via Monte Carlo simulations, or by using estimates that follow from \ac{DE} analysis \cite{lechner2011analysis,Lazaro19:SMP}.

%%%%%%%%%%%%%%%%%%%%%%%%%%%%%%%%%%%%%%%%%%%%%%%%%%%%%%%%%%%%%

%%%%%%%%%%%%%%%%%%%%%%%%%%%%%%%%%%%%%%%%%%%%%%%%%%%%%%%%%%%%%
\subsection{Lee Channels}\label{sec:channel}

We consider classical additive channel models over the $q$-ary alphabet $\intmodq{q}$.
In the \constLC, a random error vector $\bfe \in (\intmodq{q})^n$ of fixed Lee weight $\LW(\bfe) = t$ is added to the channel input $\bfx \in \code$, i.e., the channel output is
\begin{align*}
    \bfy = \bfx + \bfe.
\end{align*}
More specifically, the error vector $\bfe$ is drawn uniformly at random over the Lee sphere $\leesphere{t, q}$ of radius $t = \delta n$. 
Hence, the channel transition probability for the {\constLC} is 
\begin{align}\label{eq:transProb_constant}
    P_{\mathbf{Y} \st \mathbf{X}}(\bfy \st \bfx)
        =
    \begin{cases}
		\card{\leesphere{\delta n, q}}^{-1} & \text{if } \LD(\bfy,\bfx) = \delta n,\\
		0					& \text{otherwise}. 
	\end{cases}
\end{align}
As a consequence of Lemma \ref{lem:marginal}, the marginal distribution of the error terms follows, for large $n$, the Boltzmann distribution
\begin{equation}
    P_{E}(e) = \frac{1}{Z(\beta)}\exp\left(-\beta\LW(e)\right) \label{eq:Lee_channel}
\end{equation}
where $\beta$ follows by enforcing $\expect\left(\LW(E)\right)=\delta$. In the course of this paper we will denote the Boltzmann distribution from Equation \eqref{eq:Lee_channel} by $B_{\delta}$.

The {\memLC} is a \ac{DMC} defined by 
\begin{align}\label{equ:DMC}
    y = x + e 
\end{align}
where $y,x,e \in \intmodq{q}$, and where the probability distribution of $e$  matches the marginal distribution of the {\constLC} of \eqref{eq:Lee_channel}.
We denote the expected normalized Lee weight of $\bfe \in \intmodq{q}$ by 
$$\delta = \expect\left(\frac{1}{n} \LW(\bfE)\right).$$
As the channel is memoryless, we have again $\delta = \expect\left(\LW(E)\right)$.

%%%%%%%%%%%%%%%%%%%%%%%%%%%%%%%%%%%%%%%%%%%%%%%%%%%%%%%%%%%%%

%%%%%%%%%%%%%%%%%%%%%%%%%%%%%%%%%%%%%%%%%%%%%%%%%%%%%%%%%%%%%
\section{Finite-Length Bounds for Lee Channels}\label{sec:boundsChannels}
In this section we are going to derive bounds on the error probability achievable by an $[n, k]$ code over both the {\constLC} and the {\memLC} defined in Section \ref{sec:channel}.\footnote{Several of bounds minimum distance achievable by $[n,k]$ codes can be found in \cite{astola1984asymptotic, byrne2023bounds, loeliger1994upper}}. In the first case we will see an achievability bound in terms of a random coding union bound. For the {\memLC} we will derive an upper bound again in terms of a \ac{RCU} bound as well as a converse bound, meaning a lower bound, achievable by any $[n, k]$ code in terms of a sphere-packing bound. 

For both channel models we distinguish between \ac{ML} decoding and \ac{MD} decoding, that is, given a received word $y \in \intmodq{q}$, we consider the \ac{ML} decoding rule
\begin{align}\label{eq:ML}
    \hat{\bfx}_{\mathrm{ML}} = \argmax_{\bfx\in \code} P_{\mathbf{Y} \st \mathbf{X}}(\bfy \st \bfx) 
\end{align}
and the \ac{MD} decoding rule
\begin{align}\label{eq:MD}
	\hat{\bfx}_{\mathrm{MD}} = \argmin_{\bfx \in \code}\LD(\bfy, \bfx).
\end{align}
Note that the two decoding rules coincide over the {\memLC} for $\delta \leq \delta_q$. In the \constLC, the \ac{ML} decoder gives a list of all codewords which are at distance $\delta n$ from the received word $\bfy$ and it outputs one of the codewords in this list randomly. Hence, the two decoding rules for the {\constLC} coincide whenever $\delta n$ is within the decoding radius of the code $\code$.

%%%%%%%%%%%%%%%%%%%%%%%%%%%%%%%%%%%%%%%%%%%%%%%%%%%%%%%%%%%%%%%%%
%%%%%%%%%%%%%%%%%%%%%%%%%%%%%%%%%%%%%%%%%%%%%%%%%%%%%%%%%%%%%%%%%
\subsection{Bounds on the Lee Spheres and Lee Balls}\label{subsec:growthrate}
Before proceeding with the derivation of the error probability bounds, we  first derive upper bounds on the size of a Lee sphere and a Lee ball, respectively. 

We denote by $H_\delta := H(B_\delta)$ the entropy of the Boltzmann distribution with parameter $\delta$ and we introduce the notation
\begin{align*}
    H_\delta^+ := 
    \begin{cases}
        H_\delta & 0 \leq \delta \leq \delta_q \\
        \log_2(q) & \delta_q < \delta < r.
    \end{cases}
\end{align*}

\begin{lemma}[Growth rate of the surface spectrum]\label{lemma:bound_mu_H}
    For any positive integer $\delta n$ the surface spectrum is upper bounded by
    \begin{align}
        \sigma_{\delta n}^{(n)} \leq H_\delta .
    \end{align}
    In particular, as $n$ grows large it holds that $\sigma_\delta =  H_\delta$.
\end{lemma}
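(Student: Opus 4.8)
The plan is to prove the finite-length inequality by a change-of-measure (tilting) argument against the Boltzmann product measure, and then to obtain the asymptotic equality by a matching type-counting lower bound.

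The key structural observation is that the product measure $B_\delta^{\otimes n}$ assigns the \emph{same} probability to every vector of Lee weight exactly $\delta n$: for $\bfx \in \leesphere{\delta n, q}$ one has $B_\delta^{\otimes n}(\bfx) = \prod_{j=1}^n \tfrac{1}{Z(\beta)}\exp(-\beta \LW(x_j)) = Z(\beta)^{-n}\exp(-\beta\, \delta n)$, which is independent of $\bfx$. Since $B_\delta^{\otimes n}$ is a probability distribution, summing this constant over $\leesphere{\delta n, q} \subseteq \Zqn$ gives $\card{\leesphere{\delta n, q}} \cdot Z(\beta)^{-n}\exp(-\beta \delta n) \le 1$, i.e. $\card{\leesphere{\delta n, q}} \le Z(\beta)^n \exp(\beta \delta n)$. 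Taking $\tfrac{1}{n}\log_2$ yields $\sigma_{\delta n}^{(n)} \le \log_2 Z(\beta) + \beta\delta\log_2 e$, valid for every $n$ with $\delta n \in \NN$. The final step of this part is to recognize the right-hand side as $H_\delta$: expanding $H(B_\delta) = -\sum_i B_\delta(i)\log_2 B_\delta(i)$ with $-\log_2 B_\delta(i) = \log_2 Z(\beta) + \beta \LW(i)\log_2 e$ and using the defining constraint $\sum_i \LW(i)\, B_\delta(i) = \delta$ from Lemma \ref{lem:marginal} gives $H_\delta = \log_2 Z(\beta) + \beta\delta\log_2 e$. Hence $\sigma_{\delta n}^{(n)} \le H_\delta$. (The argument never uses the sign of $\beta$, so it applies whenever $\beta$ is the real solution of the weight constraint.)

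For the asymptotic statement it remains to match this with a lower bound. Counting vectors by composition, $\card{\leesphere{\delta n, q}} = \sum_{\bfp} \binom{n}{n\bfp}$, where the sum runs over empirical distributions $\bfp$ on $\intmodq{q}$ with $n p_i \in \NN$ for all $i$ and $\sum_i \LW(i) p_i = \delta$, so keeping a single term already suffices. I would pick, for each $n$ in the subsequence where $\delta n \in \NN$, a type $\bfp^{(n)}$ satisfying the weight constraint exactly with $\bfp^{(n)} \to B_\delta$; this is possible because the weight constraint cuts out a hyperplane through $B_\delta$ inside the probability simplex and the lattice points $\tfrac1n(n_0,\dots,n_{q-1})$ become dense on it. Applying the lower bound of \eqref{equ:bounds_multinomial}, $\binom{n}{n\bfp^{(n)}} \ge (n+1)^{-q}\, 2^{nH(\bfp^{(n)})}$, gives $\sigma_{\delta n}^{(n)} \ge H(\bfp^{(n)}) - \tfrac{q}{n}\log_2(n+1)$, and by continuity of the entropy functional $H(\bfp^{(n)}) \to H(B_\delta) = H_\delta$. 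Together with the upper bound this gives $\sigma_\delta = H_\delta$.

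The only delicate point is the construction of the types $\bfp^{(n)}$ that realize the Lee-weight constraint exactly while converging to $B_\delta$; everything else is routine bookkeeping. An alternative route for the upper bound, avoiding the tilting argument at the cost of invoking the maximum-entropy characterization of $B_\delta$, is to bound each multinomial in the composition sum by $2^{nH(\bfp)} \le 2^{nH_\delta}$ via \eqref{equ:bounds_multinomial} and the fact that $B_\delta$ maximizes entropy among distributions of mean Lee weight $\delta$, then absorb the polynomially-many types into $\tfrac1n\log_2$; this yields $\sigma_{\delta n}^{(n)} \le H_\delta + O\!\left(\tfrac{\log n}{n}\right)$, which is weaker in finite length but still gives the asymptotic equality.
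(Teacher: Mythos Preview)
Your proof is correct, but it takes a different route from the paper for the finite-length upper bound. The paper argues information-theoretically: if $\bfX$ is uniform on $\leesphere{\delta n, q}$ then $\sigma_{\delta n}^{(n)} = \tfrac1n H(\bfX) \le H(X_1)$ by subadditivity of entropy (Theorem~\ref{thm:chainRule}); since the marginals are identically distributed with $\expect[\LW(X_1)]=\delta$ exactly (the Lee weight on the sphere is deterministic), the maximum-entropy characterization of $B_\delta$ gives $H(X_1)\le H_\delta$. Your tilting argument is the Legendre-dual version of the same fact: rather than invoking the abstract max-entropy property, you exploit directly that $B_\delta^{\otimes n}$ is constant on $\leesphere{\delta n,q}$ and read off the bound from $B_\delta^{\otimes n}(\leesphere{\delta n,q})\le 1$, then identify the exponent with $H_\delta$ by the standard entropy--free-energy identity. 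Both give the identical sharp bound $\sigma_{\delta n}^{(n)}\le H_\delta$ for every $n$; the paper's version is shorter, while yours is self-contained and makes the role of the Boltzmann parameter $\beta$ explicit.

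For the asymptotic equality you are in fact more careful than the paper, which simply says ``it suffices to take limits'' without supplying a matching lower bound. Your type-counting argument via \eqref{equ:bounds_multinomial} is the standard way to close this gap; the only point you flag as delicate---the existence of types $\bfp^{(n)}$ on the weight-$\delta$ hyperplane converging to $B_\delta$---is indeed routine once one notes that the admissible types form a lattice of mesh $1/n$ on an affine subspace containing $B_\delta$ in its relative interior.
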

\begin{proof}
    Let $\bfX = ( X_1, \dots, X_n)$ be a finite sequence of random variables $X_i$ chosen uniformly at random in the Lee sphere $\leesphere{\delta n, q}$. Since $\bfX$ is uniformly distributed in the sphere, its entropy is given by $H(\bfX) = \log_2\left(\card{\leesphere{\delta n, q}}\right)$. Hence, the normalized logarithmic surface area is
    \begin{align*}
        \sigma_{\delta n}^{(n)} = \frac{1}{n} H(\mathbf{X}).
    \end{align*}
    The chain rule for the entropy, Theorem \ref{thm:chainRule}, and the fact that the $X_i$´s are identically distributed, yield
    \begin{align*}
        H(\mathbf{X}) \leq \sum_{i=1}^n H(X_i) = n H(X_1).
    \end{align*}
    Since the Boltzmann distribution $B_\delta$ is the distribution of $X_1$ maximizing the entropy under the constraint that $\expect(\LW(X_1)) = \delta$, the desired upper bound follows.
    To get the asymptotic result it suffices to take limits on both sides of the inequality.
\end{proof}

\begin{lemma}[Growth rate of the volume spectrum]\label{lemma:bound_vol_H}
    For any positive integer $\delta n$ the volume spectrum is upper bounded by
    \begin{align}
        \nu_{\delta n}^{(n)} \leq H_\delta^+ .
    \end{align}
    In particular, as $n$ grows large we have that $\nu_\delta =  H_\delta^+$.
\end{lemma}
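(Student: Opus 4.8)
The plan is to bound the Lee ball by relating it to Lee spheres and then invoke Lemma~\ref{lemma:bound_mu_H}. Writing $\leeball{\delta n, q} = \bigcup_{j=0}^{\delta n} \leesphere{j, q}$ as a disjoint union, we get $\card{\leeball{\delta n, q}} = \sum_{j=0}^{\delta n} \card{\leesphere{j, q}}$, hence $\card{\leeball{\delta n, q}} \le (\delta n + 1) \max_{0 \le j \le \delta n} \card{\leesphere{j, q}}$. Taking $\frac{1}{n}\log_2$ of both sides, the factor $(\delta n + 1)$ contributes $\frac{1}{n}\log_2(\delta n + 1) \to 0$, so asymptotically $\nu_\delta \le \max_{0 \le \delta' \le \delta} \sigma_{\delta'} = \max_{0 \le \delta' \le \delta} H_{\delta'}$ by Lemma~\ref{lemma:bound_mu_H}.

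Next I would analyze the function $\delta' \mapsto H_{\delta'}$ on $[0,\delta]$. Since $B_{\delta'}$ is the maximum-entropy distribution under the mean-Lee-weight constraint $\delta'$, and the maximum entropy over all distributions on $\intmodq{q}$ (without constraint) is $\log_2 q$, attained by the uniform distribution which has mean Lee weight exactly $\delta_q$, the map $\delta' \mapsto H_{\delta'}$ is strictly increasing on $[0, \delta_q]$ and equals $\log_2 q$ at $\delta' = \delta_q$. (This monotonicity is standard: for $\delta_1' < \delta_2' \le \delta_q$, the distribution $B_{\delta_2'}$ is feasible for the constraint "mean $\le \delta_2'$" and strictly dominates $B_{\delta_1'}$ in entropy; one can also see $\beta$ is a strictly decreasing function of $\delta'$ with $\beta \to 0$ as $\delta' \to \delta_q$.) Consequently, for $\delta \le \delta_q$ we have $\max_{0 \le \delta' \le \delta} H_{\delta'} = H_\delta$, and for $\delta_q < \delta < r$ we have $\max_{0 \le \delta' \le \delta} H_{\delta'} = H_{\delta_q} = \log_2 q$; in both cases this equals $H_\delta^+$ by definition, giving $\nu_\delta \le H_\delta^+$.

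For the finite-length statement $\nu_{\delta n}^{(n)} \le H_\delta^+$, the clean route when $\delta \le \delta_q$ is to combine the disjoint-union identity with the finite-length bound $\sigma_j^{(n)} \le H_{j/n}$ from Lemma~\ref{lemma:bound_mu_H}, but one must absorb the $(\delta n+1)$ prefactor; alternatively, one directly mimics the proof of Lemma~\ref{lemma:bound_mu_H}: let $\bfX$ be uniform on $\leeball{\delta n, q}$, so $\nu_{\delta n}^{(n)} = \frac{1}{n} H(\bfX) \le \frac{1}{n}\sum_i H(X_i) = H(X_1)$, and $X_1$ now has mean Lee weight $\expect(\LW(X_1)) \le \delta$ (since every vector in the ball has Lee weight at most $\delta n$, the per-coordinate average is at most $\delta$). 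Then maximum entropy under the constraint $\expect(\LW(X_1)) \le \delta$ gives $H(X_1) \le \max_{\delta' \le \delta} H_{\delta'} = H_\delta^+$ using the monotonicity above. This handles $\delta \le \delta_q$ directly, and for $\delta_q < \delta < r$ the bound $\nu_{\delta n}^{(n)} \le \log_2 q$ is immediate since $\leeball{\delta n, q} \subseteq \Zqn$.

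The main obstacle is making the maximum-entropy-with-inequality-constraint argument airtight, i.e.\ establishing that $\delta' \mapsto H_{\delta'}$ is nondecreasing on $[0,\delta_q]$ and saturates at $\log_2 q$; everything else (the disjoint union, the vanishing $\frac{1}{n}\log_2(\delta n+1)$ term, the chain-rule step borrowed from Lemma~\ref{lemma:bound_mu_H}) is routine. One should be slightly careful that when $\delta' n$ is not an integer the sphere $\leesphere{\delta' n, q}$ is empty and contributes nothing, which is harmless for an upper bound; and one should note $\delta_q$ itself may be non-integer-valued as a target but the entropy function $H_{\delta'}$ is defined for all real $\delta' \in [0, \lfloor q/2 \rfloor]$ via the Boltzmann parametrization, so the monotonicity statement is about this continuous function and is unaffected.
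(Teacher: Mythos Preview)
Your proposal is correct, and the direct route you describe in the third paragraph---taking $\bfX$ uniform on $\leeball{\delta n, q}$, applying the chain rule to get $\nu_{\delta n}^{(n)} \le H(X_1)$, and then bounding $H(X_1)$ by the maximum entropy under the inequality constraint $\expect(\LW(X_1)) \le \delta$---is exactly the paper's proof. The sphere-decomposition argument you open with is a valid alternative for the asymptotic claim $\nu_\delta = H_\delta^+$, but, as you correctly observe, it does not yield the finite-$n$ inequality because of the polynomial prefactor; the paper simply uses the direct argument for both parts and does not spell out the monotonicity of $\delta' \mapsto H_{\delta'}$ as carefully as you do.
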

\begin{proof}
    The proof follows in a similar fashion to the proof of the growth rate of the surface spectrum.
    Consider a random vector $\mathbf{X} = ( X_1, \dots, X_n)$ chosen uniformly at random over $\leeball{\delta n, q}$. Hence, $\LW(\bfx) \leq \delta n$, where $\bfx$ denotes the realization of $\mathbf{X}$. It holds that
    \begin{align*}
        \log_2\left(\card{\leeball{\delta n, q}}\right) = H(\mathbf{X}),
    \end{align*}
    which implies, using again Theorem \ref{thm:chainRule}, that
    \begin{align*}
        \nu_{\delta n}^{(n)} = \frac{1}{n} H(\mathbf{X}) \leq H(X_1).
    \end{align*}
    Note that $H(X_1) \leq \log_2(q)$ for any parameter of $\delta \in \left[ 0, r\right]$. Hence, again since $B_\delta$ maximizes the entropy under the constraint $\expect(\LW(X_1)) \leq \delta$, we observe that $H(X_1) \leq H_\delta^+$ which yields the first statement of the lemma. To prove the latter statement it suffices to take the limit as $n$ tends to infinity.
\end{proof}

%%%%%%%%%%%%%%%%%%%%%%%%%%%%%%%%%%%%%%%%%%%%%%%%%%%%%%%%%%%%%%%%%
%%%%%%%%%%%%%%%%%%%%%%%%%%%%%%%%%%%%%%%%%%%%%%%%%%%%%%%%%%%%%%%%%
\subsection{Error Probability Bounds for the Constant Lee Weight Channel}
We consider an $[n, k]$ code $\code$ of cardinality $\card{\code} = q^k =: M$ over $\intmodq{q}$, and we focus on the {\constLC} where the additive error term is of fixed Lee weight $\delta n$. We denote by $P_B(\code)$ the block error probability of the code $\code$ under a given decoding rule.
\begin{theorem}[Random Coding Union Bound, ML Decoding]\label{thm:RCU_const_ML}
    Let $\code \subset (\intmodq{q})^n$ be a random code of rate $R_2$. The average \ac{ML} decoding error probability of $\code$ used to transmit over a {\constLC} satisfies
	\begin{align*}
	    \expect\left(\perror(\code)\right)< 2^{-n \left[\log_2 q - \sigma_{\delta n}^{(n)} - R_2\right]^+}.
	\end{align*}
\end{theorem}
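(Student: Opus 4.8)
The plan is to bound the average ML error probability by a union-bound–style argument over the random code ensemble, exploiting the fact that for the constant Lee weight channel the received word $\bfy$ lies on the Lee sphere of radius $\delta n$ around the transmitted codeword, and an ML decoding error occurs only if some other codeword of the random code also lies on that sphere.

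First I would fix a transmitted codeword (by symmetry of the random-coding argument, we may condition on $\bfx$ being sent, and in fact average over a uniformly random $\bfx$), and write the error event: given $\bfy = \bfx + \bfe$ with $\bfe$ uniform on $\leesphere{\delta n, q}$, the ML (= MD, since the received word is always at distance exactly $\delta n$ from $\bfx$) decoder fails only if there exists a codeword $\bfx' \neq \bfx$ with $\LD(\bfy, \bfx') = \delta n$, i.e. $\bfy \in \leesphere{\delta n, q}(\bfx')$. By the union bound over the $M-1$ competing codewords,
\begin{align*}
    \perror(\code) \leq \sum_{\bfx' \neq \bfx} \prob\!\left(\bfy \in \leesphere{\delta n, q}(\bfx') \,\middle|\, \bfx \text{ sent}\right).
\end{align*}
Taking the expectation over the random code — where each codeword $\bfx'$ is drawn independently and uniformly from $(\intmodq{q})^n$ — the received word $\bfy$ is independent of $\bfx'$, so $\prob(\bfy \in \leesphere{\delta n, q}(\bfx')) = \card{\leesphere{\delta n, q}} / q^n$ for each of the $M-1 < M = q^k$ competitors. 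Hence
\begin{align*}
    \expect\left(\perror(\code)\right) < M \cdot \frac{\card{\leesphere{\delta n, q}}}{q^n} = 2^{n R_2} \cdot 2^{n \sigma_{\delta n}^{(n)}} \cdot 2^{-n \log_2 q},
\end{align*}
using $M = 2^{n R_2}$ and $\card{\leesphere{\delta n, q}} = 2^{n \sigma_{\delta n}^{(n)}}$ by definition of the normalized logarithmic surface spectrum. This is exactly $2^{-n[\log_2 q - \sigma_{\delta n}^{(n)} - R_2]}$; finally, since a probability is at most $1$, we may replace the exponent by its positive part $[\,\cdot\,]^+$, giving the claimed bound.

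The one subtlety — and the main thing to get right — is the role of ties under ML decoding over the {\constLC}: as noted in the text, the ML decoder outputs a uniformly random element of the list of codewords at distance $\delta n$ from $\bfy$, so a "collision" $\bfy \in \leesphere{\delta n, q}(\bfx')$ does not force an error but only makes one possible. This only helps, so the union bound above (which charges a full error to every collision) remains a valid upper bound; one should state this explicitly rather than skip it. A second minor point is the independence structure of the ensemble: the bound uses that in the random code each codeword is uniform over $(\intmodq{q})^n$ and that $\bfx'$ is independent of the pair $(\bfx, \bfe)$ and hence of $\bfy$ — this is immediate for the i.i.d.\ random-coding ensemble, and the averaging over the transmitted $\bfx$ makes the per-competitor probability exactly $\card{\leesphere{\delta n, q}}/q^n$ with no dependence on $\bfx$. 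With these two observations in place the rest is the short chain of equalities above.
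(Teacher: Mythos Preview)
Your proof is correct and follows essentially the same approach as the paper: union bound over the $M-1$ competing codewords, each of which (being uniform on $(\intmodq{q})^n$ and independent of $\bfy$) lands on the Lee sphere of radius $\delta n$ around $\bfy$ with probability $\card{\leesphere{\delta n, q}}/q^n$, followed by the cap at $1$ via $[\cdot]^+$. Your handling of ties is, if anything, more explicit than the paper's terse ``breaking ties always towards $\widetilde{\mathbf{X}}$''. One small inaccuracy that does not affect the argument: the parenthetical ``ML $=$ MD'' is not true in general for the {\constLC} --- a competing codeword could sit at distance strictly less than $\delta n$ from $\bfy$ (and MD would prefer it, while ML would assign it zero likelihood) --- the paper itself notes the two rules coincide only when $\delta n$ is within the decoding radius; but your proof only uses the ML characterization, so this is harmless.
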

\begin{proof}
    Consider first the pairwise error probability $\pep(\bfx, \bfy)$  for fixed $\bfx$ and $\bfy$, where $\bfx$ is the transmitted codeword, $\bfy$ is the channel output and $\widetilde{\mathbf{X}}$ is a random codeword distributed uniformly over $(\intmodq{q})^n$. By breaking ties always towards $\widetilde{\mathbf{X}}$, we can upper bound the pairwise error probability as
	\begin{align}
	\pep(\bfx,\bfy) &\leq \prob\left(P_{\mathbf{Y} \st \mathbf{X}} (\bfy \st \bfx) = P_{\mathbf{Y} \st \mathbf{X}} (\bfy \st \widetilde{\mathbf{X}}) \right) \\
          &= \prob\left(\LD(\bfy,\widetilde{\mathbf{X}})=\delta n\right)\\
          &= \frac{\card{\leesphere{\delta n, q}}}{q^n}.
	\end{align}
	The union bound on the block error probability is obtained by multiplying the result by $M-1$. By observing that the pairwise error probability does not depend on $\bfx, \bfy$, we get
	\begin{align}
		\expect\left(\perror(\code)\right) &\leq \min\left(1, (M-1)\pep(\bfx, \bfy)\right)\\
            &<\min\left(1, \, M \frac{\card{\leesphere{\delta n,q}}}{q^n}\right)\\
            &=2^{-n \left[\log_2 q - \sigma_{\delta n}^{(n)} - R_2\right]^+}.
	\end{align}
\end{proof}

Owing to Lemma \ref{lemma:bound_mu_H}, the bound can be loosened yielding the simple form described in the following corollary.

\begin{corollary}\label{cor:RCU_const_ML}
    The average \ac{ML} decoding error probability of a random code $\code \subset (\intmodq{q})^n$ of rate $R_2$ used to transmit over a {\constLC} satisfies
	\begin{align}
	\expect(\perror(\code))&< 2^{-n \left[\log_2 q - H_\delta - R_2\right]^+}\\
	&=2^{-n \left[\KLD{B_q}{\mathcal{U}(\intmodq{q})} - R_2\right]^+}.
	\end{align}
\end{corollary}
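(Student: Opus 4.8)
The plan is to obtain both the displayed inequality and the subsequent equality from results already established, exactly in the spirit of the sentence preceding the corollary. For the inequality I would start from Theorem~\ref{thm:RCU_const_ML}, which gives $\expect(\perror(\code)) < 2^{-n[\log_2 q - \sigma_{\delta n}^{(n)} - R_2]^+}$, and then invoke Lemma~\ref{lemma:bound_mu_H}, by which $\sigma_{\delta n}^{(n)} \le H_\delta$ for every positive integer $\delta n$. Hence $\log_2 q - \sigma_{\delta n}^{(n)} - R_2 \ge \log_2 q - H_\delta - R_2$. Since $x \mapsto [x]^+ = \max(0,x)$ is nondecreasing and $y \mapsto 2^{-ny}$ is nonincreasing, replacing $\sigma_{\delta n}^{(n)}$ by the larger quantity $H_\delta$ in the exponent can only enlarge the right-hand side, so $\expect(\perror(\code)) < 2^{-n[\log_2 q - H_\delta - R_2]^+}$ follows at once.

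For the equality I would simply unfold the definition of the Kullback--Leibler divergence in \eqref{equ:def:relative_entropy}, taking $P_X = B_\delta$ (the Boltzmann distribution with parameter $\delta$, as defined after Lemma~\ref{lem:marginal}) and $\widetilde{P}_X = \mathcal{U}(\intmodq{q})$ the uniform law assigning mass $1/q$ to each element:
\begin{align*}
    \KLD{B_\delta}{\mathcal{U}(\intmodq{q})}
    &= \sum_{i \in \intmodq{q}} B_\delta(i)\,\log_2\!\left( \frac{B_\delta(i)}{1/q} \right) \\
    &= \sum_{i \in \intmodq{q}} B_\delta(i)\log_2 B_\delta(i) \;+\; \log_2 q \sum_{i \in \intmodq{q}} B_\delta(i).
\end{align*}
Recognizing the first sum as $-H_\delta$ by \eqref{equ:entropy} and the second as $\log_2 q$, since $B_\delta$ is a probability distribution, gives $\KLD{B_\delta}{\mathcal{U}(\intmodq{q})} = \log_2 q - H_\delta$, which rewrites the exponent of the first line into that of the second. (More generally, $\KLD{P}{\mathcal{U}(\intmodq{q})} = \log_2 q - H(P)$ for any distribution $P$ on a $q$-element alphabet, so the identity used is completely elementary.)

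I do not anticipate any real obstacle: the corollary is a direct weakening of Theorem~\ref{thm:RCU_const_ML} via Lemma~\ref{lemma:bound_mu_H}, combined with the above one-line KL identity. The only points that deserve a moment's care are the monotonicity argument showing that the inequality direction is preserved when passing through $[\cdot]^+$ and the exponential, and the observation that when $\log_2 q - H_\delta - R_2 \le 0$ the stated bound degenerates to the trivial $\expect(\perror(\code)) < 1$, consistent with the clipping at $0$ in the exponent.
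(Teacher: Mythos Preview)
Your proposal is correct and follows exactly the route the paper indicates (``Owing to Lemma~\ref{lemma:bound_mu_H}, the bound can be loosened\ldots''): you loosen Theorem~\ref{thm:RCU_const_ML} via $\sigma_{\delta n}^{(n)}\le H_\delta$ and then identify $\log_2 q - H_\delta$ with the Kullback--Leibler divergence using the standard identity $\KLD{P}{\mathcal{U}(\intmodq{q})}=\log_2 q - H(P)$. Your added care about monotonicity through $[\cdot]^+$ and $2^{-n(\cdot)}$ is exactly what is needed, and your use of $B_\delta$ (rather than the paper's $B_q$, which appears to be a typo) is the correct distribution.
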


In terms of \ac{MD} decoding, the two results can be proven in a similar fashion, considering all codewords of distance up to $\delta n$, i.e., instead of working over the sphere $\leesphere{\delta n, q}$ only we extend to the ball $\leeball{\delta n, q}$. Then the \ac{MD} counterparts of Theorem \ref{thm:RCU_const_ML} and its consequence, Corollary \ref{cor:RCU_const_ML}, are given in the following two results.

\begin{theorem}[Random Coding Union Bound, MD decoding]\label{thm:RCU_const_MD}
    Let $\code \subset (\intmodq{q})^n$ be a random code of rate $R_2$. 
    The average \ac{MD} decoding error probability of $\code$ used to transmit over a {\constLC} satisfies
	\begin{align*}
	\expect(\perror(\code))< 2^{-n \left[\log_2 q - \nu_{\delta n}^{(n)} - R_2\right]^+}.
	\end{align*}
\end{theorem}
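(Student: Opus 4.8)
The theorem to prove is the MD-decoding analogue of Theorem~\ref{thm:RCU_const_ML}: for a random code $\code \subset (\intmodq{q})^n$ of rate $R_2$ over the {\constLC}, the average MD decoding error probability satisfies $\expect(\perror(\code)) < 2^{-n[\log_2 q - \nu_{\delta n}^{(n)} - R_2]^+}$. The plan is to mimic the proof of Theorem~\ref{thm:RCU_const_ML} verbatim, replacing the Lee sphere $\leesphere{\delta n, q}$ by the Lee ball $\leeball{\delta n, q}$ everywhere, since MD decoding (unlike ML decoding over this channel) declares an error whenever some competing codeword lies \emph{at distance at most} $\delta n$ from $\bfy$, not merely at distance exactly $\delta n$.

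First I would fix the transmitted codeword $\bfx$, the channel output $\bfy$ (so $\LD(\bfx,\bfy) = \delta n$ with probability one), and a competing random codeword $\widetilde{\bfX}$ distributed uniformly over $(\intmodq{q})^n$. Breaking ties adversarially (toward $\widetilde{\bfX}$), the MD decoder makes an error in favor of $\widetilde{\bfX}$ only if $\LD(\bfy,\widetilde{\bfX}) \leq \LD(\bfy,\bfx) = \delta n$, so the pairwise error probability is bounded by
\begin{align*}
    \pep(\bfx,\bfy) \leq \prob\left(\LD(\bfy,\widetilde{\bfX}) \leq \delta n\right) = \frac{\card{\leeball{\delta n, q}}}{q^n},
\end{align*}
using that $\widetilde{\bfX}$ is uniform on $(\intmodq{q})^n$ and that $\bfy + \leeball{\delta n,q}$ is a translate of the Lee ball. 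Crucially this bound is independent of $\bfx$ and $\bfy$.

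Next I would apply the union bound over the $M-1$ competing codewords and combine with the trivial bound $1$:
\begin{align*}
    \expect(\perror(\code)) \leq \min\bigl(1,\,(M-1)\pep(\bfx,\bfy)\bigr) < \min\left(1,\, M\,\frac{\card{\leeball{\delta n,q}}}{q^n}\right) = 2^{-n\left[\log_2 q - \nu_{\delta n}^{(n)} - R_2\right]^+},
\end{align*}
where the last equality unwinds the definitions $M = 2^{nR_2}$ and $\nu_{\delta n}^{(n)} = \frac{1}{n}\log_2 \card{\leeball{\delta n,q}}$, exactly as in the sphere case. This completes the argument.

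Honestly, there is no real obstacle here — the only subtlety worth stating carefully is the tie-breaking convention and the observation that over the {\constLC} the MD rule is strictly more prone to error than ML (it cannot exploit the knowledge that the true error weight is exactly $\delta n$), which is precisely why the ball replaces the sphere. One could optionally remark, as the paper does after Theorem~\ref{thm:RCU_const_ML}, that invoking Lemma~\ref{lemma:bound_vol_H} loosens the bound to the closed form $2^{-n[\log_2 q - H_\delta^+ - R_2]^+}$, giving the MD counterpart of Corollary~\ref{cor:RCU_const_ML}.
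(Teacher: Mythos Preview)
Your proposal is correct and matches the paper's own proof essentially line for line: fix $\bfx,\bfy$, bound the pairwise error probability under MD decoding (with adversarial tie-breaking) by $\card{\leeball{\delta n,q}}/q^n$ via uniformity of $\widetilde{\bfX}$, then apply the union bound over $M-1$ competitors and clip at $1$ to obtain $2^{-n[\log_2 q - \nu_{\delta n}^{(n)} - R_2]^+}$. Your closing remark about loosening via Lemma~\ref{lemma:bound_vol_H} is exactly Corollary~\ref{cor:RCU_const_MD}.
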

\begin{proof}
    Consider first the pairwise error probability under the assumption that $\bfx$ is the transmitted codeword, $\bfy$ is the channel output and $\widetilde{\mathbf{X}}$ is a random codeword distributed uniformly over $(\intmodq{q})^n$. By breaking ties always towards $\widetilde{\mathbf{X}}$, we have
	\begin{align}
		\pep(\bfx,\bfy) &\leq \prob\left(\LD(\bfy,\bfx)\geq \LD(\bfy,\widetilde{\mathbf{X}})\right)\\
		&= \prob(\LD(\bfy,\widetilde{\mathbf{X}})\leq \delta n)\\
		&= \frac{\card{\leeball{\delta n}}}{q^n}.
	\end{align}
	The union bound on the block error probability can be obtained by multiplying the result by $M-1$. By observing that the pairwise error probability does not depend on $\bfx, \bfy$, we get
	\begin{align}
		\expect(\perror(\code)) &\leq \min\left(1, (M-1)\pep(\bfx, \bfy)\right)\\
		&<\min\left(1,M \frac{\card{\leeball{\delta n}}}{q^n}\right)\\
		&=2^{-n \left[\log_2 q - \nu_{\delta n}^{(n)} - R_2\right]^+}.
	\end{align}
\end{proof}

Owing to Lemma \ref{lemma:bound_vol_H}, the bound can be loosened yielding the simple form described in Corollary \ref{cor:RCU_const_MD}.

\begin{corollary}\label{cor:RCU_const_MD}
    The average \ac{MD} decoding error probability of a random code $\code \subset (\intmodq{q})^n$ of rate $R_2$ used to transmit over a {\constLC} satisfies
    \begin{align*}
        \expect(\perror(\code))< 2^{-n \left[\log_2 q - H^+_\delta - R_2\right]^+}.
    \end{align*}
\end{corollary}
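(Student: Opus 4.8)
The plan is to obtain this as an immediate loosening of Theorem~\ref{thm:RCU_const_MD} by feeding in the volume-spectrum bound of Lemma~\ref{lemma:bound_vol_H}. Theorem~\ref{thm:RCU_const_MD} already establishes
\begin{align*}
    \expect(\perror(\code)) < 2^{-n\left[\log_2 q - \nu_{\delta n}^{(n)} - R_2\right]^+},
\end{align*}
so the only remaining task is to replace $\nu_{\delta n}^{(n)}$ in the exponent by the larger quantity $H_\delta^+$ while checking that this preserves the inequality.

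First I would invoke Lemma~\ref{lemma:bound_vol_H}, which gives $\nu_{\delta n}^{(n)} \leq H_\delta^+$ for every positive integer $\delta n$, hence
\begin{align*}
    \log_2 q - \nu_{\delta n}^{(n)} - R_2 \;\geq\; \log_2 q - H_\delta^+ - R_2 .
\end{align*}
Next I would use that the map $x \mapsto 2^{-n[x]^+}$ is non-increasing on $\RR$ --- since $[\cdot]^+$ is non-decreasing, $-n[\cdot]^+$ is non-increasing, and $t \mapsto 2^{t}$ is increasing --- to conclude
\begin{align*}
    2^{-n\left[\log_2 q - \nu_{\delta n}^{(n)} - R_2\right]^+} \;\leq\; 2^{-n\left[\log_2 q - H_\delta^+ - R_2\right]^+}.
\end{align*}
Chaining this with the strict bound of Theorem~\ref{thm:RCU_const_MD} gives exactly the claimed inequality.

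There is no genuine obstacle here; the only point deserving a moment's care is that the substitution keeps the strict ``$<$'', which it does because the loosening contributes only a non-strict ``$\leq$'' appended after the strict inequality of Theorem~\ref{thm:RCU_const_MD}. It is also worth remarking that the definition of $H_\delta^+$ quietly absorbs the regime $\delta_q < \delta < r$: there $H_\delta^+ = \log_2 q$, the exponent collapses to $[-R_2]^+ = 0$, and the statement reduces to the trivial bound $\expect(\perror(\code)) < 1$, consistent with $\card{\leeball{\delta n, q}} \leq q^n$. As an alternative to quoting Theorem~\ref{thm:RCU_const_MD}, one could re-run its union-bound argument and insert $\card{\leeball{\delta n, q}} \leq 2^{n H_\delta^+}$ (the cardinality form of Lemma~\ref{lemma:bound_vol_H}) directly into the step that bounds $M\,\card{\leeball{\delta n, q}}/q^n$; routing through the theorem is merely shorter.
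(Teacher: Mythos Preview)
Your argument is correct and matches the paper's approach exactly: the paper simply remarks that the corollary follows by loosening Theorem~\ref{thm:RCU_const_MD} via Lemma~\ref{lemma:bound_vol_H}, and you have spelled out precisely that step together with the monotonicity check that preserves the strict inequality.
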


Figure \ref{fig:RCUB_constant} depicts the upper bounds based on Theorem \ref{thm:RCU_const_MD} and Corollary \ref{cor:RCU_const_MD} for \ac{MD} decoding, for $[500,250]$ codes over $\intmodq{7}$. The bound of Corollary \ref{cor:RCU_const_MD} is only slightly looser than the one provided by Theorem \ref{thm:RCU_const_MD}. A similar result holds for the bounds of Theorem \ref{thm:RCU_const_ML} and Corollary \ref{cor:RCU_const_ML}, under \ac{ML} decoding.

\begin{figure}[]
    \centering
    \begin{tikzpicture}[scale = 0.9, every node/.style={scale=0.85}]
    \begin{semilogyaxis}[
        width=0.7\columnwidth,
        height=0.5\columnwidth,
        grid=both,
        grid style={dotted,gray},
        legend cell align=left,
        legend style={font=\small},
        legend pos=north west,
        mark options={solid},
        mark size=3,
        xmin=0.25,
        xmax=0.4,
        ymin=1e-10,
        ymax=1e-1,
        xlabel={Normalized Lee weight $\delta$},
        ylabel={Block error probability}
        ]
        
        \addplot[teal!90!blue,line width = 1pt] table[x = delta, y=RCU] {RCUConst500q7.txt};\addlegendentry{RCU bound, Theorem \ref{thm:RCU_const_MD}};
        \addplot[purple!70!orange,line width =1pt,dashed] table[x = delta, y=RCUH] {RCUConst500q7.txt};\addlegendentry{RCU bound, Corollary \ref{cor:RCU_const_MD}};  

	\end{semilogyaxis}
\end{tikzpicture}
    \caption{Random coding union bounds under \ac{MD} decoding based on Theorem \ref{thm:RCU_const_MD} and Corollary \ref{cor:RCU_const_MD} for the parameters $n=500$ and $k = 250$ over $\intmodq{7}$.}
    \label{fig:RCUB_constant}
\end{figure}
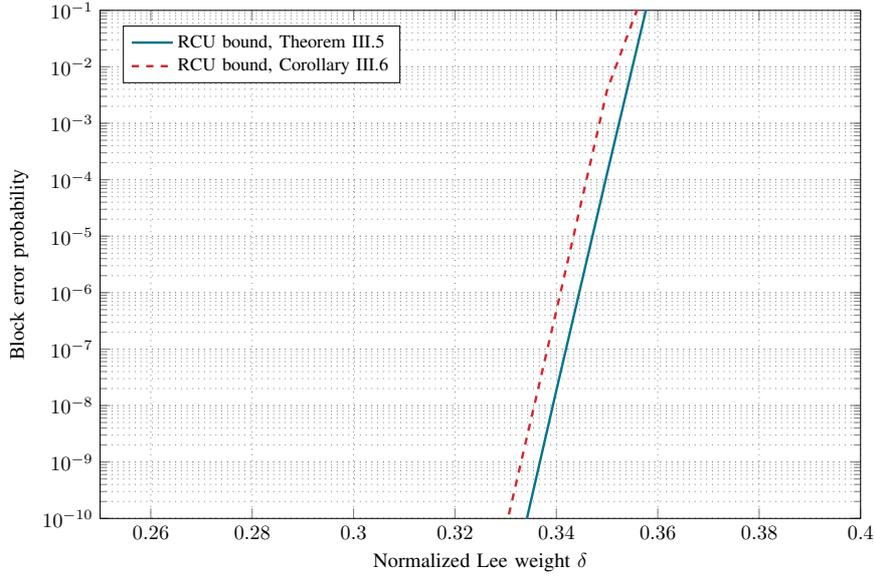

%%%%%%%%%%%%%%%%%%%%%%%%%%%%%%%%%%%%%%%%%%%%%%%%%%%%%%%%%%%%%%%%%%%%%%%%%%%%%%%%%%%%%%%%%%%%%%%%%%%%%%%%%%%%%%%%%%%%%%%%%%%%%%%%%%
\subsection{Error Probability Bounds for the Memoryless Lee Channel}

We consider next a {\memLC} with expected normalized Lee weight of the error pattern $\delta$. We restrict the attention to the case  $\delta \leq \delta_q$. Recall that, in this regime, the \ac{ML} and the \ac{MD} decoding rules coincide since the \ac{ML} decoder gives a list of all codewords which are at distance $\delta n$ from the received word $\bfy$ and it outputs one of the codewords in this list randomly.

\begin{theorem}[Random Coding Union Bound]\label{thm:RCU_mmless_ML}
    Let $\code \subset (\intmodq{q})^n$ be a random code of rate $R_2$. The average \ac{ML}/\ac{MD} decoding error probability of $\code$ used to transmit over a {\memLC} with expected normalized Lee weight $\delta$ of the error pattern satisfies
    \begin{align}
        \expect\left(\perror(\code)\right)
            < 
        \expect\left(2^{-n \left[\log_2 q - \nu_{L}^{(n)} - R_2\right]^{+}}\right)
    \end{align}
     where the expectation is taken over the distribution of the Lee weight $L=\LW(\bfE)$.
\end{theorem}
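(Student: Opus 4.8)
The plan is to mimic the proof of Theorem~\ref{thm:RCU_const_MD}, with one adaptation: since the Lee weight of the error pattern over the {\memLC} is a \emph{random} variable $L=\LW(\bfE)$ rather than the fixed value $\delta n$, I carry out the random-coding union argument \emph{conditionally} on the realization of $L$ and only average over $L$ at the very end.

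First I would record that, under the restriction $\delta\le\delta_q$ (hence $\beta\ge 0$), the channel law factors as $P_{\mathbf{Y}\st\mathbf{X}}(\bfy\st\bfx)=Z(\beta)^{-n}\exp\!\big(-\beta\,\LD(\bfy,\bfx)\big)$, which is nonincreasing in $\LD(\bfy,\bfx)$; this is exactly the observation already made in the text that \ac{ML} and \ac{MD} decoding coincide in this regime, so it suffices to bound the \ac{MD} block error probability. Next, draw the codewords of $\code$ i.i.d.\ uniformly over $(\intmodq{q})^n$, and by symmetry assume a fixed message is transmitted, so that $\bfY=\bfx+\bfE$ with $\bfE\indep\bfx$. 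Condition on the error realization $\bfe$, hence on $L=\LW(\bfe)=\LD(\bfY,\bfx)$. A competing random codeword $\widetilde{\mathbf{X}}$ is uniform on $(\intmodq{q})^n$ and independent of $\bfe$, so, breaking ties towards $\widetilde{\mathbf{X}}$,
\[
\pep \;\le\; \prob\big(\LD(\bfY,\widetilde{\mathbf{X}}) \le L \st \bfe\big) \;=\; \frac{\card{\leeball{L,q}}}{q^n}.
\]
A union bound over the $M-1$ competitors followed by truncation at $1$ gives
\[
\prob(\text{error}\st\bfe)\;\le\;\min\!\Big(1,\,(M-1)\tfrac{\card{\leeball{L,q}}}{q^n}\Big)\;<\;2^{-n\left[\log_2 q-\nu_{L}^{(n)}-R_2\right]^{+}},
\]
which depends on $\bfe$ only through $L$. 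Averaging over $\bfE$ (equivalently, over $L$) then yields the claimed bound $\expect(\perror(\code))<\expect\big(2^{-n[\log_2 q-\nu_L^{(n)}-R_2]^{+}}\big)$.

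I do not expect a genuine obstacle here: the argument is essentially routine once one notices that the quantity inside the expectation is a \emph{deterministic} function of $L$, so the expectation passes through without any Jensen/concavity step, and that the competitors are independent of the error pattern, which is precisely what legitimizes the conditioning on $L$. The only points that need a line of care are the tie-breaking convention (breaking ties towards the incorrect codeword, so that the bound is an upper bound) and the restriction $\delta\le\delta_q$, which is what makes the \ac{ML}/\ac{MD} equivalence hold and hence lets us replace the likelihood comparison by a Lee-distance comparison.
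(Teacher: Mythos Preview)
Your proposal is correct and follows essentially the same approach as the paper's own proof: condition on the Lee weight $L=\LW(\bfE)$, bound the pairwise error probability by $\card{\leeball{L,q}}/q^n$ via the tie-breaking convention, apply the union bound over the $M-1$ competitors with truncation at $1$, and finally take the expectation over $L$. Your added remarks on the \ac{ML}/\ac{MD} equivalence under $\delta\le\delta_q$ and on the independence of competitors from the error pattern are sound and mirror the paper's implicit assumptions.
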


\begin{proof}
    Let $\bfE \in (\intmodq{q})^n$ with $\LW(\bfE ) = L$, $\widetilde{\mathbf{X}}$ is a random codeword distributed uniformly over $(\intmodq{q})^n$, $\bfx$ the transmitted codeword and $\bfy$ the channel output. We estimate the pairwise error probability of $\bfx$ and $\bfy$ given that $\LW(\bfE) = L$. That is, by breaking ties always towards $\widetilde{\mathbf{X}}$, we get
    \begin{align}
        \pep_{L}(\bfx,\bfy) &\leq \prob\left(\LD(\bfy,\bfx)\geq \LD(\bfy,\widetilde{\mathbf{X}}) \st \LW(\bfE) =  L\right)\\
        &= \prob\left(\LD(\bfy,\widetilde{\mathbf{X}})\leq L \st \LW(\bfE) = L\right)\\
        &= \frac{\card{\leeball{L}}}{q^n}.
    \end{align}
    The union bound on the block error probability can be obtained by multiplying the result by $M-1$. By observing that the pairwise error probability does not depend on $\bfx, \bfy$, we get
    \begin{align}
        \expect(\perror(\code) \st  \LW(\bfE) = L) &\leq \min\left(1, (M-1)\pep_L(\bfx, \bfy)\right)\\
        &<\min\left(1,M \frac{\card{\leeball{L}}}{q^n}\right)\\
        &=2^{-n \left[\log_2 q - \nu_{L}^{(n)} - R_2\right]^+}.
    \end{align}
    Then, taking the expectation with respect to the Lee weight of $\bfE$ yields the desired result.
\end{proof}

A direct consequence using Lemma \ref{lemma:bound_vol_H} is captured in Corollary \ref{cor:RCU_mless}. Its proof follows similar to the constant Lee weight case.
\begin{corollary}\label{cor:RCU_mless}
   The average \ac{ML}/\ac{MD} decoding error probability of a random code $\code \subset (\intmodq{q})^n$ of rate $R_2$ used to transmit over a {\memLC} satisfies
    \[
        \expect (\perror(\code)) < \expect \left( 2^{-n \left[\log_2 q - H^{+}_{L/n}  - R_2\right]^{+}} \right)
    \]
    where the expectation is taken over the distribution of the Lee weight $L=\LW({\bfE})$.
\end{corollary}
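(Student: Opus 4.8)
The plan is to derive the corollary directly from Theorem~\ref{thm:RCU_mmless_ML} by replacing the volume spectrum $\nu_L^{(n)}$ with its entropic upper bound from Lemma~\ref{lemma:bound_vol_H}, exploiting the monotonicity of the maps involved. First I would fix an arbitrary realization $\ell$ of the random Lee weight $L = \LW(\bfE)$ and apply Lemma~\ref{lemma:bound_vol_H} with normalized radius $\delta = \ell/n$; since that lemma holds for any positive integer $\delta n$, it gives the pointwise bound $\nu_{\ell}^{(n)} = \frac{1}{n}\log_2\card{\leeball{\ell, q}} \leq H^+_{\ell/n}$. Here the ``plus'' decoration in $H^+$ is exactly what is needed: even though we restrict to $\delta = \expect(L/n) \leq \delta_q$, an individual realization $\ell/n$ may exceed $\delta_q$, and the definition of $H^+_\delta$ already accounts for this case by capping the entropy at $\log_2 q$.

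Next I would chain two elementary monotonicity facts. Since $t \mapsto [t]^+ = \max(0,t)$ is non-decreasing, the inequality $\log_2 q - \nu_\ell^{(n)} - R_2 \geq \log_2 q - H^+_{\ell/n} - R_2$ implies $[\log_2 q - \nu_\ell^{(n)} - R_2]^+ \geq [\log_2 q - H^+_{\ell/n} - R_2]^+$; and since $u \mapsto 2^{-nu}$ is non-increasing, it follows that
\[
    2^{-n[\log_2 q - \nu_\ell^{(n)} - R_2]^+} \leq 2^{-n[\log_2 q - H^+_{\ell/n} - R_2]^+}
\]
for every realization $\ell$. Taking the expectation over $L$ preserves this inequality, so that $\expect\!\left(2^{-n[\log_2 q - \nu_L^{(n)} - R_2]^+}\right) \leq \expect\!\left(2^{-n[\log_2 q - H^+_{L/n} - R_2]^+}\right)$.

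Finally I would combine this with the strict bound $\expect(\perror(\code)) < \expect\!\left(2^{-n[\log_2 q - \nu_L^{(n)} - R_2]^+}\right)$ supplied by Theorem~\ref{thm:RCU_mmless_ML} to conclude. There is no genuine obstacle here; the only point requiring a moment of care is that the bound of Lemma~\ref{lemma:bound_vol_H} must be invoked \emph{conditionally}, for each fixed value of $L$, before the outer expectation is taken --- one cannot simply substitute the asymptotic identity $\nu_\delta = H^+_\delta$, because the corollary is a finite-length statement and $L$ fluctuates around its mean. Once the conditional bound is in place, the remainder is just monotonicity together with linearity of expectation.
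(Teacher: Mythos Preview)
Your proposal is correct and follows essentially the same approach as the paper, which simply notes that the corollary is obtained by loosening Theorem~\ref{thm:RCU_mmless_ML} via Lemma~\ref{lemma:bound_vol_H}, in complete analogy with the constant-weight case. Your write-up is in fact more careful than the paper's, since you make explicit the pointwise (conditional) application of the lemma and the monotonicity argument needed to pass the bound through $[\cdot]^+$, $2^{-n(\cdot)}$, and the expectation.
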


Following the idea of \cite[Section 5.8]{Gal68}, we provide next a lower bound on the block error probability achievable by \emph{any} $[n,k]$ code over the {\memLC}.

\begin{theorem}[Sphere-Packing Bound]\label{thrm:SPB}
    The block error probability of any code $\code \subseteq (\intmodq{q})^n$ of rate $R_2$ over a {\memLC} is lower bounded as
    \begin{align*}
        \perror (\code) > \frac{1}{Z(\beta)^n}\sum_{d=d_0+1}^{rn} \card{\leesphere{d,q}}  \exp\left(-\beta d\right)+ \frac{1}{Z(\beta)^n}\left( \card{\leesphere{d_0,q}} - \xi\right)\exp\left(-\beta d_0\right)
    \end{align*}
    where $d_0$ and $\xi$ are chosen so that
    \begin{gather*}
        \sum_{d=0}^{d_0-1} \card{\leesphere{d,q}} + \xi = {2^{n(\log_2(q) - R_2)}} \quad \text{and} \\ 0<\xi\leq  \card{\leesphere{d_0, q}}.
    \end{gather*}
\end{theorem}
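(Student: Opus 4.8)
The plan is to follow the classical sphere-packing argument (Gallager, \cite[Section 5.8]{Gal68}) adapted to the Lee metric. The key observation is that under \ac{ML}/\ac{MD} decoding over the \memLC, each codeword $\bfx \in \code$ is assigned a decoding region $D_\bfx \subseteq (\intmodq{q})^n$, and these regions partition the output space. Since the error distribution $P_{\bfE}$ is a decreasing function of the Lee weight $\LW(\bfe) = \sum_i \LW(e_i)$ — indeed $P_{\bfE}(\bfe) = Z(\beta)^{-n}\exp(-\beta \LW(\bfe))$ with $\beta \geq 0$ in the regime $\delta \leq \delta_q$ — the probability of correct decoding given that $\bfx$ was sent equals $\sum_{\bfe \in D_\bfx - \bfx} P_{\bfE}(\bfe)$, and this is maximized (over all sets of the appropriate size) by taking $D_\bfx - \bfx$ to be a Lee ball around the origin, i.e. a set collecting the lowest-weight error patterns first. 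Since the $M = 2^{nR_2}$ decoding regions partition a space of size $q^n$, at least one region has size at most $q^n/M = 2^{n(\log_2 q - R_2)}$; and in fact the \emph{average} probability of correct decoding is bounded by what one gets assuming \emph{every} region has exactly this size and is a perfect Lee ball.

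Concretely, I would proceed as follows. First, fix the ``budget'' $V := q^n / M = 2^{n(\log_2 q - R_2)}$ and define $d_0$ and $\xi$ by the stated equations: $d_0$ is the largest radius such that the Lee ball of radius $d_0 - 1$ has size at most $V$, and $\xi \in (0, \card{\leesphere{d_0,q}}]$ is the ``leftover'' budget used to fill part of the sphere of radius $d_0$. Second, I would argue that the probability of correct decoding for a codeword whose decoding region has size $N$ is at most the Boltzmann-measure of the ``optimal'' region $R_N$ consisting of all error vectors of Lee weight $< d_0'$ plus a partial shell at weight $d_0'$, where $d_0'$ is chosen so $\card{R_N} = N$; this is a straightforward rearrangement/exchange argument using monotonicity of $P_{\bfE}$ in the Lee weight. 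Third, I would use convexity (Jensen's inequality): the map $N \mapsto (\text{Boltzmann measure of optimal region of size } N)$ is concave in $N$, so averaging over the $M$ regions — whose sizes sum to at most $q^n$, hence average at most $V$ — the average probability of correct decoding is at most the value at $N = V$, which is exactly
\[
    \frac{1}{Z(\beta)^n}\Bigl[\sum_{d=0}^{d_0-1} \card{\leesphere{d,q}} \exp(-\beta d) + \xi \exp(-\beta d_0)\Bigr].
\]
Fourth, I would subtract from $1$: since the total Boltzmann mass is $\sum_{d=0}^{rn}\card{\leesphere{d,q}}\exp(-\beta d)/Z(\beta)^n = 1$ (here $r = \lfloor q/2 \rfloor$ is the maximal normalized Lee weight per symbol), we get $\perror(\code) = 1 - \prob(\text{correct}) \geq 1 - (\text{above})$, and rewriting $1$ as the full sum gives precisely the claimed expression with the $\sum_{d = d_0+1}^{rn}$ tail plus the $(\card{\leesphere{d_0,q}} - \xi)\exp(-\beta d_0)$ partial-shell term. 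The strict inequality comes from the fact that a generic code's regions are not perfect Lee balls (or from $\xi > 0$ forcing a genuine inequality somewhere in the exchange argument).

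The main obstacle I anticipate is making the concavity/Jensen step rigorous in the discrete setting: the ``optimal region measure as a function of size'' is only defined at integer sizes (and at the fractional size $V$ after introducing $\xi$), so one must either interpolate linearly between the integer points — checking that this piecewise-linear function is concave because the shells $\card{\leesphere{d,q}}$ are weighted by the decreasing sequence $\exp(-\beta d)$ — or argue directly that redistributing region sizes toward equality can only increase the average correct-decoding probability. A secondary subtlety is the edge case $\beta = 0$ (i.e. $\delta = \delta_q$), where $Z(\beta) = q$ and the Boltzmann distribution is uniform; the bound should still hold and in fact reduces to the combinatorial statement that at least one decoding region misses at least $q^n - V$ of the output vectors, but one should check the formula degenerates correctly. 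Everything else — the pairwise structure, the definition of $d_0,\xi$, and the final algebraic rearrangement — is routine once the exchange lemma and its concave envelope are in place.
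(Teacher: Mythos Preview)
Your proposal is correct and follows essentially the same route as the paper: both adapt Gallager's sphere-packing argument from \cite[Section~5.8]{Gal68} by writing the correct-decoding probability as $\frac{1}{M}\sum_i\sum_d A_{d,i}\,Z(\beta)^{-n}\exp(-\beta d)$ and maximizing it subject to $A_{d,i}\le\card{\leesphere{d,q}}$ and $\sum_{i,d}A_{d,i}\le q^n$. The paper simply asserts the optimizing configuration (all regions are balls of radius $d_0-1$ plus a fractional shell at $d_0$, with $\xi:=\tfrac1M\sum_i A_{d_0,i}$), whereas you decompose the argument into an exchange step (optimal shape is a ball) and a Jensen step (average size suffices); your version makes explicit what the paper leaves implicit, but the underlying reasoning is the same.
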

\begin{proof}
    The proof follows closely the analogous proof for the binary symmetric channel provided in \cite[Section 5.8]{Gal68}. Let $\code \subseteq (\intmodq{q})^n$ be a code of rate $R_2 = \frac{\log_2(M)}{n}$ and let $\bfx_1, \ldots , \bfx_M$ denote its codewords. Furthermore, we define for each codeword $\bfx_i$, the set $\mathcal{Y}_i$ of output sequences $\bfy$ such that $\bfy$ is decoded into $\bfx_i$. Within a decision region $\mathcal{Y}_i$, we let $A_{d, i}$ denote the number of sequences $\bfy \in \mathcal{Y}_i$ such that $\LD(\bfy, \bfx_i) = d$. The overall probability of correct decoding can then be computed as
    \begin{align*}
        P_{\mathsf{correct}} 
        &= \frac{1}{M} \sum_{i = 1}^M \sum_{\bfy \in \mathcal{Y}_i}  P_{\bfY \st \bfX}(\bfy \st \bfx_i) \\
        &= \frac{1}{M} \sum_{i = 1}^M \sum_{d = 0}^{n\floor{q/2}} A_{d, i} \frac{1}{Z(\beta)^n}\exp(-\beta d).
    \end{align*}
    This implies that the block error probability $\perror(\code)$ is computed as
    \begin{align}
        \perror(\code) 
        &= 1 - P_{\mathsf{correct}}\\
        &= \frac{1}{M} \sum_{i = 1}^M \sum_{d = 0}^{n\floor{q/2}} \left( \card{\leesphere{d,q}} - A_{d, i}\right) \frac{1}{Z(\beta)^n}\me^{-\beta d}.\label{eq:blockerrorprobability}
    \end{align}
    To obtain the desired lower bound, we minimize the expression on the right hand side of \eqref{eq:blockerrorprobability}. The expression is minimized subject to the constraints
    \begin{itemize}
        \item $A_{d, i} \leq \card{\leesphere{d, q}}$, for every $d \in \set{0, \ldots, n\floor{q/2}}$ and $i = \set{1, \ldots, M}$, and \\
        \item $\sum_{i = 0}^M \sum_{d = 0}^{n\floor{q/2}} A_{d, i} \leq q^n$.
    \end{itemize}
    It can be shown that the minimum is achieved for
    \begin{align}\label{eq:proofSPB_cases}
        A_{d, i} =
        \begin{cases}
            \card{\leesphere{d, q}} & 0 \leq d \leq d_0 - 1 \\
            0 & d_0 + 1 \leq d \leq n\floor{q/2}
        \end{cases},
    \end{align}
    where $d_0$ is chosen such that
    \begin{gather*}
        \sum_{d = 0}^{d_0 - 1}\card{\leesphere{d,q}} + \frac{1}{M} \sum_{i = 1}^M A_{d, i} = 2^{n(\log_2(q) - R_2)}, \quad \text{and}\\
        0 \leq \frac{1}{M} \sum_{i = 1}^M A_{d_0,i} \leq \card{\leesphere{d_0,q}} .
    \end{gather*}
    Denoting $\xi :=\frac{1}{M} \sum_{i = 1}^M A_{d_0,i}$ and substituting \eqref{eq:proofSPB_cases} in \eqref{eq:blockerrorprobability} yields the desired lower bound.
\end{proof}

Figure \ref{fig:SPBRCU_Gallager} depicts the random coding union bound of Corollary \ref{cor:RCU_mless} and the sphere-packing bound of Theorem \ref{thrm:SPB}, over a {\memLC}, for $[1024,512]$ codes over $\intmodq{7}$. The random coding union bound given in Corollary \ref{cor:RCU_mless} is tight with respect to the sphere-packing bound in Theorem \ref{thrm:SPB}. Hence, they provide an accurate benchmark to assess the performance achievable over the {\memLC}.

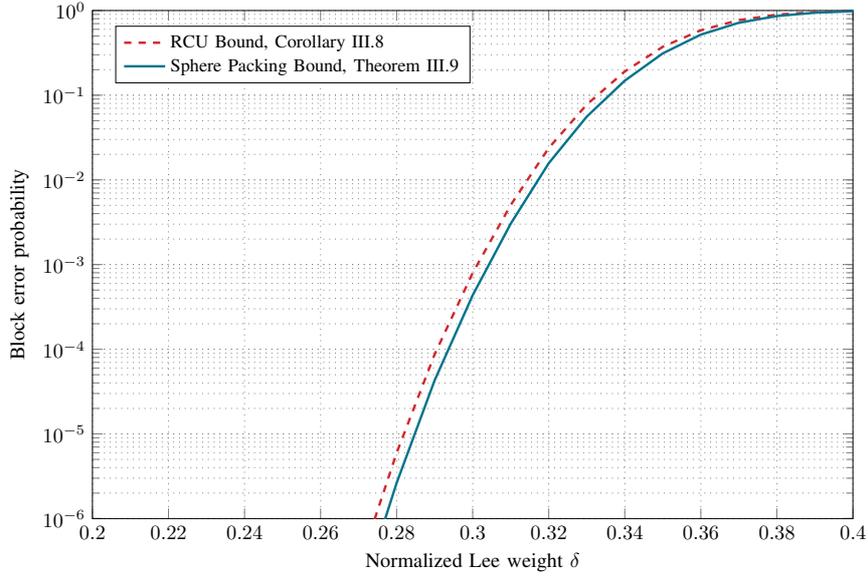
\begin{figure}[]
	\centering
	\begin{tikzpicture}[scale = 0.9, every node/.style={scale=0.85}]
    \begin{semilogyaxis}[
	width=0.7\columnwidth,
        height=0.5\columnwidth,
	grid=both,
	grid style={dotted,gray},
	legend cell align=left,
	legend style={font=\small},
	legend pos=north west,
	mark options={solid},
	mark size=3,
	xmin=0.2,
	xmax=0.4,
	ymin=1e-6,
	ymax=1,
	xlabel={Normalized Lee weight $\delta$},
	ylabel={Block error probability}
	]
	
	\addplot[purple!70!orange,line width = 1pt, dashed] table[x=delta,y=RCU] {RCUSPB.txt};\addlegendentry{RCU Bound, Corollary \ref{cor:RCU_mless}};
	\addplot[teal!90!blue,line width = 1pt] table[x=delta,y=SPB] {RCUSPB.txt};\addlegendentry{Sphere Packing Bound, Theorem \ref{thrm:SPB}};
	\end{semilogyaxis}
\end{tikzpicture}
	\caption{Random coding union (Corollary \ref{cor:RCU_mless}) and sphere-packing bounds (Theorem \ref{thrm:SPB}) for the parameters $n=1024$ and $k=512$ over $\intmodq{7}$.}\label{fig:SPBRCU_Gallager}
\end{figure}
%%%%%%%%%%%%%%%%%%%%%%%%%%%%%%%%%%%%%%%%%%%%%%%%%%%%%%%%%%%%%

%%%%%%%%%%%%%%%%%%%%%%%%%%%%%%%%%%%%%%%%%%%%%%%%%%%%%%%%%%%%%
\section{Lee Weight Spectrum of Regular LDPC Code Ensembles}\label{sec:LDPCdistances}

We now turn our attention to regular LDPC code ensembles over $\intmodq{q}$. We are going to derive the average Lee weight spectrum of a random code $\code \subseteq (\intmodq{q})^n$ from the $(d_v,d_c)$ regular \ac{LDPC} code ensemble. The result will be used in Section \ref{sec:LDPCperformance} to establish an upper bound on the block error probability under \ac{ML} decoding.

For each possible Lee weight $\ell \in \set{0, \ldots, n\floor{q/2}}$ we define the number of codewords of Lee weight $\ell$ as 
$$W_{\ell}^{(n)}(\code):= \card{\set{ \bfc \in \code \st \LW(\bfc) = \ell }}.$$
We are now interested in the number of entries of a certain Lee weight in a codeword, i.e., we are interested in the type in terms of the Lee weight of the codeword. For this we introduce the following definition.
\begin{definition}\label{def:type_codeword}
    For every codeword $\bfc \in \code$ we define its \textit{Lee type} to be the $(\floor{q/2}+1)$-tuple $\bm{\theta}_{\bfc} = \left(\theta_\bfc(0), \ldots, \theta_\bfc(\floor{q/2})\right)$ consisting of the relative fraction of occurrences of each possible Lee weight $\ell \in \set{0, \ldots,\floor{q/2}}$, i.e., 
    \begin{align*}
        \theta_{\bfc}(\ell) = \dfrac{1}{n}\card{ \set{ k = 1, \ldots , n \st \LW(c_k) =\ell} }.
    \end{align*}
\end{definition}
We denote the set of all Lee types over $(\intmodq{q})^n$ by $\mathcal{T}((\intmodq{q})^n)$. Then, we define the number of codewords in a code $\code \subseteq (\intmodq{q})^n$ of Lee type $\bm{\theta} \in \mathcal{T}((\intmodq{q})^n)$ as
$$A_{\bm{\theta}}^{(n)}(\code) := \card{\set{ \bfc \in \code \st \bm{\theta}_{\bfc} = \bm{\theta} }}.$$
Note that we can describe the Lee weight of a codeword $\bfc \in \code$ in terms of its Lee type as
\begin{align*}
    \LW(\bfc) = n \sum_{\ell = 1}^{\floor{q/2}} \ell \theta_{\bfc}(\ell).
\end{align*}
By abuse of notation, we will call this the \textit{Lee weight of the Lee type} $\bm{\theta}_{\bfc}$ and use the notation $\LW(\bm{\theta}_{\bfc})$.
Generally, for a Lee type $\bm{\theta} \in \mathcal{T}((\intmodq{q})^n)$ we define its Lee weight by $\LW(\bm{\theta}) := n \sum_{\ell = 1}^{\floor{q/2}} \ell \theta(\ell)$.
Thus, there is a natural relation between $W_{\ell}^{(n)}(\code)$ and $A_{\bm{\theta}}^{(n)}(\code)$. In fact, we have
\begin{align*}
    W_{\ell}^{(n)}(\code) = \sum_{\substack{\bm{\theta} \in \mathcal{T}\left( (\intmodq{q})^n\right) \\ \LW(\bm{\theta}) = \ell }} A_{\bm{\theta}}^{(n)}(\code).
\end{align*}

In the following, we consider a $(d_v,d_c)$-regular LDPC code $\code$ taken uniformly at random from an ensemble of $(d_v, d_c)$-regular LDPC codes over $\intmodq{q}$. Let $\bfH$ be a parity-check matrix of $\code$ where the nonzero entries of $\bfH$ lie in the set of units $\units{q}$. As $\code$ is a random regular LDPC code, the parity-check matrix $\bfH$ is a random matrix where each row has $d_c$ nonzero entries taken randomly among the unit elements and each column has $d_v$ of them. We consider a randomly chosen $\bfc \in (\intmodq{q})^n$ and denote its Lee type by $\bm{\theta}_{\bfc}$. Recall that $\bfc$ is a codeword if and only if $\bfc\bfH^\top = \bfzero$.

We now briefly discuss what it means for a codeword $\bfc$ of a random LDPC code to satisfy the check equations of a parity-check matrix $\bfH$.
Considering the Tanner graph of a code $\code$, given a codeword $\bfc$ we start by repeating each position $c_i$ exactly $d_{\vn}$ times over the edges connected to the $i$-th variable node. We denote the resulting vector by $\bfz' := (c_1, \ldots , c_1, \ldots, c_n, \ldots, c_n)$. Note that $\bfz'$ is of length $n d_{\vn}$ and is of Lee type $\bm{\theta}_{\bfz'} = \bm{\theta}_{\bfc}$. Let then $\bfu \in (\units{q})^{n d_{\vn}}$ be chosen uniformly at random, i.e., every entry $u_i$ is chosen uniformly at random among the units $\units{q}$. Finally, choosing a random permutation $\Pi$ we compute $\bfz := \Pi( \bfz' \odot \bfu )$. Now, $\bfc$ satisfies $\bfc\bfH^{\top} = \bfzero$ if and only if $\bfz$ satisfies the $m$ check equations induced by rows of $\bfH$.
Figure \ref{fig:random_PC_LDPC} below visualizes this procedure for a random $(d_{\vn}, d_{\cn})$-regular LDPC code.
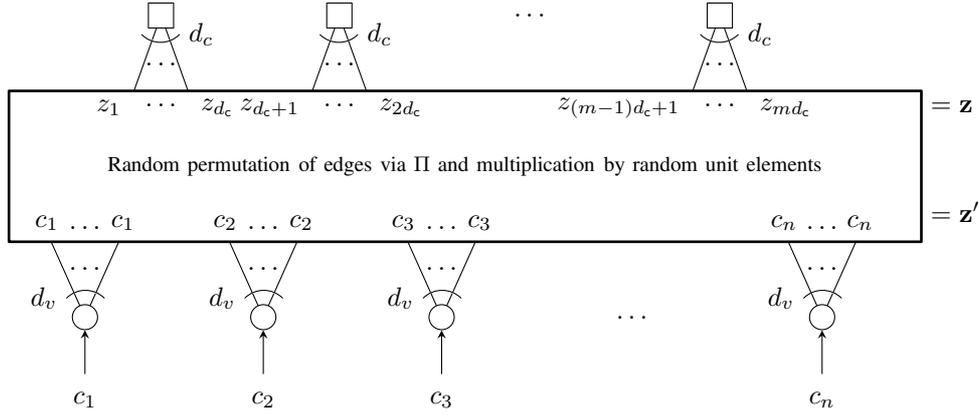
\begin{figure}%[H]
    \centering
    \begin{tikzpicture}[line cap=round,
        line join=round,
        >=triangle 45,
        x=1cm,
        y=1cm,
        thin,
        vnode/.style={draw,circle},
        vnot/.style={draw,circle, color = white},
        cnode/.style={draw,regular polygon,regular polygon sides=4},
        cnot/.style={draw,regular polygon,regular polygon sides=4, color = white}]

    % The Variable Nodes
        \begin{scope}[xshift=0cm, yshift=0cm, start chain=going right, node distance=20mm]
            \foreach \i in {1, 2, 3}
              \node[vnode, on chain] (vn\i) [] {};
            \draw node[on chain] {$\ldots$};
            \node[vnode, on chain] (vn4) [] {};
        \end{scope}

    % Edges from VNs
        \begin{scope}
            \foreach \i in {1, 2, 3, 4}
              \node[above = 1, xshift = -0.5cm] at (vn\i) (vl\i) {};

            \foreach \i in {1, 2, 3, 4}
              \node[above = 0.5] at (vn\i) (vdot\i) {$\ldots$};

            \foreach \i in {1, 2, 3, 4}
              \node[above = 1, xshift = 0.5cm] at (vn\i) (vr\i) {};
        \end{scope}
        
        \draw (vn1) -- (vl1);
        \draw (vn1) -- (vr1);
        \draw (vn2) -- (vl2);
        \draw (vn2) -- (vr2);
        \draw (vn3) -- (vl3);
        \draw (vn3) -- (vr3);
        \draw (vn4) -- (vl4);
        \draw (vn4) -- (vr4);

    % Arcs for the variable node degree
        \begin{scope}
            \foreach \i in {1, 2, 3, 4}
              \node[circle,draw, minimum size = 20, white] (cvn\i) at (vn\i) {};

            \foreach \i in {1, 2, 3, 4}
              \draw (cvn\i.45) arc[start angle=45, end angle=135, radius=0.35] node[left] {$d_v$};
        \end{scope}        

    % Arrows comming from above
        \begin{scope}
            \foreach \i in {1, 2, 3, 4}
                \node[below = 0.75] at (vn\i) (svn\i) {};

            \foreach \i in {1, 2, 3, 4}
                \draw[-stealth] (svn\i) -- (vn\i);

            \foreach \i in {1, 2, 3}
                \node[below] at (svn\i) (c\i) {$c_\i$};
            \node[below] at (svn4) (c4) {$c_n$};
        \end{scope}

    % Draw scrambling box
        \draw[line width=1pt] (-1, 1) -- (-1, 3) -- (11,3) -- (11, 1) -- cycle;
        \draw (5, 2) node {\footnotesize Random permutation of edges via $\Pi$ and multiplication by random unit elements};
        
    % The Check Nodes
        \begin{scope}[xshift=1cm, yshift=4cm, start chain=going right, node distance=20mm]
            \foreach \i in {1, 2}
              \node[cnode, on chain] (cn\i) [] {};
            \draw node[on chain] {$\ldots$};
            \node[cnode, on chain] (cn3) [] {};
        \end{scope}

    % Edges from CNs
        \begin{scope}
            \foreach \i in {1, 2, 3}
              \node[below = 1, xshift = -0.4cm] at (cn\i) (cl\i) {};

            \foreach \i in {1, 2, 3}
              \node[below = 0.5] at (cn\i) (cdot\i) {$\ldots$};

            \foreach \i in {1, 2, 3}
              \node[below = 1, xshift = 0.4cm] at (cn\i) (cr\i) {};
        \end{scope}
        \draw (cn1) -- (cr1);
        \draw (cn1) -- (cl1);
        \draw (cn2) -- (cr2);
        \draw (cn2) -- (cl2);
        \draw (cn3) -- (cr3);
        \draw (cn3) -- (cl3);

    % Arcs for the check node degree
        \begin{scope}
            \foreach \i in {1, 2, 3}
              \node[circle,draw, minimum size = 20, white] (ccn\i) at (cn\i) {};

            \foreach \i in {1, 2, 3}
              \draw (ccn\i.225) arc[start angle=225, end angle=315, radius=0.35] node[right] {$d_c$};
        \end{scope}   

    % Actual example with a codeword
    \begin{scope}
        \foreach \i in {1, 2, 3}
          \node[above = -0.1] at (vl\i) {$c_\i$};
        \node[above = -0.1] at (vl4) {$c_n$};

        \foreach \i in {1, 2, 3}
          \node[above = -0.1] at (vr\i) {$c_\i$};
          \node[above = -0.1] at (vr4) {$c_n$};

        \foreach \i in {1, 2, 3, 4}
          \node[above = 0.4] at (vdot\i) {$\ldots$};
    \end{scope}
    \node[above = 0.4, right] at (11, 1) {$= \bfz'$};

    \node[below = 0.1, left] at (cl1) {$z_1$};
    \node[below = 0.1, right] at (cr1) {$z_{d_{\cn}}$};
    \node[below = 0.1, left] at (cl2) {$z_{d_{\cn}+1}$};
    \node[below = 0.1, right] at (cr2) {$z_{2d_{\cn}}$};
    \node[below = 0.1, left] at (cl3) {$z_{(m -1)d_{\cn} + 1}$};
    \node[below = 0.1, right] at (cr3) {$z_{md_{\cn}}$};
    \begin{scope}
        \foreach \i in {1, 2, 3}
          \node[below = 0.4] at (cdot\i) {$\ldots$};
    \end{scope}
    \node[below = 0.2, right] at (11, 3) {$= \bfz$};
    \end{tikzpicture}
    
    \caption{Graphical representation of a random $(d_{\vn}, d_{\cn})$ LDPC code of length $n$.}
    \label{fig:random_PC_LDPC}
\end{figure}

Having Figure \ref{fig:random_PC_LDPC} in mind, we can say that the average Lee type enumerator of a random LDPC code is given by
\begin{align*}
    \overline{A}_{\bm{\theta}}^{(n)} = \binom{n}{n\bm{\theta}} \prob\left(\bfz \text{ \small satisfies the check equations} \st \bm{\theta}_{\bfc} = \bm{\theta}\right).
\end{align*}
We denote the Lee type of $\bfz$ by $\bm{\omega}_{\bfz}$ in order not to confuse it with the Lee type $\bm{\theta}_{\bfc}$. Note that $\bm{\omega}_{\bfz}$ highly depends on $\bm{\theta}_{\bfc}$. Further discussions and observations follow in Theorem \ref{thm:type2type_general}. For now, let $\mathcal{T}_{\bm{\theta}_{\bfc}}\left( (\intmodq{q})^{n d_{\vn}}\right)$ denote the set of all possible Lee types for a vector $\bfz$ resulting from the Lee type $\bm{\theta}_{\bfc}$. We will denote
\begin{align}
    f^{(n)}(\bm{\omega}  \st  \bm{\theta}) &:= \prob\left(  \bm{\omega}_{\bfz} = \bm{\omega}  \st  \bm{\theta}_{\bfc} = \bm{\theta} \right) \quad \text{and} \label{equ:type2type}\\
    a^{(n)}(\bm{\omega})  &:=  \prob\left(\bfz \text{ \small satisfies the check equations}  \st  \bm{\omega}_{\bfz} = \bm{\omega} \right). \label{equ:CNcodeword}
\end{align} 
Hence, we can further break down the conditional probability as 
\begin{align}\label{equ:averageTypeEnum}
    \overline{A}_{\bm{\theta}}^{(n)} = {\binom{n}{n\bm{\theta}}} \sum_{\bm{\omega} \in \mathcal{T}_{\bm{\theta}}\left( (\intmodq{q})^{n d_{\vn}}\right)} f^{(n)}(\bm{\omega} \st  \bm{\theta}) a^{(n)}(\bm{\omega})
\end{align}
In the following we elaborate more the two probabilities $f^{(n)}(\bm{\omega} \st  \bm{\theta})$ and $a^{(n)}(\bm{\omega})$.

%%%%%%%%%%%%%%%%%%%%%%%%%%%%%%%%%%%%%%%%%%%%%%%%%%%%%%%%%%%%%%%%%
%%%%%%%%%%%%%%%%%%%%%%%%%%%%%%%%%%%%%%%%%%%%%%%%%%%%%%%%%%%%%%%%%
\subsection{Transformation of the Lee Type}
We start by analyzing how the Lee type of $\bfc$ changes to the Lee type of the vector $\bfz$. More precisely, we now study the probability $f^{(n)}(\bm{\omega} \st \bm{\theta})$ that the vector $\bfz$ has a Lee type $\bm{\omega}_{\bfz} = \bm{\omega}$ given that the Lee type of the codeword $\bfc$ is $\bm{\theta}_{\bfc} = \bm{\theta}$. Recall that $\bfz\in (\intmodq{q})^{n d_{\vn}}$ is formed from $\bfc$ by repeating the entries $c_i$ each $d_{\vn}$ times and then multiplying each copy by a randomly chosen unit. This already implies that the fraction of zeros in $\bfz$ must be equal to the fraction of zeros in $\bfc$. Focusing on the nonzero entries of $\bfc$ we have to treat several cases separately, as the multiplication of a random nonzero element $x \in \intmodq{q}$ by a random unit $u \in \units{q}$ lies in different orbits.\\

Note that the group of units $\units{q}$ acts under multiplication on $\intmodq{q}$. For an element $a \in \intmodq{q}$ we define its orbit $\mathcal{O}_a$ as
\begin{align}\label{equ:orbit}
    \orbit{a} :=\set{ a \cdot u \st u \in \units{q} }.
\end{align}
Orbits induce an equivalence relation, i.e., two elements are equivalent if and only if they lie within the same orbit. Each orbit can be represented by a divisor of $q$. Let $\divisors{q}$ denote the set of divisors of $q$, i.e., 
\begin{align*}
    \divisors{q} := \set{ \ell \in \NN \; : \; \ell \st q }.
\end{align*}
Then the distinct orbits are given by $\orbit{d}$ for $d\in \divisors{q}$. 
\begin{example}
    We consider the integer residue ring $\intmodq{10}$. The set of divisors is given by
    \begin{align*}
        \mathbb{D}_{10} = \set{1, 2, 5, 10}.
    \end{align*}
    Hence, there are four orbits defined by the divisors of ten, namely,
    \begin{gather*}
        \orbit{1} = \units{10} = \set{1, 3, 7, 9}, \ \orbit{2} = \set{2, 4, 6, 8}, \\ \orbit{5} = \set{5} \ \text{and} \ \orbit{0} := \orbit{10} = \set{0}.
    \end{gather*}
\end{example}
By the definition of an orbit in \eqref{equ:orbit}, we observe that if an element $a$ lies in a given orbit $\orbit{d}$ then every multiple of $a$ by a unit element is in the same orbit. Hence, a codeword $\bfc$ and a vector $\bfz$ resulting from $\bfc$ have the same fraction of elements in an orbit $\orbit{d}$ for every divisor $d\in \divisors{q}$. For a codeword $\bfc$ with Lee type $\bm{\theta}_{\bfc}$ and for every $d \in \divisors{q}$ the fraction of elements in orbit $\orbit{d}$ is denoted as
\begin{align}\label{equ:fractionOrbits}
    \theta_{\bfc}(\orbit{d}) := \sum_{\substack{a \in \orbit{d} \\ a \leq \floor{q/2}}} \theta_{\bfc}(a).
\end{align}
The tuple of all such fractions is denoted by
\begin{align*}
    \bm{\theta}_{\bfc, \orbit{}} := \left( \theta_{\bfc}(\orbit{d_1}), \ldots , \theta_{\bfc}(\orbit{d_{\card{\divisors{q}}}}) \right).
\end{align*}

Regarding the Lee metric, we can prove that two elements of the same Lee weight are equivalent.
\begin{lemma}\label{lem:sameLee_sameOrbit}
    Elements of the same Lee weight in $\intmodq{q}$ lie in the same orbit, i.e., for every $a \in \intmodq{q}$ we have $\orbit{a} = \orbit{q-a}$.
\end{lemma}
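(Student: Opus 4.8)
The plan is to reduce the statement to the single algebraic fact that $-1$ is a unit of $\intmodq{q}$. First I would observe that the only elements of $\intmodq{q}$ sharing the Lee weight of a given $a$ are $a$ itself and $q-a$: indeed, by Definition~\ref{def:leeweight} we have $\LW(a) = \min(a, q-a)$, so if $\LW(b) = \LW(a) =: w$, then each of $a$ and $b$ lies in $\set{w, q-w}$, which forces $b \in \set{a, q-a}$. Hence it suffices to prove $\orbit{a} = \orbit{q-a}$ for every $a \in \intmodq{q}$, which is exactly the assertion displayed in the lemma.

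Next I would note that in $\intmodq{q}$ one has $q - a \equiv -a = (-1)\cdot a$, and that $-1 \in \units{q}$, since $(-1)\cdot(-1) = 1$. Therefore $q-a = (-1)\cdot a$ belongs to $\orbit{a}$ by the definition~\eqref{equ:orbit} of the orbit, and consequently $\orbit{q-a} \subseteq \orbit{a}$ because orbits are the equivalence classes of the equivalence relation induced by the $\units{q}$-action (any two elements lying in a common orbit have the same orbit). Applying the same argument with $q-a$ in place of $a$, using $-(q-a) = a$, gives the reverse inclusion, so $\orbit{a} = \orbit{q-a}$.

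Combining the two steps yields the claim: any two elements of $\intmodq{q}$ of equal Lee weight are either equal or negatives of one another, and in both cases they generate the same orbit under multiplication by $\units{q}$. I expect essentially no obstacle here; the only point to check is the degenerate cases $a = 0$ (where $q-a = 0$, so the statement is trivial) and, when $q$ is even, $a = q/2$ (where $q-a = a$, again trivial), both of which are in any case subsumed by the identity $q - a = (-1)\cdot a$ and the fact that $-1$ is a unit.
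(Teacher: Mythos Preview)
Your proof is correct and follows essentially the same approach as the paper: both arguments reduce to the observation that $-1 \in \units{q}$, so that $q-a = (-1)\cdot a \in \orbit{a}$ and hence $\orbit{a} = \orbit{q-a}$. The paper phrases this as an element-chasing inclusion $\orbit{q-a} \subseteq \orbit{a}$ (picking $b = u(q-a)$ and noting $-u$ is a unit), while you invoke the equivalence-class property of orbits directly; the underlying idea is identical.
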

\begin{proof}
    Let $a \in \intmodq{q}$. By symmetry of the Lee weight, $q-a$ is the only element having the same Lee weight as $a$. Let $b \in \orbit{q-a}$ be arbitray. By the definition of an orbit (see Equation \eqref{equ:orbit}), there exists a unit element $u \in \units{q}$ such that $b \equiv u(q-a) \equiv -ua \mod q$. Since $(-1)$ and $u$ are units, also $(-u)$ is a unit modulo $q$ and thus $b \in \orbit{a}$. Since $b$ was chosen arbitrarily, we have $\orbit{q-a} = \orbit{a}$.
\end{proof}

Lemma \ref{lem:sameLee_sameOrbit} indicates that we only have to consider elements up to $\floor{q/2}$. If $q$ is odd, then zero is the only element of Lee weight $0$. All other weights in this case are represented by two elements. If instead $q$ is even additionally the Lee weight $\floor{q/2}$ is represented only by one element, namely $\floor{q/2}$ itself. This fact is important when studying the number of configurations of a fixed Lee weight. Given the Lee type $\bm{\theta}_\bfx$ of a vector $\bfx$ we denote the fraction of Lee weights with only one representative element by
\begin{align*}
    \widehat{\bm{\theta}}_{\bfx} :=
    \begin{cases}
        1 - \theta_{\bfx}(0) & \text{if } q \text{ is odd},\\
        1 - \theta_{\bfx}(0) - \bm{\theta}_{\bfx}(\floor{q/2}). & \text{if } q \text{ is even}.
    \end{cases}
\end{align*}
We are then able to state the result on the expression for the probability $f^{(n)}(\bm{\omega}\st\bm{\theta})$ over $\intmodq{q}$.
\begin{theorem}\label{thm:type2type_general}
    Consider a random $\bfc$ of Lee type $\bm{\theta}_c$. Let $\bfz \in (\intmodq{q})^{n d_{\vn}}$ be the resulting vector when repeating the entries of $\bfc$ $d_{\vn}$ times and multiplying each position by a randomly chosen unit element. Furthermore, we denote by $\bm{\omega}_{\bfz}$ the Lee type of $\bfz$ and we define the set
    \begin{align*}
        \mathcal{T}_{\bm{\theta}_{\bfc}}\left( (\intmodq{q})^{n d_{\vn}}\right) := \{ \bm{\omega} \in \mathcal{T}((\intmodq{q})^{n d_{\vn}}) \st \;\omega(\orbit{d}) = \theta_{\bfc}(\orbit{d}) \, \forall d \in \divisors{q} \}.
    \end{align*} Let $\divisors{q} = \set{d_1, \ldots, d_r}$ be the set of divisors of $q$. If $\bm{\omega}_{\bfz} \in \mathcal{T}_{\bm{\theta}_{\bfc}}\left( (\intmodq{q})^{n d_{\vn}}\right)$, then we have
    \begin{align}\label{equ:type2type_prob}
        f^{(n)}(\bm{\omega}_{\bfz} \st \bm{\theta}_{\bfc}) =
            \frac{\binom{n d_{\vn} }{n d_{\vn}\bm{\omega}_{\bfz}} 2^{n d_{\vn}\widehat{\bm{\omega}_{\bfz}}}}{\binom{n d_{\vn}}{n d_{\vn}  \bm{\theta}_{\bfc, \orbit{}}} \prod_{d \in \divisors{q}} \card{\orbit{d}}^{n d_{\vn} \theta_{\bfc}(\orbit{d})} } 
    \end{align}
    If not, then the probability $f^{(n)}(\bm{\omega}_{\bfz} \st \bm{\theta}_{\bfc})$ is zero.
\end{theorem}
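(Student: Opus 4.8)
The plan is to discard the random permutation $\Pi$ at the outset — it leaves the Lee type unchanged — and to reduce the computation of $f^{(n)}(\bm{\omega}_{\bfz}\st\bm{\theta}_{\bfc})$ to a counting problem over the $n d_{\vn}$ coordinates of $\bfz' \odot \bfu$. First I would record two structural facts. (i) For any $a\in\intmodq q$ and $u\in\units q$ we have $au\in\orbit a$, so multiplication by a unit never moves a coordinate out of its orbit; hence the number of coordinates of $\bfz$ lying in $\orbit d$ equals $N_d:=n d_{\vn}\,\theta_{\bfc}(\orbit d)$ \emph{deterministically}, and if $\bm{\omega}_{\bfz}$ fails $\omega(\orbit d)=\theta_{\bfc}(\orbit d)$ for some $d\in\divisors q$ then $f^{(n)}(\bm{\omega}_{\bfz}\st\bm{\theta}_{\bfc})=0$, i.e.\ $\bm\omega_{\bfz}\notin\mathcal{T}_{\bm{\theta}_{\bfc}}((\intmodq q)^{nd_{\vn}})$. (ii) For fixed $a\in\orbit d$ and $u$ uniform on $\units q$, the element $au$ is uniform on $\orbit d$: the map $u\mapsto au$ from $\units q$ onto $\orbit d$ has all fibres of equal size $\card{\{v\in\units q:av=a\}}$ — if $au_0=au_1$ then, since $u_1$ is a unit, $u_0u_1^{-1}$ fixes $a$ — so every element of $\orbit d$ is attained with probability $1/\card{\orbit d}$. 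As the entries of $\bfu$ are i.i.d.\ uniform on $\units q$, the coordinates of $\bfz' \odot \bfu$ are independent, and within the (fixed) positions of a given orbit each coordinate is uniform over that orbit.

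With this description, $f^{(n)}(\bm{\omega}_{\bfz}\st\bm{\theta}_{\bfc})$ equals the number of coordinate assignments of Lee type $\bm{\omega}_{\bfz}$ that are compatible with the fixed orbit positions, divided by the total number of assignments compatible with those positions. The denominator is immediate: the $N_d$ coordinates in $\orbit d$ each range independently over the $\card{\orbit d}$ elements of $\orbit d$, giving $\prod_{d\in\divisors q}\card{\orbit d}^{\,nd_{\vn}\theta_{\bfc}(\orbit d)}$. I would count the numerator in two stages. \emph{Stage one:} inside each orbit $\orbit d$, choose which coordinates receive which Lee weight; by Lemma~\ref{lem:sameLee_sameOrbit} the Lee weights realized inside $\orbit d$ partition $\{0,\dots,\floor{q/2}\}$ as $d$ ranges over $\divisors q$, so the count is $\prod_{d}\binom{N_d}{(nd_{\vn}\omega_{\bfz}(\ell))_{\ell\in\orbit d}}$, and cancelling $(nd_{\vn})!$ and regrouping the remaining factorials by orbit shows this product is exactly $\binom{nd_{\vn}}{nd_{\vn}\bm{\omega}_{\bfz}}\big/\binom{nd_{\vn}}{nd_{\vn}\bm{\theta}_{\bfc,\orbit{}}}$. \emph{Stage two:} with the Lee weight of each coordinate fixed, pick the actual element achieving that weight; a Lee weight $\ell$ has the two representatives $\ell,q-\ell$ unless $\ell=0$ or (for $q$ even) $\ell=\floor{q/2}$, where it has a single one, so there are $2^{m}$ choices, with $m$ the number of coordinates whose weight has two representatives, i.e.\ $m=nd_{\vn}\widehat{\bm{\omega}_{\bfz}}$ by the definition of the hat operation. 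Multiplying the two stage counts and dividing by the denominator yields \eqref{equ:type2type_prob}.

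The factorial regrouping in Stage one and the case check for single- versus double-representative weights are routine; the one genuinely substantive point, and the step I would take the most care with, is fact (ii) together with the observation that it makes the within-orbit law of $\bfz$ exactly the uniform i.i.d.\ law on the orbit — this is what converts the probability into the clean combinatorial ratio above. I would also be explicit that each $N_d$ and each $nd_{\vn}\omega_{\bfz}(\ell)$ is a nonnegative integer, which holds because $\bm{\theta}_{\bfc}$ is the Lee type of a length-$n$ vector and $\bm{\omega}_{\bfz}\in\mathcal{T}((\intmodq q)^{nd_{\vn}})$, so that all the multinomial coefficients appearing above are well defined.
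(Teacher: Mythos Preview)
Your proposal is correct and follows essentially the same approach as the paper: orbit invariance under unit multiplication forces the zero case, uniformity of $au$ on $\orbit a$ turns the probability into a ratio of counts, and the numerator/denominator are then computed via multinomials and the $2^{nd_{\vn}\widehat{\bm\omega}}$ factor for double representatives. The only organisational difference is that you discard the permutation $\Pi$ and count with the orbit positions \emph{fixed}, which introduces the small extra algebraic step of rewriting $\prod_d\binom{N_d}{(nd_{\vn}\omega(\ell))_{\ell}}$ as $\binom{nd_{\vn}}{nd_{\vn}\bm{\omega}}\big/\binom{nd_{\vn}}{nd_{\vn}\bm{\theta}_{\bfc,\orbit{}}}$; the paper instead counts with positions free, so the $\binom{nd_{\vn}}{nd_{\vn}\bm{\theta}_{\bfc,\orbit{}}}$ factor appears directly in the denominator---but the logic and the key observations are the same.
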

\begin{proof}
    Assume the Lee type $\bm{\theta}_{\bfc}$ of $\bfc$ is given by $\bm{\theta}$ and let the Lee type $\bm{\omega}_{\bfz}$ be equal to $\bm{\omega}$. By the above discussion, when multiplying an element $a$ of a given orbit $\orbit{d}$ with a randomly chosen unit $u \in \units{d}$, the product is still an element of $\orbit{d}$. In fact, $au$ can take each element of $\orbit{d}$ with the same probability. Therefore, $\bfz$ must have the same fraction of elements in orbit $\orbit{d}$ as the codeword $\bfc$ which also yields, that $f^{(n)}(\bm{\omega} \st \bm{\theta}) = 0$ if this is not fulfilled.

    Let us assume then that for every divisor $d$ of $q$ it holds that $\omega(\orbit{d}) = \theta(\orbit{d})$. The probability that $\bm{\omega}_{\bfz} = \bm{\omega}$ given that $\bm{\theta}_{\bfc} = \bm{\theta}$ is given by the number of vectors of length $n d_{\vn}$ over $\intmodq{q}$ of Lee type $\bm{\omega}$ divided by the total number of vectors of a Lee types fulfilling the constraint on the fraction of orbit elements. The number of configurations of vectors with Lee type $\bm{\omega}$ is given by the multinomial coefficient
    \begin{align*}
        \binom{n d_{\vn} }{n d_{\vn}\bm{\omega}} = \binom{n d_{\vn} }{n d_{\vn} \omega(0), \ldots , n d_{\vn} \omega(\floor{q/2})}.
    \end{align*}
    Since the Lee type gives rise only to the number of elements of a certain Lee weight, we must consider Lee weights reached by two different elements. We hence have to multiply the multinomial coefficient by a power of $2$ considering the two options for Lee weights admitting two representative elements given by $2^{n d_{\vn} \widehat{\bm{\omega}}}$. This yields us the numerator of the probability $f^{(n)}(\bm{\omega} \st \bm{\theta})$ and hence the number of vectors $\bfv \in (\intmodq{q})^{n d_{\vn}}$ of Lee type $\bm{\omega}$.

    We are now interested in finding the number of vectors $\bfv \in (\intmodq{q})^{n d_{\vn}}$ of Lee type $\bm{\omega}_{\bfv, }$ satisfying $\bm{\omega}_{\bfv, \orbit{}} (\orbit{d}) = \bm{\theta}_{\orbit{}}$. This number splits into two quantities: first, focusing only on the orbits, the number of constellation of the orbits, and second the number of choices in each orbit. The first quantity is again given by a multinomial coefficient regarding the fraction of elements in orbit $\orbit{d}$ for every $d \in \divisors{q}$ given in \eqref{equ:fractionOrbits}. To obtain the latter quantity we raise the cardinality of the orbit $\orbit{d}$ to the power of the number of positions with elements in that orbit. Combining the results yields the denominator and hence, the desired result on the probability $f^{(n)}(\bm{\omega} \st \bm{\theta})$.
\end{proof}

Note that if $q$ is a prime number, there are only two orbits; one containing only the zero element, and one corresponding to the set of units modulo $q$ (which are all nonzero elements). Then the expression in Theorem \ref{thm:type2type_general} simplifies to
\begin{align*}
    f^{(n)}(\bm{\omega} \st \bm{\theta}) =
    \begin{cases}
        \frac{2^{n d_{\vn}\widehat{\bm{\omega}}}}{(q-1)^{n d_{\vn} (1-\theta(0))}} & \text{if } \omega(0) = \theta(0) \\
        0 & \text{otherwise}.
    \end{cases}
\end{align*}

Furthermore, there is a closed form for the cardinalities of the orbits which allows for a simple implementation of the formula given in Theorem \ref{thm:type2type_general}.

\begin{lemma}
    Let $q$ be a positive integer and let $\divisors{q}$ be the set of divisors of $q$. Furthermore, let $\varphi(\cdot)$ denote the Euler totient function. Then, for every $d \in \divisors{q}$ the cardinality of its orbit is given by
    \begin{align*}
        \card{\orbit{d}} = \varphi(q/d).
    \end{align*}
\end{lemma}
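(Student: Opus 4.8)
The plan is to count, for a fixed divisor $d \in \divisors{q}$, the number of elements $a \in \intmodq{q}$ whose orbit under the multiplicative action of $\units{q}$ is exactly $\orbit{d}$, and to show directly that this orbit coincides with the set $\set{ d \cdot u \st u \in \units{q}}$ reduced modulo $q$. First I would observe that every element of $\orbit{d}$ is a multiple of $\gcd(d, q) = d$, since $d \st q$; more precisely I claim $\orbit{d} = \set{ a \in \intmodq{q} \st \gcd(a, q) = d }$. The inclusion "$\subseteq$" is immediate because multiplication by a unit does not change the gcd with $q$. For "$\supseteq$", if $\gcd(a, q) = d$, write $a = d a'$ and $q = d q'$ with $\gcd(a', q') = 1$; I would then lift $a'$ to a unit modulo $q$ using the fact that any residue coprime to $q/d$ can be completed to one coprime to $q$ (an easy CRT argument over the prime-power factorization of $q$), giving $a \in \orbit{d}$.

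Given that characterization, counting $\card{\orbit{d}}$ reduces to counting the residues $a \in \set{0, 1, \dots, q-1}$ with $\gcd(a, q) = d$. Writing $a = d b$ with $0 \le b < q/d$, the condition $\gcd(d b, q) = d$ is equivalent to $\gcd(b, q/d) = 1$, so the count is exactly the number of $b$ in $\set{0, \dots, q/d - 1}$ coprime to $q/d$, which by definition is $\varphi(q/d)$. This gives $\card{\orbit{d}} = \varphi(q/d)$.

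As a consistency check I would verify the orbit-stabilizer/partition identity: the orbits $\orbit{d}$ for $d \in \divisors{q}$ partition $\intmodq{q}$, so $\sum_{d \in \divisors{q}} \varphi(q/d) = q$, which is just the classical divisor-sum formula $\sum_{d' \st q} \varphi(d') = q$ after reindexing $d' = q/d$. One should also double-check the boundary conventions used elsewhere in the paper, namely $\orbit{q} = \orbit{0} = \set{0}$, which matches $\varphi(q/q) = \varphi(1) = 1$.

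The only genuinely nontrivial step is the lifting argument in the "$\supseteq$" inclusion — showing that a residue coprime to $q/d$ extends to a unit modulo $q$ in the coset $d \cdot (\text{that residue class})$. This is handled cleanly by the Chinese Remainder Theorem: factor $q = \prod_i p_i^{e_i}$, and note that $d = \prod_i p_i^{f_i}$ with $0 \le f_i \le e_i$; on each prime-power component one needs a unit whose value matches the prescribed one modulo $p_i^{e_i - f_i}$, and such a unit always exists because the reduction map $\units{p_i^{e_i}} \to \units{p_i^{e_i - f_i}}$ is surjective (for $f_i < e_i$) while the component is unconstrained when $f_i = e_i$. Everything else is routine bookkeeping.
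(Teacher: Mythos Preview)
Your argument is correct, but it follows a different path from the paper's. The paper invokes the orbit--stabilizer theorem: it writes $\card{\orbit{d}} = \card{\units{q}}/\card{\mathrm{Stab}(d)}$, identifies $\mathrm{Stab}(d)$ with the kernel of the reduction map $f\colon \units{q}\to\units{(q/d)}$, and uses surjectivity of $f$ (stated, not proved) to get $\card{\ker f}=\varphi(q)/\varphi(q/d)$. You instead give a concrete description of the orbit as $\set{a\in\intmodq{q}\st\gcd(a,q)=d}$ and count it directly. Both routes ultimately lean on the same fact---surjectivity of $\units{q}\to\units{(q/d)}$---but you actually justify it via CRT on prime-power components, whereas the paper simply asserts it. Your approach is more elementary and self-contained, and it yields the explicit characterization of $\orbit{d}$ as the gcd-class of $d$, which is informative in its own right; the paper's approach is shorter if one is willing to quote the group-theoretic machinery.
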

\begin{proof}
    To compute the cardinality of the orbit $\orbit{d}$, we make use of Lagrange's theorem \cite[Theorem 1.8]{grove1983algebra} which yields
    \begin{align}\label{equ:stabilizerTHM}
        \card{\orbit{d}} = \frac{\card{\units{q}}}{\card{\mathrm{Stab}(d)}}
    \end{align}
    where $\mathrm{Stab}(d)  = \set{a \in \units{q} \st a d = d \mod q}$ is the stabilizer of $d$. Using that $d\st q$ and the definition of modular equality, we can rewrite $\mathrm{Stab}(d)$ as
    \begin{align}\label{equ:stabilizer}
        \mathrm{Stab}(d) &= \set{a \in \units{q} \st  q | (a-1)d} \\
        &= \set{a \in \units{q} \st  (q/d) | (a-1)}.
    \end{align}
    Since $\intmodq{q}$ is an additive group and since $d \st q$, $d$ is of order $\mathrm{ord}_{\intmodq{q}}(d) = q/d =: D$. Additionally, every element in $\orbit{d}$ is of the same order $D$. Consider now the canonical map $f:  \units{q}  \longrightarrow \units{D}$ defined by reducing the elements $x \in \units{q}$ modulo $D$. Since $D \st q$, the function $f$ is surjective and therefore, by the fundamental homomorphism theorem \cite[Theorem 1.11]{grove1983algebra}, we obtain 
    \begin{align*}
        \card{\ker(f)} = \frac{\card{\units{q}}}{\card{\units{D}}} = \frac{\varphi(q)}{\varphi(D)}.
    \end{align*}
    Finally, we note that
    \begin{align*}
        \ker(f) &= \set{a \in \units{q} \st a - 1= 0 \mod D} \\
        &= \set{a \in \units{q} \st  D | (a-1)}
    \end{align*}
    which corresponds exactly to the stabilizer in \eqref{equ:stabilizer}. Hence, using $\card{\mathrm{Stab}(d)} = \frac{\varphi(q)}{\varphi(q/d)}$ and $\card{\units{q}} = \varphi(q)$ in \eqref{equ:stabilizerTHM} yields the desired result.
\end{proof}

\begin{example}
    To illustrate \eqref{equ:type2type_prob} presented in Theorem \ref{thm:type2type_general} consider the following example over $\intmodq{16}$. Note that $\intmodq{16}$ consists of the following five orbits:
    \begin{gather*}
        \orbit{16} = \set{0}, \; \orbit{1} = \units{16}, \; \orbit{2} = \set{ 2, 6, 10, 14 },  \orbit{4} =\set{4, 12}\; \text{and} \; \orbit{8} = \set{8}.
    \end{gather*}
    Let $\code \subset (\intmodq{16})^2$ be a regular code with regular variable node degree $d_{\vn} = 2$. Let $\bfc \in \code$ be a codeword of Lee type $\bm{\theta}_\bfc = (0, 0, 1/2, 0, 1/2, 0, 0, 0, 0)$. Without loss of generality, we can assume that $\bfc = (2, 4)$. Following the procedure described by Figure \ref{fig:random_PC_LDPC} yields
    $$\bfz' = (2, 2, 4, 4).$$
    When multiplying each of the entries by a randomly chosen unit, we observe that $\bfz$ can be one of the following vectors (up to permutation and multiplication by $\pm 1$)
    \begin{align*}
        (2, 2, 4, 4),\; (2, 6, 4, 4),\; \text{and}\; (6, 6, 4, 4).
    \end{align*}
    Hence, the possible types for $\bfz$ are
    \begin{align*}
        &\bm{\omega}^{(1)} = (0, 0, 1/2, 0, 1/2, 0, 0, 0, 0), \\ &\bm{\omega}^{(2)} = (0, 0, 1/4, 0, 1/2, 0, 1/4, 0, 0) \;\text{and}\\ &\bm{\omega}^{(3)} = (0, 0, 0, 0, 1/2, 0, 1/2, 0, 0).
    \end{align*}
    The number of permutations for each case is given by the multinomial coefficient with respect to the Lee type $\bm{\omega}^{(i)}$. For instance, the vector $(2, 2, 4, 4)$ admits $6$ permutations, i.e.,
    \begin{align*}
         \binom{n d_{\vn} }{n d_{\vn} \bm{\omega}^{(1)}(0), \ldots , n d_{\vn} \bm{\omega}^{(1)}(8)} 
        =  \binom{2\cdot 2}{2\cdot2\cdot(1/2), 2\cdot2\cdot(1/2)}
        = \frac{4!}{2!2!} = 6.
    \end{align*}
    Since the Lee type focuses on the Lee weight only and since every nonzero entry different from $\floor{q/2}$ admits two representatives, we have two possible entries for each position. In the case of Lee type $\bm{\omega}^{(1)}$ we would hence have $6\cdot 16 = 96$ possible vectors of that type. Similarly, we have $96$ vectors of Lee type $\bm{\omega}^{(3)}$ and $192$ vectors of Lee type $\bm{\omega}^{(2)}$. This yields a total of $384$ vectors. Note that this indeed coincides with
    \begin{align*}
        \binom{n d_{\vn}}{n d_{\vn}  \theta_{\bfc}(\orbit{1}), \ldots , n d_{\vn} \theta_{\bfc}(\orbit{16})} \prod_{d \in \divisors{q}} \card{\orbit{d}}^{n d_{\vn} \theta_{\bfc}(\orbit{d})}= \binom{4}{2}\card{\orbit{2}}^2\card{\orbit{4}}^2 
        = 384.
    \end{align*}
    Thus, the probability that $\bfz$ has Lee type $\bm{\omega}^{(1)}$ given that the Lee type of the codeword $\bfc$ is $\bm{\theta}_\bfc$ is $f^{(n)}(\bm{\omega}^{(1)} \st \bm{\theta}_\bfc) = \frac{96}{384} = \frac{1}{4}$.
\end{example}

Consequently to Theorem \ref{thm:type2type_general} we determine the asymptotic growth rate of $f^{(n)}(\bm{\omega} \st \bm{\theta})$ in Corollary \ref{cor:generating_probF_asympt}
\begin{corollary}\label{cor:generating_probF_asympt}
    Let $\bfz \in (\intmodq{q})^{n d_{\vn}}$ be the vector resulting from a vector $\bfc \in (\intmodq{q})^n$ of Lee type $\bm{\theta}$ after repetition and permutation. Then we obtain the following asymptotic expression for the probability that $\bfz$ is of Lee type $\bm{\omega}$
    \begin{align*}
        \phi(\bm{\omega} \st \bm{\theta}) :&= \lim_{n \tendsto \infty} \frac{1}{n}\log_2( f^{(n)}(\bm{\omega} \st \bm{\theta})) \\
        &= d_{\vn}  \Big(  H(\bm{\omega}) + \widehat{\bm{\omega}} - H(\bm{\theta}_{\orbit{}}) - \sum_{d \in \divisors{q}} \theta(\orbit{d})\log_2(\card{\orbit{d}}) \Big) .
    \end{align*}
\end{corollary}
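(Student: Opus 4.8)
The plan is to start from the exact finite-length identity \eqref{equ:type2type_prob} in Theorem~\ref{thm:type2type_general}, apply $\tfrac1n\log_2(\cdot)$, and pass to the limit term by term, working under the standing assumption $\bm{\omega}\in\mathcal{T}_{\bm{\theta}}\!\left((\intmodq{q})^{nd_{\vn}}\right)$ (otherwise $f^{(n)}(\bm{\omega}\st\bm{\theta})=0$ and both sides of the claimed equality are $-\infty$, so there is nothing to prove). Since $f^{(n)}(\bm{\omega}\st\bm{\theta})$ is a quotient of products, the logarithm decomposes additively:
\begin{align*}
    \frac1n\log_2 f^{(n)}(\bm{\omega}\st\bm{\theta})
    = \frac1n\log_2\binom{nd_{\vn}}{nd_{\vn}\bm{\omega}}
    + d_{\vn}\widehat{\bm{\omega}}
    - \frac1n\log_2\binom{nd_{\vn}}{nd_{\vn}\bm{\theta}_{\orbit{}}}
    - d_{\vn}\sum_{d\in\divisors{q}}\theta(\orbit{d})\log_2\card{\orbit{d}},
\end{align*}
where $\bm{\theta}_{\orbit{}}=\bm{\theta}_{\bfc,\orbit{}}$ is the orbit-fraction tuple. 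The second and fourth terms are already exact: $\tfrac1n\log_2 2^{nd_{\vn}\widehat{\bm{\omega}}}=d_{\vn}\widehat{\bm{\omega}}$ and $\tfrac1n\log_2\prod_{d}\card{\orbit{d}}^{nd_{\vn}\theta(\orbit{d})}=d_{\vn}\sum_{d}\theta(\orbit{d})\log_2\card{\orbit{d}}$, so they contribute to the limit with no estimation at all.

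The only step requiring an argument is the asymptotics of the two multinomial coefficients. For this I would invoke the standard two-sided bound \eqref{equ:bounds_multinomial}, applied with block length $m:=nd_{\vn}$ and empirical distribution $\bm{\omega}$ (alphabet size $\floor{q/2}+1$), which gives
\begin{align*}
    \frac{1}{(nd_{\vn}+1)^{\floor{q/2}+1}}\,2^{\,nd_{\vn}H(\bm{\omega})}
    \;\le\; \binom{nd_{\vn}}{nd_{\vn}\bm{\omega}}
    \;\le\; 2^{\,nd_{\vn}H(\bm{\omega})},
\end{align*}
and the analogous bound for $\bm{\theta}_{\orbit{}}$ with alphabet size $\card{\divisors{q}}$. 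Taking $\tfrac1n\log_2$ of these inequalities, the polynomial prefactor contributes $-\tfrac{\floor{q/2}+1}{n}\log_2(nd_{\vn}+1)$, which tends to $0$ because $q$ (hence the exponent) is fixed; the squeeze theorem then yields
\begin{align*}
    \frac1n\log_2\binom{nd_{\vn}}{nd_{\vn}\bm{\omega}}\tendsto d_{\vn}H(\bm{\omega})
    \qquad\text{and}\qquad
    \frac1n\log_2\binom{nd_{\vn}}{nd_{\vn}\bm{\theta}_{\orbit{}}}\tendsto d_{\vn}H(\bm{\theta}_{\orbit{}}).
\end{align*}
Substituting these two limits into the decomposition above gives exactly
\begin{align*}
    \phi(\bm{\omega}\st\bm{\theta})
    = d_{\vn}\Big(H(\bm{\omega}) + \widehat{\bm{\omega}} - H(\bm{\theta}_{\orbit{}}) - \sum_{d\in\divisors{q}}\theta(\orbit{d})\log_2\card{\orbit{d}}\Big).
\end{align*}

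The main obstacle is really only bookkeeping rather than substance: one must check that the arguments of the multinomial coefficients are integer vectors along the sequence of $n$'s under consideration — which is automatic, since $\bm{\omega}$ is drawn from $\mathcal{T}_{\bm{\theta}}\!\left((\intmodq{q})^{nd_{\vn}}\right)$ and $\bm{\theta}_{\orbit{}}$ is inherited from a length-$n$ type, so both $nd_{\vn}\bm{\omega}$ and $nd_{\vn}\bm{\theta}_{\orbit{}}$ are integral — and that normalizing by $n$ instead of by $m=nd_{\vn}$ in \eqref{equ:bounds_multinomial} simply produces the constant factor $d_{\vn}$ in front of each entropy, as reflected in the stated formula. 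No deeper difficulty arises.
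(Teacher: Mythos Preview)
Your proposal is correct and follows exactly the approach the paper takes: the paper's proof consists of the single sentence ``The proof follows by taking the limit of each summand,'' and you have carried out precisely that computation in full detail, using the multinomial bound \eqref{equ:bounds_multinomial} to handle the two nontrivial terms. If anything, your write-up is more explicit than the paper's, and the bookkeeping remarks you make about integrality and the factor $d_{\vn}$ are the same ones needed to make the one-line proof rigorous.
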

\begin{proof}
    The proof follows by taking the limit of each summand.
\end{proof}
Moreover, Lemma \ref{lem:uniform_fn} shows us an even stronger form of convergence.
\begin{lemma}\label{lem:uniform_fn}
    Given a random regular $(d_{\vn}, d_{\cn})$ LDPC code over $\intmodq{q}$ and a Lee type $\bm{\theta} \in \mathcal{T}((\intmodq{q})^n)$. Consider the sequence $f^{(n)}(\bm{\omega} \st \bm{\theta})$ of probabilities, defined in \eqref{equ:type2type_prob}, with $\bm{\omega} \in \mathcal{T}_{\bm{\theta}}\left( (\intmodq{q})^{n d_{\vn}}\right)$. Then the sequence $\left(\frac{1}{n}\log_2  (f^{(n)}(\bm{\omega}\st\bm{\theta})) \right)_{n \in \mathbb{N}}$ is uniformly convergent to $\phi(\bm{\omega}\st\bm{\theta})$ as $n \tendsto \infty$.
\end{lemma}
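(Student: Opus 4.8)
The plan is to turn the closed form \eqref{equ:type2type_prob} into a convergence estimate whose error term does not depend on the pair $(\bm{\omega},\bm{\theta})$. Pointwise convergence of $\tfrac1n\log_2 f^{(n)}(\bm{\omega}\st\bm{\theta})$ to $\phi(\bm{\omega}\st\bm{\theta})$ is already Corollary \ref{cor:generating_probF_asympt}, so it suffices to produce a bound $\varepsilon_n$, independent of $\bm{\omega}$ and $\bm{\theta}$, with $\bigl|\tfrac1n\log_2 f^{(n)}(\bm{\omega}\st\bm{\theta})-\phi(\bm{\omega}\st\bm{\theta})\bigr|\le\varepsilon_n\to 0$.

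First I would take $\tfrac1n\log_2$ of \eqref{equ:type2type_prob}, which decomposes it as
\begin{align*}
    \tfrac1n\log_2 f^{(n)}(\bm{\omega}\st\bm{\theta})
    = \tfrac1n\log_2\binom{n d_{\vn}}{n d_{\vn}\bm{\omega}}
      + d_{\vn}\widehat{\bm{\omega}}
      - \tfrac1n\log_2\binom{n d_{\vn}}{n d_{\vn}\bm{\theta}_{\orbit{}}}
      - d_{\vn}\sum_{d\in\divisors{q}}\theta(\orbit{d})\log_2\card{\orbit{d}}.
\end{align*}
Comparing with the expression for $\phi(\bm{\omega}\st\bm{\theta})$ in Corollary \ref{cor:generating_probF_asympt}, the summands $d_{\vn}\widehat{\bm{\omega}}$ and $-d_{\vn}\sum_{d}\theta(\orbit{d})\log_2\card{\orbit{d}}$ occur identically on both sides and hence cancel in the difference $\tfrac1n\log_2 f^{(n)}(\bm{\omega}\st\bm{\theta})-\phi(\bm{\omega}\st\bm{\theta})$; the entire approximation error is thus carried by the two multinomial coefficients.

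Next I would bound those coefficients using the elementary estimates \eqref{equ:bounds_multinomial} with block length $n d_{\vn}$. Both $n d_{\vn}\bm{\omega}$ and $n d_{\vn}\bm{\theta}_{\orbit{}}$ are integer vectors — the former because $\bm{\omega}\in\mathcal{T}((\intmodq{q})^{n d_{\vn}})$, the latter because $n\theta(\orbit{d})\in\NN$ for each $d$ and $\sum_{d}\theta(\orbit{d})=1$ — so \eqref{equ:bounds_multinomial} applies to each, giving $0\le d_{\vn}H(\bm{\omega})-\tfrac1n\log_2\binom{n d_{\vn}}{n d_{\vn}\bm{\omega}}\le\tfrac{(\floor{q/2}+1)\log_2(n d_{\vn}+1)}{n}$ and the analogue for $\bm{\theta}_{\orbit{}}$ with $\card{\divisors{q}}$ in place of $\floor{q/2}+1$. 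Since the two coefficients enter the difference with opposite signs, combining these bounds yields
\begin{align*}
    \Bigl|\tfrac1n\log_2 f^{(n)}(\bm{\omega}\st\bm{\theta})-\phi(\bm{\omega}\st\bm{\theta})\Bigr|
    \;\le\; \varepsilon_n := \frac{\bigl(\floor{q/2}+1+\card{\divisors{q}}\bigr)\log_2(n d_{\vn}+1)}{n},
\end{align*}
and $\varepsilon_n$ depends only on $q$ and $d_{\vn}$ and tends to $0$; taking the supremum over all admissible $(\bm{\omega},\bm{\theta})$ finishes the proof. There is no genuinely hard step here: the only points needing care are the bookkeeping that confines the error to the two multinomial coefficients (so that the constant in $\varepsilon_n$ is $q$-dependent but $(\bm{\omega},\bm{\theta})$-independent) and the observation that $n d_{\vn}\bm{\theta}_{\orbit{}}$ is integral, which is what legitimizes the use of \eqref{equ:bounds_multinomial} in the denominator.
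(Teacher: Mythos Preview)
Your proposal is correct and follows essentially the same approach as the paper's proof: both reduce the difference $\tfrac1n\log_2 f^{(n)}(\bm{\omega}\st\bm{\theta})-\phi(\bm{\omega}\st\bm{\theta})$ to the two multinomial-coefficient errors via the closed form \eqref{equ:type2type_prob} and then bound each uniformly by \eqref{equ:bounds_multinomial}, obtaining an $(\bm{\omega},\bm{\theta})$-independent $O(\log n/n)$ bound. Your version is in fact slightly more careful than the paper's (you distinguish $\card{\divisors{q}}$ from $\floor{q/2}+1$ for the $\bm{\theta}_{\orbit{}}$ multinomial and explicitly verify integrality), but the argument is the same.
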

\begin{proof}
    We have to show that for every $\varepsilon > 0$ there is a natural number $n_{\varepsilon} \in \mathbb{N}$ such that for all $n \geq n_{\varepsilon}$ it holds
    \begin{align*}
        \card{ \frac{1}{n}\log_2 ( f^{(n)}(\bm{\omega}\st\bm{\theta}))  - \phi(\bm{\omega}\st\bm{\theta}) } < \varepsilon .
    \end{align*}
    Applying Theorem \ref{thm:type2type_general} and Corollary \ref{cor:generating_probF_asympt}, and by using the triangle inequality, we get
    \begin{align*}
        \Bigg| \frac{1}{n} \log_2 ( f^{(n)}(\bm{\omega}\st\bm{\theta})) - \phi(\bm{\omega}\st\bm{\theta})\Bigg| 
            &=
        \Bigg|  \frac{1}{n} \log_2\left( \binom{n d_{\vn}}{n d_{\vn} \bm{\omega}}\right) -  d_{\vn} H(\bm{\omega}) - \frac{1}{n} \log_2\left( \binom{n d_{\vn}}{n d_{\vn} \bm{\theta}_{\orbit{}}}\right) + d_{\vn} H(\bm{\theta}_{\orbit{}}) \Bigg| \\
            &\leq 
        \Bigg|  \frac{1}{n} \log_2\left( \binom{n d_{\vn}}{n d_{\vn} \bm{\omega}}\right) -  d_{\vn} H(\bm{\omega})\Bigg| + \Bigg|  d_{\vn} H(\bm{\theta}_{\orbit{}}) - \frac{1}{n} \log_2\left( \binom{n d_{\vn}}{n d_{\vn} \bm{\theta}_{\orbit{}}}\right) \Bigg|.
    \end{align*}
    Let us focus now on $\Big|  \frac{1}{n} \log_2\left( \binom{n d_{\vn}}{n d_{\vn} \bm{\omega}}\right) -  d_{\vn} H(\bm{\omega})\Big|$. Recall from \eqref{equ:bounds_multinomial} that we have the following bounds on $\binom{n d_{\vn}}{n d_{\vn} \bm{\omega}}$,
    \begin{align*}
        \frac{1}{(n d_{\vn} +1)^{\floor{q/2} + 1}} 2^{n d_{\vn}H(\bm{\omega})} \leq \binom{n d_{\vn}}{n d_{\vn} \bm{\omega}} \leq 2^{n d_{\vn}H(\bm{\omega})}.
    \end{align*}
    Hence, if $\frac{1}{n} \log_2\left( \binom{n d_{\vn}}{n d_{\vn} \bm{\omega}}\right) > d_{\vn} H(\bm{\omega})$, we get
    \begin{align*}
        \Big|  \frac{1}{n} \log_2\Bigg( \binom{n d_{\vn}}{n d_{\vn} \bm{\omega}}\Bigg) -  d_{\vn} H(\bm{\omega})\Big| = 0.
    \end{align*}
    On the other hand, we obtain
    \begin{align*}
        \Big|  \frac{1}{n} \log_2\Big( \binom{n d_{\vn}}{n d_{\vn} \bm{\omega}}\Big) &-  d_{\vn} H(\bm{\omega})\Big|
            \leq 
        (\floor{q/2} + 1)\frac{1}{n}\log_2{(n d_{\vn}+1)}.
    \end{align*}
    By l'Hôpital's rule this converges to zero as $n $ tends to $ \infty$.

    Note that the same argument holds for $\Big| d_{\vn} H(\bm{\theta}_{\orbit{}}) - \frac{1}{n} \log_2\left( \binom{n d_{\vn}}{n d_{\vn} \bm{\theta}_{\orbit{}}}\right) \Big|$ and thus the result follows.
\end{proof}

%%%%%%%%%%%%%%%%%%%%%%%%%%%%%%%%%%%%%%%%%%%%%%%%%%%%%%%%%%%%%%%%%
%%%%%%%%%%%%%%%%%%%%%%%%%%%%%%%%%%%%%%%%%%%%%%%%%%%%%%%%%%%%%%%%%
\subsection{Valid Check Node Assignments}
We now discuss the probability $a^{(n)}(\bm{\omega})$ given in \eqref{equ:CNcodeword}. We make use of generating functions to describe the situation at one check node and then extend the generating function to $m$ check nodes. In the following let $w$ denote the Lee weight decomposition of a vector $\bfx \in (\intmodq{q})^n$. That is, for every $i = 0, \ldots , \floor{q/2}$,
\begin{align*}
    w_i = \card{\set{k = 1, \ldots , n \st \LW(x_k) = i}}.
\end{align*}

Furthermore, recall from Equation \eqref{equ:type2type_prob} in Theorem \ref{thm:type2type_general} that given a Lee type $\bm{\theta}$ of $\bfc$, the Lee type $\bm{\omega}$ of a valid check node assignment has to show the same orbit distribution. Hence, there is a restricted choice. Let us denote the set of possible check node types resulting from $\bm{\theta}$ by $\mathcal{T}_{\bm{\theta}}\left( (\intmodq{q})^{n d_{\vn}}\right)$, i.e., it is the set
\begin{align*}
    \set{ \bm{\omega} \in \mathcal{T}((\intmodq{q})^{n d_{\vn}}) \st \omega(\orbit{d}) = \theta(\orbit{d}) \, \forall d \in \divisors{q} }.
\end{align*}
In the following, for a given polynomial $p(x)$, we denote by $\mathrm{coeff}(p(x), x^i)$ the coefficient of $x^i$ in $p(x)$. 
\begin{theorem}\label{thm:generating_probA}
    Consider a vector $\bfz \in (\intmodq{q})^{n d_{\vn}}$ of Lee type $\bm{\omega}$ and weight decomposition $w$. Furthermore, consider a random regular LDPC code of variable degree $d_{\vn}$ and check node degree $d_{\cn}$. Then, the probability that $\bfz$ fulfills the check node equations is given by
    \begin{align*}
        a^{(n)}(\bm{\omega}) = \frac{\mathrm{coeff}(G(\bft), \bft^{\bm{\omega} n d_{\vn}})}{\binom{n d_{\vn}}{n d_{\vn} \bm{\omega}}},
    \end{align*}
    where
    \begin{align}\label{equ:genFunc_mCNs}
        G(\bft) = \frac{1}{q^m} \left[\sum_{\substack{\bfz_i\in (\intmodq{q})^{d_{\cn}}\\  
        d_{\cn}\bm{\omega}_{\bfz_i} = w} } \sum_{s = 0}^{q-1} \prod_{k = 1}^{d_{\cn}} \me^{\frac{2 \pi i}{q} s z_k} t_1^{n d_{\vn} \bm{\omega}(1)} \ldots t_{\floor{q/2}}^{n d_{\vn} \omega(\floor{q/2})} \right]^{m} .
    \end{align}
\end{theorem}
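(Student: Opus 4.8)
The plan is to express $a^{(n)}(\bm{\omega})$ as a ratio of two cardinalities and to evaluate the numerator with a generating-function / roots-of-unity argument. First I would check that, conditioned on $\bm{\omega}_{\bfz}=\bm{\omega}$, the vector $\bfz$ is uniformly distributed over the set of \emph{all} vectors in $(\intmodq{q})^{n d_{\vn}}$ of Lee type $\bm{\omega}$. Each coordinate of $\bfz$ is a coordinate $c_j$ of $\bfc$, lying in some orbit $\orbit{d}$, multiplied by an independent uniform unit; as $\units{q}$ acts on $\orbit{d}$ with all fibres of equal size, this product is uniform on $\orbit{d}$, and the subsequent random edge permutation $\Pi$ then makes $\bfz$ uniform over all vectors whose orbit profile coincides with that of $\bfc$. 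By Lemma \ref{lem:sameLee_sameOrbit} each Lee weight lies inside a single orbit, so the Lee type determines the orbit profile; hence conditioning further on the Lee type $\bm{\omega}$ --- which must then lie in $\mathcal{T}_{\bm{\theta}}\left((\intmodq{q})^{n d_{\vn}}\right)$ --- leaves precisely the uniform distribution on the vectors of Lee type $\bm{\omega}$. Consequently $a^{(n)}(\bm{\omega})$ equals the number of type-$\bm{\omega}$ vectors satisfying all $m$ check equations, divided by the total number of type-$\bm{\omega}$ vectors, and the latter is exactly the quantity appearing as the numerator of $f^{(n)}$ in Theorem \ref{thm:type2type_general}, i.e.\ $\binom{n d_{\vn}}{n d_{\vn}\bm{\omega}}$ together with the factor $2^{n d_{\vn}\widehat{\bm{\omega}}}$ recording the two representatives of each doubly-represented Lee weight.

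For the numerator I would proceed through generating functions. As set up around Figure \ref{fig:random_PC_LDPC}, once the random unit labels are absorbed into the random coordinates the $i$-th check equation reads $\sum_{k=1}^{d_{\cn}} z_{(i-1)d_{\cn}+k}\equiv 0 \pmod{q}$. For a single check node I would form the generating function in the variables $\bft=(t_1,\dots,t_{\floor{q/2}})$ that mark the Lee-weight profile of its $d_{\cn}$ coordinates, enforcing the sum-zero constraint by the discrete-Fourier filter $\frac1q\sum_{s=0}^{q-1}\me^{\frac{2\pi i}{q}s(\cdot)}$; expanding this produces the bracketed expression in \eqref{equ:genFunc_mCNs}. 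The decisive combinatorial point is that sampling a uniformly random $\bfz$ of a prescribed global Lee-weight profile and cutting it into $m$ consecutive blocks of length $d_{\cn}$ is the same as sampling the $m$ blocks independently and then conditioning on the total profile --- this is exactly what the uniformity of $\Pi$ and the identity $n d_{\vn}=m d_{\cn}$ provide. Therefore the generating function that jointly marks the global Lee-weight profile and the event ``every block sums to zero'' is the $m$-th power $G(\bft)$ of \eqref{equ:genFunc_mCNs}, the number of type-$\bm{\omega}$ vectors meeting all checks is $\mathrm{coeff}(G(\bft),\bft^{\bm{\omega} n d_{\vn}})$, and dividing by the count from the first paragraph gives the asserted formula.

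I expect the delicate points to be the bookkeeping in these two reductions rather than any single hard estimate. In the first step one must handle the orbits carefully: a single orbit may carry several distinct Lee weights, so fixing the orbit profile is strictly weaker than fixing the Lee type, and the conditioning argument has to be phrased on the correct event for the uniformity conclusion to hold. In the second step the ``independent blocks plus conditioning'' identity must be made precise, since it is what legitimizes replacing the joint distribution over the $m$ check nodes by an $m$-fold product and then a single coefficient extraction. Once these are in place, identifying the bracket of \eqref{equ:genFunc_mCNs} with the single-check-node generating function is a routine computation: write $\sum_{a\in\intmodq{q}}\me^{\frac{2\pi i}{q}s a}\,t_{\LW(a)}$ (with $t_0:=1$) and raise it to the $d_{\cn}$-th power.
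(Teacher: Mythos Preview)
Your approach is essentially the paper's: reduce $a^{(n)}(\bm{\omega})$ to a ratio of counts, enumerate the numerator with a per-check generating function using the roots-of-unity filter $\frac{1}{q}\sum_{s=0}^{q-1}\me^{2\pi i s(\cdot)/q}$ for the indicator of the zero-sum event, take the $m$-th power to handle all checks, and extract the coefficient at $\bft^{n d_{\vn}\bm{\omega}}$. You supply more justification than the paper does for two points it treats informally: the uniformity of $\bfz$ over all vectors of a given Lee type (via the orbit action and Lemma~\ref{lem:sameLee_sameOrbit}) and the legitimacy of replacing the joint constraint by an $m$-fold product.

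One discrepancy to be aware of: your denominator is the full count of type-$\bm{\omega}$ vectors, namely $\binom{n d_{\vn}}{n d_{\vn}\bm{\omega}}\,2^{n d_{\vn}\widehat{\bm{\omega}}}$, whereas the stated formula (and the paper's proof) divides only by the multinomial $\binom{n d_{\vn}}{n d_{\vn}\bm{\omega}}$. The paper's last sentence calls this ``all the possible permutations of a vector of Lee type $\bm{\omega}$'', which is indeed the multinomial alone; but since $G(\bft)$ counts actual vectors in $(\intmodq{q})^{n d_{\vn}}$ (distinguishing, e.g., $1$ from $q-1$), your more careful accounting would place the factor $2^{n d_{\vn}\widehat{\bm{\omega}}}$ in the denominator as well. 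This does not affect the method, only the normalization, and it cancels against the same factor in the numerator of $f^{(n)}(\bm{\omega}\mid\bm{\theta})$ when the product $f^{(n)}a^{(n)}$ is formed in~\eqref{equ:averageTypeEnum}; just be sure your write-up matches the convention of the statement you are proving.
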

\begin{proof}
    Recall that $a^{(n)}(\bm{\omega})$ describes the probability of $\bfz \in (\intmodq{q})^{n d_{\vn}}$ satisfying the check node equations and being of a given Lee type $\bm{\omega}$. Furthermore, we have $m$ check nodes each of degree $d_{\cn}$. Hence, we can split $\bfz$ into $m$ parts $\bfz_1, \ldots , \bfz_{m}$ each one corresponding check node $\cn_1, \ldots , \cn_{m}$, respectively.
    
We focus now on one check node only and describe a generating function for the number of $\bfz_i$'s satisfying the check node equations of check node $\cn_i$ and having Lee weight decomposition given by $w = (w_0, \ldots, w_{\floor{q/2}})$. We turn our attention at this point only to the nonzero elements and note that $w_0 = d_{\cn} - \sum_{i = 1}^{\floor{q/2}} w_i$.
    In that sense, let us define
    \begin{align*}
        g_{(w_1, \ldots , w_{\floor{q/2}})} := \Big| \Big\lbrace \bfz_i \in (\intmodq{q})^{d_{\cn}} \big|  \bfz_i  \text{ satisfies the check-equations, and }  \card{\set{j = 1, \ldots , d_{\cn} \st \LW(z_{i_j}) = k}} = w_k \Big\rbrace \Big|.
    \end{align*}
    We can describe this quantity summing over all $d_{\cn}$-tuples that sum up to zero using an indicator function. Indeed,
    \begin{align*}
        g_{(w_1, \ldots , w_{\floor{q/2}})} = \sum_{\substack{\bfz_i\in (\intmodq{q})^{d_{\cn}}\\ d_{\cn} \bm{\omega}_{\bfz_i} = w} } \mathbb{1}\left( \sum_{k = 1}^{d_{\cn}} z_k  = 0\right).
    \end{align*}
    Applying the inversion formula for the discrete Fourier transform over $\intmodq{q}$ yields
    \begin{align}\label{equ:generate_cn_function}
        g_{(w_1, \ldots , w_{\floor{q/2}})} 
        = \sum_{\substack{\bfz_i\in (\intmodq{q})^{d_{\cn}}\\ d_{\cn} \bm{\omega}_{\bfz_i} = w} } \frac{1}{q}\sum_{\chi \text{ character }} \chi \left(\sum_{k = 1}^{d_{\cn}} z_k \right).
    \end{align}
    Over the finite abelian group $\intmodq{q}$ there are $q$ characters $\chi_0, \ldots, \chi_{q-1}$ defined by $\chi_k(a) := \me^{\frac{2 \pi i}{q} k a}$ for each element $a \in \intmodq{q}$. Hence, we can rewrite \eqref{equ:generate_cn_function} as
    \begin{align}\label{equ:generate_cn_function_closed}
        g_{(w_1, \ldots , w_{\floor{q/2}})} 
        = \frac{1}{q}\sum_{\substack{\bfz_i\in (\intmodq{q})^{d_{\cn}}\\ d_{\cn} \bm{\omega}_{\bfz_i} = w} } \sum_{s = 0}^{q-1} \me^{\frac{2 \pi i}{q} s \sum_{k = 1}^{d_{\cn}}z_k} .
    \end{align}
    We then define the generating function $g(\bft)$ by
    \begin{align}
        g(\bft) := \sum_{\substack{w \text{ composition}\\ \text{of } d_{\cn}}} g_{(w_1, \ldots , w_{\floor{q/2}})} t_1^{w_1}\ldots t_{\floor{q/2}}^{w_{\floor{q/2}}}.
    \end{align}
    
    To obtain a similar expression for a configuration regarding all the check nodes, we take the $m$-fold convolution of $g_{(w_1, \ldots , w_{\floor{q/2}})}$, i.e.,
    \begin{align*}
        G_{(w_1, \ldots , w_{\floor{q/2}})} :=  g_{(w_1, \ldots , w_{\floor{q/2}})} \circledast \ldots \circledast g_{(w_1, \ldots , w_{\floor{q/2}})}.
    \end{align*}
    
    Hence, the corresponding generating function for $m$ check nodes is 
    \begin{align*}
        G(\bft) := \sum_{\substack{w \text{ composition}\\ \text{of } m d_{\cn}}} g_{(w_1, \ldots , w_{\floor{q/2}})}t_1^{w_1}\ldots t_{\floor{q/2}}^{w_{\floor{q/2}}} = g(\bft)^m.
    \end{align*}
    Let $\bm{\omega}$ denote the Lee type of the decomposition $w$, i.e., $n d_{\vn} \omega(i) = w_i$ for every $i \in \set{0, \ldots ,\floor{q/2}}$. The number of configurations of given Lee type $\bm{\omega}$ is then the coefficient of the polynomial $G(\bft)$ at $\bft^{n d_{\vn} \bm{\omega}} = t_1^{w_1}\ldots t_{\floor{q/2}}^{w_{\floor{q/2}}}$. Finally, the probability $a^{(n)}(\bm{\omega})$ is obtained by dividing the $n d_{\vn} \bm{\omega}$-th coefficient of $G(\bft)$ by all the possible permutations of a vector $\bfx \in (\intmodq{q})^{n d_{\vn}}$ of Lee type $\bm{\omega}$, which is given by the multinomial coefficient $\binom{n d_{\vn}}{n d_{\vn} \bm{\omega}}$.
\end{proof}

At this point, to simplify the understanding we would like to discuss the expression in \eqref{equ:generate_cn_function_closed} with an example.
\begin{example}
    Assume the check node degree is $d_{\cn} = 2$ and that the underlying integer ring is $\intmodq{5}$. Let us furthermore assume that the Lee weight decomposition of a tuple $\bfz_i$ at a check node is $w = (0, 2, 0)$. This means that $\bfz_i$ is one of the following tuples
    \begin{align*}
        (1, 1),\quad (1, 4), \quad (4, 1), \quad \text{or} \quad (4, 4).
    \end{align*}
    Since only $(1, 4)$ and $(4, 1)$ satisfy the check equation (i.e. sum up to zero modulo five), the enumerator $g_{(0, 2, 0)}$ should equal two. In fact, the exponential expression in Equation \eqref{equ:generate_cn_function_closed} equals $1$ for all tuples satisfying the check equation. For those not satisfying the check equation the sum of exponentials is the sum of $n$-th roots of unity (in our case $n = 5$) and is hence equal to zero.
\end{example} 

Let us now focus on the asymptotic growth rate of $a^{(n)}$ which we define as
\begin{align*}
    \alpha(\bm{\omega}) := \lim_{n \tendsto \infty} \frac{1}{n}\log_2 (a^{(n)}(\bm{\omega})).
\end{align*}
A direct consequence of taking the logarithm and the limit of the sequence $a^{(n)}$ is captured in Corollary \ref{cor:generating_probA_asympt}.
\begin{corollary}\label{cor:generating_probA_asympt}
    Let $\bfz \in (\intmodq{q})^{n d_{\vn}}$ satisfy the $m$ check equations and denote by $\bm{\omega}$ its Lee type. Then we obtain the following asymptotic expression for the probability $a^{(n)}(\bm{\omega})$.
    \begin{align*}
        \alpha(\bm{\omega}) = -d_{\vn} H(\bm{\omega}) + (1-R) \inf_{\bft\succ 0} \log_2\left(\frac{g(\bft)}{\bft^{\bm{\omega} n d_{\vn}}} \right),
    \end{align*}
    where $\bft \succ 0$ means that not every entry of $\bft = (t_1, \ldots , t_{\floor{q/2}})$ is equal to zero.
\end{corollary}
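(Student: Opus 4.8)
The plan is to work directly from the closed form of Theorem~\ref{thm:generating_probA},
\[
a^{(n)}(\bm{\omega}) = \frac{\mathrm{coeff}\big(g(\bft)^m,\,\bft^{\bm{\omega} n d_{\vn}}\big)}{\binom{n d_{\vn}}{n d_{\vn}\bm{\omega}}},
\]
take $\tfrac1n\log_2$ of both sides, and estimate numerator and denominator separately. The denominator is routine: by the multinomial bounds \eqref{equ:bounds_multinomial} (with $n d_{\vn}$ in place of $n$ and $\bm{\omega}$ in place of $\bfp$) one has $\tfrac1n\log_2\binom{n d_{\vn}}{n d_{\vn}\bm{\omega}} = d_{\vn}H(\bm{\omega}) + O\big(\tfrac1n\log_2(n d_{\vn}+1)\big)\to d_{\vn}H(\bm{\omega})$, which produces the $-d_{\vn}H(\bm{\omega})$ term; the error vanishes by l'Hôpital's rule, exactly as in Lemma~\ref{lem:uniform_fn}. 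Everything therefore reduces to evaluating $\lim_n \tfrac1n\log_2\mathrm{coeff}\big(g(\bft)^m, \bft^{\bm{\omega} n d_{\vn}}\big)$, where $g(\bft)$ is the fixed ($n$-independent) single-check polynomial from the proof of Theorem~\ref{thm:generating_probA}, $m=m(n)$ with $m/n\to 1-R$, and $n d_{\vn}=m d_{\cn}$ counts the edges.

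For the upper bound I would invoke the standard saddle-point (Chernoff-type) inequality for coefficients of a polynomial with nonnegative coefficients: since $g$, hence $g^m$, has nonnegative coefficients, $\mathrm{coeff}(g(\bft)^m,\bft^{\bf a}) \le g(\bft)^m/\bft^{\bf a}$ for every $\bft$ with strictly positive entries. Taking $\tfrac1n\log_2$, then the infimum over $\bft$, and using $m/n\to1-R$ together with $d_{\vn}=(m/n)\,d_{\cn}$ (so that $d_{\vn}\langle\bm{\omega},\log_2\bft\rangle = (1-R)\,d_{\cn}\langle\bm{\omega},\log_2\bft\rangle$), gives
\[
\limsup_n \tfrac1n\log_2\mathrm{coeff}\big(g(\bft)^m,\bft^{\bm{\omega} n d_{\vn}}\big) \le (1-R)\inf_{\bft\succ 0}\log_2\!\frac{g(\bft)}{\bft^{\bm{\omega} n d_{\vn}}},
\]
which is the claimed bound (after rewriting the exponent via $n d_{\vn}=m d_{\cn}$). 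One only needs the convention $t_i^{-a_i}=+\infty$ when $a_i>0$, so that the infimum over $\bft\succ 0$ is effectively an infimum over $\bft$ strictly positive in the coordinates where $\omega(i)>0$.

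For the matching lower bound I would read $\mathrm{coeff}(g(\bft)^m,\bft^{\bm{\omega} n d_{\vn}})$ probabilistically: normalizing, it equals $g(\mathbf{1})^m$ times the probability that a sum of $m$ i.i.d.\ copies of the Lee-weight-decomposition vector of a uniformly random check-satisfying $d_{\cn}$-tuple equals the target histogram $\bm{\omega} n d_{\vn}$. This is precisely a large-deviation event for an empirical mean of i.i.d.\ bounded lattice vectors, so Cramér's theorem (in its multivariate, lattice-valued form) yields the exact exponential rate, and the Legendre--Fenchel transform of the log-moment generating function of $g$ is exactly $\inf_{\bft\succ0}\log_2(g(\bft)/\bft^{\bm{\omega} n d_{\vn}})$; combined with $\tfrac1n\log_2 g(\mathbf 1)^m\to(1-R)\log_2 g(\mathbf 1)$ this matches the upper bound. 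Equivalently and more elementarily, since $g(\bft)^m$ has only $\mathrm{poly}(n)$ many monomials of total degree $m d_{\cn}$, evaluating at the optimizing $\bft^\star$ shows the target coefficient carries at least a $1/\mathrm{poly}(n)$ fraction of $g(\bft^\star)^m/(\bft^\star)^{\bm{\omega} n d_{\vn}}$, and $\tfrac1n\log_2(1/\mathrm{poly}(n))\to0$. Adding back the denominator estimate yields the stated formula for $\alpha(\bm{\omega})$.

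The main obstacle is the lower bound, and within it ensuring that the saddle point is genuinely attained: one needs the target $\bm{\omega}$ to lie in the interior of the relevant Newton polytope (equivalently, to be an achievable check-type), which is guaranteed here because we only consider $\bm{\omega}\in\mathcal{T}_{\bm{\theta}}\big((\intmodq{q})^{n d_{\vn}}\big)$ with $a^{(n)}(\bm{\omega})>0$, and one needs the lattice hypothesis of Cramér's theorem (the support of the per-check decomposition vector must generate the correct sublattice) — both are mild. The upper bound, by contrast, is the one-line Markov-type argument above, and the vanishing of all $\tfrac1n\log_2(\mathrm{poly}(n))$ corrections is handled uniformly by l'Hôpital's rule just as in Lemma~\ref{lem:uniform_fn}.
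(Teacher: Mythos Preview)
Your argument is correct and, in fact, considerably more detailed than what the paper offers: the paper's entire justification for this corollary is the single sentence ``a direct consequence of taking the logarithm and the limit of the sequence $a^{(n)}$,'' with no further steps. What you supply---the Chernoff-type inequality $\mathrm{coeff}(g^m,\bft^{\bf a})\le g(\bft)^m/\bft^{\bf a}$ for the upper bound and the Cram\'er/polynomial-count argument for the matching lower bound---is exactly the saddle-point mechanism that makes that sentence rigorous, and it is essentially the content of the Hayman formula (Lemma~\ref{lem:hayman}) that the paper invokes immediately afterward to pass to the explicit form in Corollary~\ref{cor:asympt_CNcodewords}. So you and the paper are on the same route; you have simply written out the step the paper leaves implicit, and you have also been careful about the meaning of $\bft\succ 0$ and about where the interior-of-the-Newton-polytope hypothesis enters, both of which the paper glosses over.
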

Taking the infimum over all possibilities of $\bft = (t_1, \ldots , t_{\floor{q/2}})$ is impractical. We will use the asymptotic Hayman method for multivariate polynomials (see \cite{hayman1956generalisation, wilf2005generatingfunctionology, di2004asymptotic}) to establish $\lim_{n \tendsto \infty} 1/n \log_2 \left( \textrm{coeff}(G(\bft), \bft^{\bm{\omega} n d_{\vn}} \right)$.
\begin{lemma}[Hayman Formula]\label{lem:hayman}
    Let $\bfx = (x_1, \ldots, x_d) \in \mathbb{R}^d$ and let $p(\bfx)$ be a multivariate polynomial with $p(\bfzero) \neq 0$. Furthermore, let $\bm{\beta} = (\beta_1, \ldots , \beta_d)$ such that $0 \leq \beta_i \leq 1$ and $\beta_i n \in \mathbb{N}$ for all $i = 1, \ldots , d$. Assume that $\bfx^\star = (x_1^\star, \ldots , x_d^\star)$ is the unique positive real solution to the system of equations given by
    \begin{align*}
        x_1 \frac{\partial p(\bfx)}{\partial x_1} = \beta_1 p(\bfx), \; \ldots,\; x_d \frac{\partial p(\bfx)}{\partial x_d} = \beta_d p(\bfx).
    \end{align*}
    Then, as $n \tendsto \infty$, it holds
    \begin{align*}
        \lim_{n \longrightarrow\infty} \frac{1}{n} \ln \left( \mathrm{coeff}\left( (p(\bfz))^n, \bfz^{n \mathbf{\beta}} \right) \right)
            =
        \left( \ln(p(\bfx)) - \sum_{i = 1}^d \beta_i \ln(x_i) \right).
    \end{align*}
\end{lemma}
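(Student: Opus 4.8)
The plan is to prove this by the classical saddle-point (Hayman) method, recast in probabilistic form via an exponential tilting of the coefficient sequence; this is the multivariate analogue of the arguments in \cite{hayman1956generalisation, wilf2005generatingfunctionology, di2004asymptotic}. Write $p(\bfx) = \sum_{\bfk} c_{\bfk}\,\bfx^{\bfk}$ with $c_{\bfk}\ge 0$ (this non-negativity holds in the intended application, where $p = g(\bft)$ is a counting generating function) and $c_{\bfzero} = p(\bfzero)\neq 0$. First I would observe that the claimed limit is precisely the optimal value of the convex program $\inf\{\ln p(\bfx) - \sum_{i=1}^{d}\beta_i\ln x_i : x_1,\dots,x_d > 0\}$: the map $\bfs \mapsto \ln p(\mathrm{e}^{s_1},\dots,\mathrm{e}^{s_d})$ is convex, being a log-sum-exp of affine functions of $\bfs$, so $F(\bfs) := \ln p(\mathrm{e}^{s_1},\dots,\mathrm{e}^{s_d}) - \sum_{i}\beta_i s_i$ is convex; its stationarity condition $\partial F/\partial s_i = x_i\,\partial_{x_i}p(\bfx)/p(\bfx) - \beta_i = 0$ is exactly the system in the statement, and by hypothesis $\bfx^\star$ is its unique positive solution, hence the global minimiser, at which $F$ takes the value $\ln p(\bfx^\star) - \sum_{i}\beta_i\ln x_i^\star$.

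Second, I would fix the evaluation point $\bfx^\star$ and introduce the tilted probability mass function on exponent vectors, $\prob(\mathbf{K} = \bfk) := c_{\bfk}\,(\bfx^\star)^{\bfk}/p(\bfx^\star)$. Taking $\mathbf{K}_1,\dots,\mathbf{K}_n$ to be i.i.d.\ copies of $\mathbf{K}$ and expanding $(p(\bfz))^n$ monomial by monomial yields the identity
\begin{align*}
    \mathrm{coeff}\!\left((p(\bfz))^n,\bfz^{n\bm{\beta}}\right)
        = \frac{p(\bfx^\star)^n}{(\bfx^\star)^{n\bm{\beta}}}\,
          \prob\!\left(\sum_{j=1}^{n}\mathbf{K}_j = n\bm{\beta}\right),
\end{align*}
and the saddle-point equations say exactly that $\expect(\mathbf{K}) = \big(x_i^\star\,\partial_{x_i}p(\bfx^\star)/p(\bfx^\star)\big)_{i} = \bm{\beta}$, so $n\bm{\beta}$ is the mean of the lattice random walk $\sum_j\mathbf{K}_j$. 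Taking $\tfrac1n\ln$ of the identity,
\begin{align*}
    \frac1n\ln\,\mathrm{coeff}\!\left((p(\bfz))^n,\bfz^{n\bm{\beta}}\right)
        = \ln p(\bfx^\star) - \sum_{i=1}^{d}\beta_i\ln x_i^\star
          + \frac1n\ln\,\prob\!\left(\sum_{j=1}^{n}\mathbf{K}_j = n\bm{\beta}\right).
\end{align*}
The upper bound is immediate since the probability is at most $1$. For the matching lower bound it remains to show $\tfrac1n\ln\,\prob(\sum_j\mathbf{K}_j = n\bm{\beta})\tendsto 0$: this is a multivariate local central limit theorem for sums of i.i.d.\ lattice vectors, which---after reducing to the affine sublattice generated by $\supp(p)$ (here $c_{\bfzero}\neq 0$ ensures $\bfzero\in\supp(p)$, so $n\bm{\beta}$, being the mean, lies in its interior) and noting that the covariance matrix of $\mathbf{K}$ is then positive definite on the corresponding subspace---gives $\prob(\sum_j\mathbf{K}_j = n\bm{\beta}) = \Theta(n^{-d'/2})$, $d'$ the rank, so the extra term is $o(1)$ and drops out in the limit. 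A fully elementary substitute for the lower bound also works: $(p(\bfz))^n$ has all exponents in $\{0,\dots,n\deg p\}^{d}$, hence at most $(n\deg p+1)^{d}$ monomials, so $\sum_{\bfm}\prob(\sum_j\mathbf{K}_j = \bfm) = 1$ forces $\max_{\bfm}\prob(\sum_j\mathbf{K}_j=\bfm)\ge (n\deg p+1)^{-d}$, and a peak-shifting/unimodality argument around the mean transfers this bound to $\bfm = n\bm{\beta}$ up to a further polynomial factor, which again vanishes after $\tfrac1n\ln$.

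I expect the main obstacle to be the lower bound, specifically making the local-limit/non-degeneracy step rigorous: one must rule out that the tilted walk $\mathbf{K}$ is supported on a proper affine sublattice or that its covariance degenerates in some direction, and treat such cases by passing to the appropriate sublattice and subspace. This is precisely the content of Hayman admissibility, and is where the hypotheses $p(\bfzero)\neq 0$, the uniqueness of the positive solution $\bfx^\star$, and $0\le\beta_i\le 1$ (placing $n\bm{\beta}$ inside the relevant Newton polytope) are used. The remaining ingredients---convexity of the log-sum-exp, the tilting identity, and passing to the limit---are routine. When the lemma is invoked in Corollary~\ref{cor:generating_probA_asympt} with $p = g(\bft)$, one would additionally check these regularity conditions for the concrete check-node enumerator $g$, which holds because $g$ has non-negative integer coefficients and $g(\bfzero) = 1$.
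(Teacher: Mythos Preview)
The paper does not prove this lemma at all: it is stated as a known result with citations to \cite{hayman1956generalisation, wilf2005generatingfunctionology, di2004asymptotic} and then applied directly to obtain Corollary~\ref{cor:asympt_CNcodewords}. Your sketch therefore goes considerably beyond what the paper provides.

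That said, what you wrote is the standard saddle-point/exponential-tilting argument and is correct in outline. The identification of the stationarity condition with the convex minimum, the tilting identity expressing the coefficient as $p(\bfx^\star)^n(\bfx^\star)^{-n\bm\beta}$ times a point-mass probability for a centered lattice walk, and the upper bound are all clean. The only place a referee would push back is exactly the one you flag: for the lower bound, the local CLT step needs the covariance of the tilted law to be nondegenerate on the relevant sublattice, and your ``elementary substitute'' via a peak-shifting/unimodality argument is not actually justified as written---in the multivariate case, establishing that the mode of the $n$-fold convolution sits at the mean (or within $O(1)$ of it) is essentially the same work as the local CLT itself, so that alternative does not really buy you a shortcut. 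For a self-contained proof you would either carry out the multivariate local limit theorem in full (as in the cited references) or, more in the spirit of the paper's large-deviations viewpoint, invoke Sanov/G\"artner--Ellis to get the exponential rate directly and then argue the rate function vanishes at the mean.
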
 
In our case, we have that
\begin{align*}
    \lim_{n \longrightarrow\infty} \frac{1}{n} \ln\left( \mathrm{coeff}\left((g(\bft)^{1/d_{\cn}})^{n d_{\vn}}, \bft^{\mathbf{\bm{\omega}}n d_{\vn}}\right) \right) 
        =
    d_{\vn} \lim_{n' \longrightarrow\infty} \frac{1}{n'} \ln\left( \mathrm{coeff}\left((g(\bft)^{1/d_{\cn}})^{n'}, \bft^{\mathbf{\bm{\omega}}n'}\right) \right).
\end{align*}
Hence, Corollary \ref{cor:asympt_CNcodewords} is a direct consequence of Hayman's Formula.
\begin{corollary}\label{cor:asympt_CNcodewords}
    Let $\bm{\omega} = (\omega(0), \ldots, \omega(\floor{q/2})) \in (0, 1)^{\floor{q/2} + 1}$ such that $\omega(i) n d_{\vn} \in \mathbb{N}$ for every $i = 1, \ldots , d$. Then
    \begin{align*}
        \alpha(\bm{\omega}) = d_{\vn}\left( H(\bm{\omega}) + \log_2\left(g(\bft^\star)^{1/d_{\cn}}\right) - \sum_{i = 1}^{\floor{q/2}} \omega(i) \log_2(t_i^\star) \right)
    \end{align*}
    where $\bft^\star = (t_1^\star, \ldots, t_{\floor{q/2}}^\star)$ is the unique positive real solution to the equations
    \begin{align*}
        t_i \frac{\partial g(\bft)^{1/d_{\cn}}}{\partial t_i} = \omega(i) g(\bft)^{1/d_{\cn}}, \quad i = 1, \ldots , \floor{q/2}.
    \end{align*}
\end{corollary}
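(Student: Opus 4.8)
The statement should come out by feeding the generating‑function expression of Theorem~\ref{thm:generating_probA} into the multivariate Hayman formula of Lemma~\ref{lem:hayman}, the bridge between the two being the edge‑count identity $n d_{\vn} = m d_{\cn}$, which lets us match the monomial we extract against the power to which a single‑check‑node polynomial is raised. By Theorem~\ref{thm:generating_probA},
\begin{align*}
    \frac{1}{n}\log_2 a^{(n)}(\bm{\omega}) = \frac{1}{n}\log_2 \mathrm{coeff}\!\left(G(\bft),\, \bft^{\bm{\omega} n d_{\vn}}\right) - \frac{1}{n}\log_2 \binom{n d_{\vn}}{n d_{\vn}\bm{\omega}},
\end{align*}
so it suffices to pass to the limit in the two terms on the right separately.

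\textbf{The multinomial term.} Applying the bounds \eqref{equ:bounds_multinomial} with block length $n d_{\vn}$ and empirical distribution $\bm{\omega}$, the second term tends to $d_{\vn} H(\bm{\omega})$: the polynomial prefactor $(n d_{\vn}+1)^{-(\floor{q/2}+1)}$ is negligible after taking $\tfrac1n\log_2$ and letting $n\tendsto\infty$, by the same squeeze/l'Hôpital argument used in the proof of Lemma~\ref{lem:uniform_fn}.

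\textbf{The coefficient term.} Since $G(\bft) = g(\bft)^m$ and $m d_{\cn} = n d_{\vn}$, we may write $G(\bft) = \bigl(g(\bft)^{1/d_{\cn}}\bigr)^{n d_{\vn}}$, and by the reindexing $n' := n d_{\vn}$ displayed just before the statement we are reduced to evaluating $\lim_{n'\tendsto\infty}\tfrac1{n'}\ln\mathrm{coeff}\bigl((g(\bft)^{1/d_{\cn}})^{n'},\,\bft^{\bm{\omega}n'}\bigr)$. I would apply Lemma~\ref{lem:hayman} with $p = g(\bft)^{1/d_{\cn}}$, $d = \floor{q/2}$ and $\bm{\beta} = (\omega(1),\dots,\omega(\floor{q/2}))$ — the hypotheses $0\le\omega(i)\le 1$ and $\omega(i) n d_{\vn}\in\NN$ hold by assumption — obtaining that this limit equals $\ln g(\bft^\star)^{1/d_{\cn}} - \sum_{i=1}^{\floor{q/2}}\omega(i)\ln t_i^\star$, where $\bft^\star$ is the unique positive real solution of $t_i\,\partial_{t_i}\bigl(g(\bft)^{1/d_{\cn}}\bigr) = \omega(i)\,g(\bft)^{1/d_{\cn}}$. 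Converting to base‑$2$ logarithms, multiplying by the scaling factor $d_{\vn}$ inherited from $n' = n d_{\vn}$, and combining with the previous step yields the claimed expression for $\alpha(\bm{\omega})$; a one‑line chain‑rule computation checks that the stationarity system for $g^{1/d_{\cn}}$ is exactly the one in the corollary and matches the variational form of Corollary~\ref{cor:generating_probA_asympt} once $n d_{\vn}=m d_{\cn}$ is used again.

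\textbf{Main obstacle.} The delicate point is that $g(\bft)^{1/d_{\cn}}$ is in general \emph{not} a polynomial, so Lemma~\ref{lem:hayman} does not apply verbatim. I expect to resolve this in one of two ways: either (i) invoke the general multivariate saddle‑point / Hayman‑admissibility machinery for power series (as in \cite{hayman1956generalisation,di2004asymptotic}) and verify admissibility of $g(\bft)^{1/d_{\cn}}$ on the relevant positive domain — the key input being strict log‑convexity of $\bm{\tau}\mapsto\log g(\me^{\tau_1},\dots,\me^{\tau_{\floor{q/2}}})$, which simultaneously gives existence and uniqueness of the positive saddle $\bft^\star$; or (ii) apply Lemma~\ref{lem:hayman} directly to the genuine polynomial $g(\bft)^m$ at the scaled exponent $\bm{\beta}' = d_{\cn}\bm{\omega}$, reconciling entries of $\bm{\beta}'$ possibly exceeding $1$ with the hypothesis of the lemma by a preliminary change of variables that groups the $d_{\cn}$ edges of a check node. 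Either way one still has to argue separately that the infimum in Corollary~\ref{cor:generating_probA_asympt} is attained at an interior point (hence characterized by a vanishing gradient), which again follows from the log‑convexity above together with the behaviour of $g(\bft)/\bft^{\bm{\omega}n d_{\vn}}$ as any $t_i\to 0^+$ or $t_i\to\infty$.
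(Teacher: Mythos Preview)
Your approach is essentially identical to the paper's: the paper declares the corollary a ``direct consequence of Hayman's Formula'' after displaying exactly the same reindexing $n' = n d_{\vn}$ that rewrites $G(\bft) = g(\bft)^m$ as $(g(\bft)^{1/d_{\cn}})^{n d_{\vn}}$, and then applies Lemma~\ref{lem:hayman} with $p = g^{1/d_{\cn}}$. The paper does not address the technical point you flag about $g^{1/d_{\cn}}$ failing to be a polynomial --- it simply invokes the lemma as stated --- so your proposal is in fact more careful than the original on this issue.
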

In his paper, Hayman gave an explicit expression for the coefficient of an admissible function (see \cite[p. 69]{hayman1956generalisation}). With this, it easily follows that the sequence of functions $\left(\frac{1}{n} \log_2(a^{(n)})\right)_{n \in \mathbb{N}}$ is uniformly convergent.

%%%%%%%%%%%%%%%%%%%%%%%%%%%%%%%%%%%%%%%%%%%%%%%%%%%%%%%%%%%%%%%%%
%%%%%%%%%%%%%%%%%%%%%%%%%%%%%%%%%%%%%%%%%%%%%%%%%%%%%%%%%%%%%%%%%
\subsection{Asymptotic Growth Rate}
Having determined the two probabilities defined in Equations \eqref{equ:type2type} and \eqref{equ:CNcodeword}, respectively, the expression for the average Lee type enumerator $\overline{A}_{\bm{\theta}}^{(n)}$ follows immediately. We can then deduce immediately the asymptotics of the average Lee type enumerator and average weight enumerator, respectively.

\medskip

\begin{corollary}\label{cor:asymptotic_weightEnum}
    Let $\code$ be a random $(d_{\vn}, d_{\cn})$-regular LDPC code of length $n$ over $\intmodq{q}$. Let us denote by $\mathcal{A}(\bm{\theta}) := \lim_{n\tendsto \infty} \frac{1}{n} \log_2\overline{A}_{\bm{\theta}}^{(n)}$ and $\mathcal{W}(\ell) := \lim_{n\tendsto \infty} \frac{1}{n} \log_2\overline{W}_{\ell}^{(n)}$ the spectral growth rate of the average Lee type enumerator and weight enumerator, respectively. Then
    \begin{align}
        \mathcal{A}(\bm{\theta}) 
        \leq H(\bm{\theta}) + \sup_{\bm{\omega} \in \mathcal{T}_{\bm{\theta}}\left( (\intmodq{q})^{n d_{\vn}}\right)}\left( \phi(\bm{\omega}\st\bm{\theta}) + \alpha(\bm{\omega}) \right)
    \end{align}
    and, in particular,
    \begin{align}
        \mathcal{W}(\ell) \leq \sup_{\substack{\bm{\theta} \in \mathcal{T}((\intmodq{q})^n) \\ \LW(\bm{\theta}) = \ell}} \mathcal{A}(\bm{\theta}).\label{equ:lim_weightSpec}
    \end{align}
\end{corollary}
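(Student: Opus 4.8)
The plan is to assemble the asymptotic growth rate directly from the finite-length expression \eqref{equ:averageTypeEnum}, namely
$\overline{A}_{\bm{\theta}}^{(n)} = \binom{n}{n\bm{\theta}} \sum_{\bm{\omega} \in \mathcal{T}_{\bm{\theta}}\left( (\intmodq{q})^{n d_{\vn}}\right)} f^{(n)}(\bm{\omega} \st \bm{\theta}) a^{(n)}(\bm{\omega})$,
by taking $\frac{1}{n}\log_2(\cdot)$ and letting $n \tendsto \infty$. First I would apply the bound \eqref{equ:bounds_multinomial} to the leading multinomial coefficient, giving $\frac{1}{n}\log_2\binom{n}{n\bm{\theta}} \leq H(\bm{\theta})$ with the lower bound differing only by a term of order $\frac{\log_2(n+1)}{n} \to 0$; hence this contributes exactly $H(\bm{\theta})$ in the limit. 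Next I would handle the sum over $\bm{\omega}$: the number of Lee types in $\mathcal{T}_{\bm{\theta}}\left( (\intmodq{q})^{n d_{\vn}}\right)$ grows only polynomially in $n$ (it is at most $(nd_{\vn}+1)^{\floor{q/2}+1}$), so by the standard ``max-dominates-the-sum'' argument the logarithm of the sum is, up to a vanishing term, the supremum over $\bm{\omega}$ of $\frac{1}{n}\log_2\left(f^{(n)}(\bm{\omega}\st\bm{\theta}) a^{(n)}(\bm{\omega})\right)$.

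The key point is then to interchange the limit with the supremum over $\bm{\omega}$. This is where the uniform-convergence results proved earlier do the work: Lemma \ref{lem:uniform_fn} gives that $\frac{1}{n}\log_2 f^{(n)}(\bm{\omega}\st\bm{\theta}) \to \phi(\bm{\omega}\st\bm{\theta})$ uniformly in $\bm{\omega}$, and the remark following Corollary \ref{cor:asympt_CNcodewords} (invoking Hayman's explicit coefficient asymptotics) gives the analogous uniform convergence $\frac{1}{n}\log_2 a^{(n)}(\bm{\omega}) \to \alpha(\bm{\omega})$. Uniform convergence of both sequences lets me pass the limit through the supremum, yielding
$\mathcal{A}(\bm{\theta}) \leq H(\bm{\theta}) + \sup_{\bm{\omega} \in \mathcal{T}_{\bm{\theta}}\left( (\intmodq{q})^{n d_{\vn}}\right)}\left( \phi(\bm{\omega}\st\bm{\theta}) + \alpha(\bm{\omega}) \right)$.
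The inequality (rather than equality) is inherited from the upper bound in \eqref{equ:bounds_multinomial} on the leading multinomial coefficient, and matches the statement. For the second claim, I would start from the identity $W_{\ell}^{(n)}(\code) = \sum_{\bm{\theta}: \LW(\bm{\theta})=\ell} A_{\bm{\theta}}^{(n)}(\code)$, take expectations, apply $\frac{1}{n}\log_2(\cdot)$, use again that the number of Lee types with $\LW(\bm{\theta})=\ell$ is polynomial in $n$ so the log of the sum is the max up to a vanishing term, and then pass to the limit to obtain $\mathcal{W}(\ell) \leq \sup_{\bm{\theta}: \LW(\bm{\theta})=\ell} \mathcal{A}(\bm{\theta})$.

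The main obstacle I anticipate is making the exchange of limit and supremum fully rigorous — in particular, confirming that the convergence in Corollary \ref{cor:asympt_CNcodewords}/the Hayman remark is genuinely uniform over the (shrinking, $n$-dependent) set of admissible $\bm{\omega}$, and that the polynomial count of types is legitimate even though $\mathcal{T}_{\bm{\theta}}$ depends on $n$. A secondary subtlety is that Corollary \ref{cor:asympt_CNcodewords} is stated for $\bm{\omega}$ with all coordinates strictly in $(0,1)$, so boundary types (where some $\omega(i)=0$) would need a separate continuity/limiting argument or would simply be excluded from the supremum with a remark that they do not affect the growth rate; I would address this by noting the growth-rate functions extend continuously to the closed simplex.
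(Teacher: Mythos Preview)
Your proposal is correct and follows essentially the same approach as the paper: start from \eqref{equ:averageTypeEnum}, extract $H(\bm{\theta})$ from the multinomial coefficient via \eqref{equ:bounds_multinomial}, bound the sum over $\bm{\omega}$ by the supremum times the polynomially many types, and then invoke the uniform convergence of Lemma~\ref{lem:uniform_fn} together with the Hayman-based uniform convergence of $\frac{1}{n}\log_2 a^{(n)}$ to interchange limit and supremum; the bound on $\mathcal{W}(\ell)$ is handled the same way. The subtleties you flag (uniformity over the $n$-dependent type set, boundary types with some $\omega(i)=0$) are not addressed explicitly in the paper's proof either, so your treatment is at least as careful.
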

\begin{proof}
    From Equation \eqref{equ:averageTypeEnum} we observe that 
    \[
    \bar{A}_{\bm{\theta}}^{(n)} = \binom{n}{n\bm{\theta}} \sum_{\bm{\omega} \in \mathcal{T}_{\bm{\theta}}\left( (\intmodq{q})^{n d_{\vn}}\right)} f^{(n)}(\bm{\omega} \st \bm{\theta}) a^{(n)}(\bm{\omega})
    \]
    and hence
    \begin{align*}
        \mathcal{A}&(\bm{\theta}) = H(\bm{\theta}) + \lim_{n \tendsto \infty} \frac{1}{n} \log_2\left(\sum_{\bm{\omega} \in \mathcal{T}_{\bm{\theta}}\left( (\intmodq{q})^{n d_{\vn}}\right)}  f^{(n)}(\bm{\omega} \st  \bm{\theta})  a^{(n)}(\bm{\omega})\right).
    \end{align*}
    Furthermore, we can write
        \begin{align*}
            \frac{1}{n} \log_2\Bigg(\sum_{\bm{\omega} \in \mathcal{T}_{\bm{\theta}}\left( (\intmodq{q})^{n d_{\vn}}\right)}  f^{(n)}(\bm{\omega} \st \bm{\theta})  a^{(n)}(\bm{\omega})\Bigg)
            &\leq 
            \frac{1}{n} \log_2\Bigg(\sup_{\bm{\omega} \in \mathcal{T}_{\bm{\theta}}\left( (\intmodq{q})^{n d_{\vn}}\right)} \left[ f^{(n)}(\bm{\omega} \st \bm{\theta})  a^{(n)}(\bm{\omega})\right]\card{\mathcal{T}_{\bm{\theta}}\left( (\intmodq{q})^{n d_{\vn}}\right)}\Bigg) \\
            &\overset{(a)}{=}
            \sup_{\substack{\bm{\omega} \in \\ \mathcal{T}_{\bm{\theta}}\left( (\intmodq{q})^{n d_{\vn}}\right)}}\left[ \frac{1}{n} \log_2 \left(  f^{(n)}(\bm{\omega} \st \bm{\theta})  a^{(n)}(\bm{\omega})\right)\right]\\
            &=
            \sup_{\substack{\bm{\omega} \in \\ \mathcal{T}_{\bm{\theta}}\left( (\intmodq{q})^{n d_{\vn}}\right)}}\Bigg[ \frac{1}{n} \log_2 \left(  f^{(n)}(\bm{\omega} \st \bm{\theta})\right) + \frac{1}{n} \log_2 \left(  a^{(n)}(\bm{\omega})\right)\Bigg]
        \end{align*}
    where for $(a)$ we used, that $\card{\mathcal{T}_{\bm{\theta}}\left( (\intmodq{q})^{n d_{\vn}}\right)}$ is polynomial in $n$. By the uniform convergence shown in Lemma \ref{lem:uniform_fn} and in \cite{hayman1956generalisation}, we can switch the limit with the supremum and the statement follows.
    The bound in \eqref{equ:lim_weightSpec} for $\mathcal{W}(\ell)$ follows in an analogous manner.
\end{proof}
Figures \ref{fig:weightspec_(3,6)_q2}, \ref{fig:weightspec_(3,6)_q3} and \ref{fig:weightspec_(3,6)_q4} show the spectral growth rate of the average weight enumerator of a random regular $(3, 6)$ \ac{LDPC} code over $\intmodq{2}$, $\intmodq{3}$ and $\intmodq{4}$, respectively, and compare it to the spectral growth rate of a random code $\code \subset (\intmodq{q})^n$ of the same rate $R$. 
Note that, for a random code $\code$, the average Lee weight enumerator $\overline{W}_{\ell}^{(n)}(\code)$, for every $\ell \in \set{0, \ldots, n\floor{q/2}}$ is computed as
\begin{align*}
    \overline{W}^{(n)}_{\ell}(\code) = \card{\code} \prob(\LW(\bfx) = \ell)
\end{align*}
Taking logarithms and limits on both sides, and defining $\delta = \ell/n$ yields
\begin{align}
    \mathcal{W}(\delta n) 
    &= \lim_{n \tendsto \infty} \frac{1}{n} \log_2 \left( \card{\code} \right) + \lim_{n \tendsto \infty} \frac{1}{n} \log_2 \left( \prob(\LW(\bfx) = \delta n) \right)\\
    &= R_2 + \lim_{n \tendsto \infty} \frac{1}{n} \log_2 \left( \prob(\bfx \in \leesphere{\delta n, q} ) \right)\\
    &= R_2 + \lim_{n \tendsto \infty} \frac{1}{n} \log_2 \left(\frac{\card{\leesphere{\delta n, q}}}{q^n}  \right)\\
    &= R_2 - \log_2(q) + H(B_{\delta}).
\end{align}

\begin{figure}%[H]
    \centering
    \begin{tikzpicture}[spy using outlines={rectangle, magnification=10, size=1.5cm, connect spies, thin},scale = 0.9, every node/.style={scale=0.85}]
\begin{axis}[
    width=0.7\columnwidth,
    height=0.5\columnwidth,
    legend pos=north east,
    grid=both,
    grid style={dotted,gray!50},
    xmin = 0,
    xmax=1,
    ymin = -0.05,
    ylabel={$\lim_{n\tendsto \infty} \frac{1}{n} \log \overline{W}_{n \delta}^{(n)}$},
    xlabel={Normalized Lee weight $\delta$},
    legend style={font=\scriptsize},
    baseline=0pt,
    ]
    %-----------------------------------------
    % q = 4
    %-----------------------------------------
    \addplot[teal!90!blue,solid,mark options={solid}] table[x=ell,y=spectrum] {avrg_weightSpec_q2_6,3LDPC.txt};
    \addplot[purple,dashed,mark options={solid}] table[x=delta,y=spectrum] {Avrg_weightSpec_q2_rate_0.5_randomcode.txt};
    \draw[thin, gray] (0,0) -- (2, 0);
\end{axis}
\node[] at (0.3\columnwidth, 0.2\columnwidth) {
    \begin{tikzpicture}
    \begin{axis}[
            tiny,
            xmin=0,
            xmax=0.03,
            baseline=0pt,
        ]
        \draw[thin, gray] (0,0) --(0.1, 0);
        \addplot[teal!90!blue,smooth,mark options={solid}] table[x=ell,y=spectrum] {avrg_weightSpec_q2_6,3LDPC.txt};
    \end{axis}
    \end{tikzpicture}
    };
\draw[black, thin] (-0.1,0.4) rectangle (0.3, 0.8);
\draw[black, thin] (0.245\columnwidth, 0.162\columnwidth) -- (0.3, 0.8);
\end{tikzpicture}
    \caption{Spectral growth rate of the average weight enumerator of a regular $(3, 6)$ LDPC code ensembles over $\intmodq{2}$ (solid blue line) versus the spectral growth rate of the average weight enumerator of a random code over $\intmodq{2}$ and rate $R=1/2$ (dashed red line). The logarithm is in base $q$.}
   \label{fig:weightspec_(3,6)_q2}
\end{figure}
\begin{figure}%[H]
    \centering
    \begin{tikzpicture}[spy using outlines={rectangle, magnification=5, size=0.5cm, connect spies, thin},scale = 0.9, every node/.style={scale=1}]
\begin{axis}[
    width = 0.7\columnwidth,
    height = 0.5\columnwidth,
    legend pos=north east,
    grid=both,
    grid style={dotted,gray!50},
    xmin = 0,
    xmax=1,
    ymin = -0.05,
    ylabel={$\lim_{n\tendsto \infty} \frac{1}{n} \log \overline{W}_{n \delta}^{(n)}$},
    xlabel={Normalized Lee weight $\delta$},
    legend style={font=\scriptsize},
    baseline=0pt,
    ]
    %-----------------------------------------
    % q = 4
    %-----------------------------------------
    \addplot[teal!90!blue,solid,mark options={solid}] table[x=ell,y=spectrum] {avrg_weightSpec_q3_6,3LDPC.txt};
    \addplot[purple,dashed,mark options={solid}] table[x=delta,y=spectrum] {Avrg_weightSpec_q3_rate_0.5_randomcode.txt};
    \draw[thin, gray] (0,0) -- (2, 0);
\end{axis}

\node[] at (0.4\columnwidth, 0.2\columnwidth) {
    \begin{tikzpicture}
    \begin{axis}[
            tiny,
            xmin=0,
            xmax=0.04,
            baseline=0pt,
        ]
        \draw[thin, gray] (0,0) --(0.1, 0);
        \addplot[teal!90!blue,smooth,mark options={solid}] table[x=ell,y=spectrum] {avrg_weightSpec_q3_6,3LDPC.txt};
    \end{axis}
    \end{tikzpicture}
    };
\draw[black, thin] (-0.1,0.25) rectangle (0.5, 0.85);
\draw[black, thin] (0.337\columnwidth, 0.157\columnwidth) -- (0.5, 0.85);
\end{tikzpicture}
    \caption{Spectral growth rate of the average weight enumerator of a regular $(3, 6)$ LDPC code ensembles over $\intmodq{3}$ (solid blue line) versus the spectral growth rate of the average weight enumerator of a random code over $\intmodq{3}$ and rate $R=1/2$ (dashed red line). The logarithm is in base $q$.}
   \label{fig:weightspec_(3,6)_q3}
\end{figure}
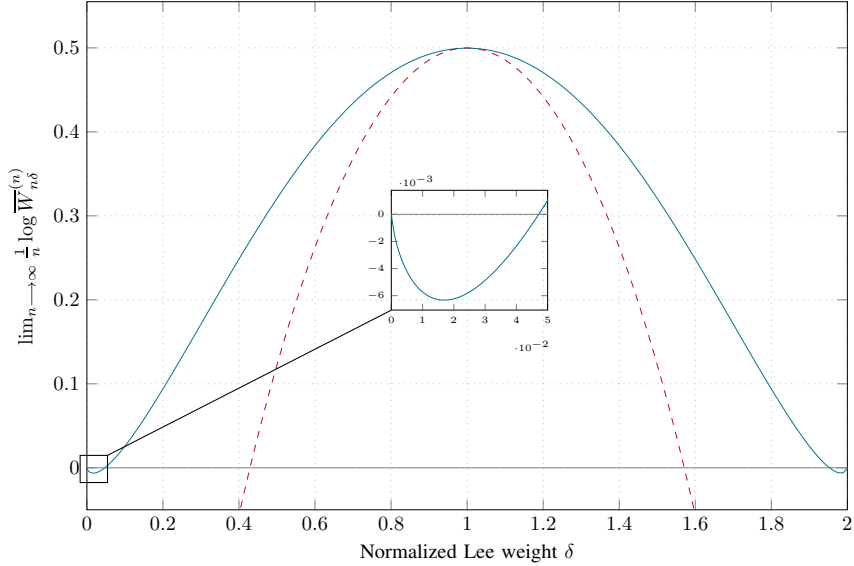
\begin{figure}%[H]
    \centering
    \begin{tikzpicture}[scale = 0.9, every node/.style={scale=0.85}]
\begin{axis}[
    width = 0.7\columnwidth,
    height =0.5\columnwidth,
    legend pos=north east,
    grid=both,
    grid style={dotted,gray!50},
    xmin = 0,
    xmax=2,
    ymin = -0.05,
    ylabel={$\lim_{n\tendsto \infty} \frac{1}{n} \log \overline{W}_{n \delta}^{(n)}$},
    xlabel={Normalized Lee weight $\delta$},
    legend style={font=\scriptsize},
    baseline=0pt,
    ]
    \addplot[teal!90!blue,solid,mark options={solid}] table[x=ell,y=spectrum] {avrg_weightSpec_q4_6,3LDPC.txt};
    \addplot[purple,dashed,mark options={solid}] table[x=delta,y=spectrum] {Avrg_weightSpec_q4_rate_0.5_randomcode.txt};
    \draw[thin, gray] (0,0) -- (2, 0);
\end{axis}

\node[] at (0.3\columnwidth, 0.2\columnwidth) {
    \begin{tikzpicture}
    \begin{axis}[
            tiny,
            xmin=0,
            xmax=0.05,
            baseline=0pt,
        ]
        \draw[thin, gray] (0,0) --(0.1, 0);
        \addplot[teal!90!blue,smooth,mark options={solid}] table[x=ell,y=spectrum] {avrg_weightSpec_q4_6,3LDPC.txt};
    \end{axis}
    \end{tikzpicture}
    };
\draw[black, thin] (-0.1,0.4) rectangle (0.3, 0.8);
\draw[black, thin] (0.245\columnwidth, 0.162\columnwidth) -- (0.3, 0.8);
\end{tikzpicture}
    \caption{Spectral growth rate of the average weight enumerator of a regular $(3, 6)$ LDPC code ensembles over $\intmodq{4}$ (solid blue line) versus the spectral growth rate of the average weight enumerator of a random code over $\intmodq{4}$ and rate $R=1/2$ (dashed red line). The logarithm is in base $q$.}
   \label{fig:weightspec_(3,6)_q4}
\end{figure}
%%%%%%%%%%%%%%%%%%%%%%%%%%%%%%%%%%%%%%%%%%%%%%%%%%%%%%%%%%%%%

%%%%%%%%%%%%%%%%%%%%%%%%%%%%%%%%%%%%%%%%%%%%%%%%%%%%%%%%%%%%%
\section{Performance Analysis of LDPC Codes over the Lee Channels}\label{sec:LDPCperformance}
In this section, we analyze the error-correction performance of regular LDPC codes over the two channel models presented in Section \ref{sec:channel}. First and foremost, we discuss an upper bound on the block error probability under \ac{ML} decoding over the {\memLC} using a union bound argument. We then focus on the performance with respect to the \ac{BP} decoder and the \ac{SMP} decoder, respectively. For both decoders we start by adapting the decoders to the Lee metric over integer residue rings discussing the main changes and assumptions needed for providing a full density evolution analysis.

%%%%%%%%%%%%%%%%%%%%%%%%%%%%%%%%%%%%%%%%%%%%%%%%%%%%%%%%%%%%%%%%%
%%%%%%%%%%%%%%%%%%%%%%%%%%%%%%%%%%%%%%%%%%%%%%%%%%%%%%%%%%%%%%%%%
\subsection{Bounds on the Block Error Probability Based on the Lee Weight Spectrum}
We are interested in the average block error probability under \ac{ML} decoding of random regular \ac{LDPC} code ensembles over $\intmodq{q}$ in the {\memLC}. As the channel is symmetric, we can assume the transmission of the zero codeword. The \ac{ML} decoder fails if and only if there is a nonzero codeword $\bfc \in \code\setminus \set{\mathbf{0}}$ satisfying
\begin{align}\label{equ:decodingFailure}
    P_{\bfY \st \bfX}(\bfy \st \bfzero) \leq P_{\bfY \st \bfX}(\bfy \st \bfc).
\end{align}
We refer to the probability of this event as the pairwise error probability and denote it by $\pep(\bfzero \rightarrow \bfc)$.
Note that in the spirit of obtaining an upper bound on the block error probability, we break ties always in favor of the erroneous codeword. Using a union bound argument, we observe that the block error probability is upper bounded by the sum of all pairwise error probabilities, i.e.,
\begin{align}\label{eq:def:decodingFailure}
    P_B(\code) \leq \sum_{\bfc \in \code\setminus \set{\bfzero}} \pep(\bfzero \rightarrow \bfc).
\end{align}
We can rewrite the pairwise error probability as 
\begin{align}\label{equ:pep_description}
    \pep(\bfzero \rightarrow \bfc) =  \prob\left( \frac{P_{\bfY \st \bfX}(\bfy \st \bfzero)}{P_{\bfY \st \bfX}(\bfy \st \bfc)} \leq 1 \right).
\end{align}
Denoting the log-likelihood ratio as
\begin{align*}
    \Lambda(y, c) := \log \left( \frac{P_{Y \st X}(y \st 0)}{P_{Y \st X}(y \st c)}\right)
\end{align*}
we have $\pep(\bfzero \rightarrow \bfc) = \prob\left( \sum_{i = 1}^n \Lambda(y_{i}, c_{i}) \leq 0 \right)$. Hence, the analysis reduces to the analysis of the distribution of the random variables $\Lambda_{\ell} := \Lambda(Y, c=\ell)$, where $Y$ is a random variable distributed as $B_\delta$. Owing to the symmetry of the Boltzmann distribution, we have that 
$$P_{Y\st X}(y \st c) = P_{Y\st X}(-y \st -c)$$
and therefore also
$$\Lambda(y, c = \ell) = \Lambda(-y, c = -\ell).$$
It follows that the distribution of $\Lambda_{\ell}$ equals the distribution of $\Lambda_{-\ell}$. Hence, the evaluation of \eqref{equ:pep_description} can be carried out by counting the number of elements in $\bfc$ possessing Lee weight $\ell$ with $\ell \in \set{0, \ldots \floor{q/2}}$. We will therefore again make use of the Lee type of a codeword (see Definition \ref{def:type_codeword}).
Thus, we can rewrite the pairwise error probability for any nonzero codeword $\bfc \in \code\setminus \set{0}$ as follows
\begin{align*}
    \pep(\bfzero \rightarrow \bfc) = \prob \left( \sum_{\ell=1}^{\floor{q/2}} \sum_{j = 1}^{n\theta_{\bfc}(\ell)} \Lambda_j \leq 0 \right).
\end{align*}
This gives us an exact value of the pairwise error probability
%and hence also an exact expression for the \textcolor{red}{block error} probability
under \ac{ML} decoding. However, this expression requires eventually to iterate over every Lee type in the code $\code$ and is therefore inefficient for codes with large parameters. In the following we present a ``worst case'' candidate for the pairwise error probability which ultimately serves to upper bound the block error probability.

\begin{lemma}\label{lem:worstcase_PEP_smallWeight}
    Consider a nonzero codeword $\bfc \in \code$ such that $\LW(\bfc) = t $.
    Let $\bfx_{|_{t}} \in (\intmodq{q})^n$ be of Lee type $\bm{\theta}_{\bfx_{|_{t}}} = (1 - t/n, t/n, 0, \ldots , 0)$. Over a {\memLC} with $\delta\leq \delta_q$ we have
    \begin{align*}
        \pep(\bfzero \rightarrow \bfc) \leq \pep(\bfzero \rightarrow \bfx_{|_{t}}) 
    \end{align*}
    where equality holds if and only if $\bfc$ is of the same Lee type $\bm{\theta}_{\bfc} = \bm{\theta}_{\bfx_{|_{t}}}$.
\end{lemma}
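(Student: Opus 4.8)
The plan is to show that, among all length-$n$ vectors of Lee weight $t$, the quantity $\pep(\bfzero\to\bfc)$ is maximised by the type $\bm{\theta}_{\bfx_{|_{t}}}$ that puts the whole Lee weight on coordinates of Lee weight $1$, with the maximum attained only by that type. First I would record the per-coordinate description used in the discussion above: for the Boltzmann channel the log-likelihood ratio is $\Lambda(y,c)=\beta\big(\LW(y-c)-\LW(y)\big)$, so transmitting $\bfzero$ (legitimate by symmetry) gives $\bfY=\bfE$ with the $E_i$ independent of common law $B_\delta$, and for $\delta<\delta_q$ (hence $\beta>0$)
\[
\pep(\bfzero\to\bfc)=\prob\Big(\textstyle\sum_{i=1}^n D_i\le 0\Big),\qquad D_i:=\LW(E_i-c_i)-\LW(E_i),
\]
where the $D_i$ are independent and the law of $D_i$ depends on $c_i$ only through $\LW(c_i)$. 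Denote by $D^{(\ell)}$ this law when $\LW(c_i)=\ell$; since $D^{(0)}\equiv 0$, only the nonzero entries of $\bfc$ contribute. The boundary case $\delta=\delta_q$, where $\beta=0$ and every pairwise error probability equals $1$, is treated separately (there the equality characterization is to be read as vacuous).

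Next I would run an exchange argument on the Lee type. Starting from $\bfc$, repeatedly take a coordinate of Lee weight $\ell\ge 2$ and replace it by $\ell$ coordinates of Lee weight $1$, drawing the extra $\ell-1$ positions from the $n-t\ge 0$ zero coordinates --- these exist precisely because $t\le n$. Each move preserves the total Lee weight $t$; after finitely many moves the vector has Lee type $\bm{\theta}_{\bfx_{|_{t}}}$, and no move is ever needed exactly when $\bm{\theta}_{\bfc}=\bm{\theta}_{\bfx_{|_{t}}}$ already. Hence the lemma reduces to: a single move cannot lower the pairwise error probability, i.e.\ for the independent sum $A$ of the $D$'s of the untouched coordinates,
\[
\prob\big(A+D^{(\ell)}\le 0\big) \le \prob\Big(A+\textstyle\sum_{j=1}^{\ell}D^{(1)}_j\le 0\Big),
\]
strictly when $\ell\ge 2$. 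Since the modified and untouched coordinates are independent, this follows once one proves the distributional comparison that $D^{(\ell)}$ stochastically dominates the $\ell$-fold self-convolution of $D^{(1)}$, strictly for $\ell\ge 2$.

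I expect this distributional comparison to be the crux of the proof. The route I would take is to write out the law of $D^{(\ell)}=\LW(E-c)-\LW(E)$ for $E\sim B_\delta$ and $\LW(c)=\ell$ using the circular picture of $\intmodq{q}$: as $e$ moves along the geodesic between $0$ and $c$ the value $\LW(e-c)-\LW(e)$ sweeps through $\ell,\ell-2,\dots,-\ell$, while for $e$ outside this geodesic it is governed by the distances to the two endpoints and stays within $[-\ell,\ell]$; together with the involution $e\mapsto c-e$, which flips the sign of $D^{(\ell)}$ and leaves the Boltzmann weights $\exp(-\beta\LW(e))$ unchanged, this yields closed forms for the cumulative distribution functions of $D^{(\ell)}$ and of $\sum_{j=1}^{\ell}D^{(1)}_j$. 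The delicate step is then the term-by-term comparison of these two cumulative distribution functions --- tracking how the Boltzmann weights spread the mass of $D^{(\ell)}$ against the multinomial weights of the self-convolution --- which is also where the strict inequality for $\ell\ge 2$, hence the equality characterization, comes from; the remaining ingredients (the reduction, the bookkeeping of zero coordinates, and the $\delta=\delta_q$ case) are routine.
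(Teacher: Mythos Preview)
The paper does not actually prove this lemma. After the statement, the authors only remark that the nonzero positions of $\bfx_{|_t}$ have Lee weight $1$ and then write that ``Figure~\ref{fig:worstcase_sameLee_diffHamm} gives empirical evidence supporting the result of Lemma~\ref{lem:worstcase_PEP_smallWeight}.'' So there is no proof to compare against; the result is offered as an empirically supported claim.

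Your reduction to the per-coordinate variables $D^{(\ell)}=\LW(E-c)-\LW(E)$ and the exchange scheme are set up correctly, and the bookkeeping of zero coordinates (you need $t\le n$, which is implicit in the Lee type of $\bfx_{|_t}$) is fine. The genuine gap is the crux step: the stochastic dominance $D^{(\ell)}\succeq_{\mathrm{st}}\sum_{j=1}^{\ell}D_j^{(1)}$ is \emph{false} in general, already for $\ell=2$. Take $q=5$ and $\beta>0$ small (so $\delta<\delta_q$). With $c=1$ one computes $D^{(1)}\in\{-1,0,1\}$ with probabilities $p_1+p_2,\;p_3,\;p_0+p_4$, and with $c=2$ one gets $D^{(2)}\in\{-2,-1,0,1,2\}$ with probabilities $p_2,\;p_3,\;p_1,\;p_4,\;p_0$, where $p_i\propto e^{-\beta\LW(i)}$. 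Then
\[
\prob\big(D^{(2)}\le -1\big)=p_2+p_3,
\qquad
\prob\Big(D_1^{(1)}+D_2^{(1)}\le -1\Big)=(p_1+p_2)^2+2(p_1+p_2)p_3,
\]
and for $\beta=0.1$ (say) the first is $\approx 0.368$ while the second is $\approx 0.293$; the inequality goes the wrong way, so the cumulative distribution functions cross and no stochastic order holds. Since your exchange argument needs the comparison $\prob(A+D^{(\ell)}\le 0)\le\prob(A+\sum D_j^{(1)}\le 0)$ to hold for \emph{every} intermediate $A$ that can arise --- and $A$ can place mass near any integer by choosing the remaining coordinates suitably --- the failure of stochastic dominance means the exchange step cannot be pushed through as stated. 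If the lemma is true, it will require a more delicate argument that exploits the specific structure of the admissible $A$'s (sums of independent $D^{(\cdot)}$'s) rather than a pointwise CDF comparison; your plan identifies the right decomposition but the proposed tool is too strong to be available.
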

Observe that the nonzero positions of $\bfx_{|_{t}}$ consist only of elements of Lee weight $1$. Therefore, it holds that $\LW(\bfx_{|_{t}}) = \HW(\bfx_{|_{t}}) \leq \LW(\bfc)$.
Figure \ref{fig:worstcase_sameLee_diffHamm} gives empirical evidence supporting the result of Lemma \ref{lem:worstcase_PEP_smallWeight}.

\begin{figure}%[H]
    \centering
    \begin{tikzpicture}[scale = 0.9, every node/.style={scale=0.85}]
    \begin{semilogyaxis}[
        width = 0.7\columnwidth,
        height = 0.5\columnwidth,
	legend pos=south east,
	grid=both,
	grid style={dotted,gray!50},
	xmax=1.2,
	ymax=5^-1,
	ylabel={Pairwise error probability},
	xlabel={Normalized Lee weight $\delta$},
	legend style={font=\normalsize},
	]
    \addplot[teal!80!blue,dotted,mark=x,mark options={solid}] table[x=Delta,y=PEP] {sameLee_diffHamm_q9_v11111111111.txt};\addlegendentry{Hamming weight $11$}
    
    \addplot[teal!60!cyan,dotted,mark=o,mark options={solid}] table[x=Delta,y=PEP] {sameLee_diffHamm_q9_v111111122.txt};\addlegendentry{Hamming weight $q9$}
    
    \addplot[green!40!teal,dotted,mark=star,mark options={solid}] table[x=Delta,y=PEP] {sameLee_diffHamm_q9_v1111124.txt};\addlegendentry{Hamming weight $7$}
    
    \addplot[purple!40!yellow,dotted,mark=triangle,mark options={solid}] table[x=Delta,y=PEP] {sameLee_diffHamm_q9_v1244.txt};\addlegendentry{Hamming weight $4$}
    
    \addplot[purple!70!orange,dotted,mark=square,mark options={solid}] table[x=Delta,y=PEP] {sameLee_diffHamm_q9_v344.txt};\addlegendentry{Hamming weight $3$}

    \end{semilogyaxis}
\end{tikzpicture}
    \caption{Comparison of the pairwise error probabilities over $\intmodq{9}$ of vectors of Lee weight $11$ and varying Hamming weight.}\label{fig:worstcase_sameLee_diffHamm}
\end{figure}
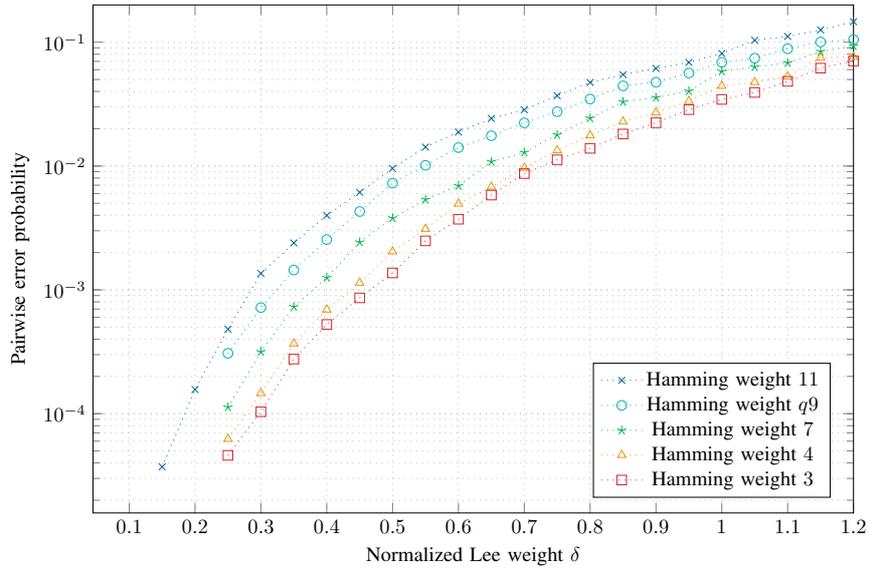

We can use these results to upper bound on the block error probability of a  linear code over the {\memLC} as a function of the codes Lee distance spectrum, for $\delta\leq \delta_q $.

\begin{corollary}\label{cor:failure_upperbound}
    Consider an $[n, k]$ linear code $\code \subset (\intmodq{q})^n$. For all $\ell \in \set{0, \ldots, n\floor{q/2}}$ let $W_{\ell}^{(n)}(\code)$ denote the Lee weight enumerator of $\code$.
    %number of codewords in $\code$ of Lee weight $\ell$.
    The block error probability of $\code$ under \ac{ML} decoding over the {\memLC} $\delta\leq \delta_q $ is upper bounded as
    \begin{align}
        P_B(\code)
            &\leq 
        \sum_{\ell = 1}^{n\floor{q/2}} W_{\ell}^{(n)}(\code)\prob\left( \sum_{i = 1}^{\min(\ell, n)} \Lambda_1 < 0 \right).\label{eq:UB}
    \end{align}
\end{corollary}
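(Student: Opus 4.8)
The plan is to start from the union bound \eqref{eq:def:decodingFailure}, regroup the sum over codewords according to their Lee weight, and then bound each resulting pairwise error probability by the ``flat'' worst case supplied by Lemma \ref{lem:worstcase_PEP_smallWeight}. Concretely, I would write
\begin{equation*}
P_B(\code)\;\leq\;\sum_{\bfc\in\code\setminus\{\bfzero\}}\pep(\bfzero\rightarrow\bfc)\;=\;\sum_{\ell=1}^{n\floor{q/2}}\;\sum_{\substack{\bfc\in\code\\\LW(\bfc)=\ell}}\pep(\bfzero\rightarrow\bfc),
\end{equation*}
using that every nonzero codeword has a Lee weight in $\{1,\dots,n\floor{q/2}\}$ and that, by definition, the inner sum for a fixed $\ell$ ranges over exactly $W_{\ell}^{(n)}(\code)$ codewords. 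It then suffices to prove that every nonzero $\bfc$ with $\LW(\bfc)=\ell$ satisfies $\pep(\bfzero\rightarrow\bfc)\leq\prob\big(\sum_{i=1}^{\min(\ell,n)}\Lambda_1<0\big)$; substituting this codeword-independent bound and pulling it out of the inner sum yields \eqref{eq:UB}.

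For $\ell\leq n$ this is immediate from Lemma \ref{lem:worstcase_PEP_smallWeight}: since $\delta\leq\delta_q$, the lemma gives $\pep(\bfzero\rightarrow\bfc)\leq\pep(\bfzero\rightarrow\bfx_{|_{\ell}})$, and $\bfx_{|_{\ell}}$ has Lee type $(1-\ell/n,\ell/n,0,\dots,0)$, i.e.\ exactly $\ell$ coordinates of Lee weight $1$ and $n-\ell$ zero coordinates. Feeding this type into the representation $\pep(\bfzero\rightarrow\bfc)=\prob\big(\sum_{\ell'=1}^{\floor{q/2}}\sum_{j=1}^{n\theta_{\bfc}(\ell')}\Lambda_{\ell'}\leq 0\big)$ derived just before the lemma makes every $\ell'\geq 2$ term vanish, leaving $\prob\big(\sum_{i=1}^{\ell}\Lambda_1\leq 0\big)$, which is the summand of \eqref{eq:UB} with $\min(\ell,n)=\ell$ (the difference between $\leq 0$ and $<0$ being absorbed by the tie-breaking convention for the ML list decoder).

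For $\ell>n$ the vector $\bfx_{|_{\ell}}$ does not exist, and this is the main obstacle. One needs to extend the monotonicity underlying Lemma \ref{lem:worstcase_PEP_smallWeight}: a codeword of Lee weight exceeding $n$ has Hamming weight at most $n$ and must carry average per-coordinate Lee weight above one, and ``flattening'' it — lowering large per-coordinate Lee weights and, whenever a zero coordinate is still available, splitting a coordinate into coordinates of smaller Lee weight — does not decrease $\pep(\bfzero\rightarrow\cdot)$, by the same reasoning used in the proof of the lemma. Iterating until all $n$ coordinates are occupied and every per-coordinate weight has been brought down to $1$ lands on the all-weight-one vector $\bfx_{|_{n}}$ (of Lee weight $n$), so that $\pep(\bfzero\rightarrow\bfc)\leq\pep(\bfzero\rightarrow\bfx_{|_{n}})=\prob\big(\sum_{i=1}^{n}\Lambda_1<0\big)$, i.e.\ the summand with $\min(\ell,n)=n$. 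The delicate point is verifying that each flattening step really cannot decrease $\pep$: because $\Lambda_j$ is \emph{not} simply stochastically larger than $\Lambda_1$ (it has heavier tails), this requires the refined comparison already contained in Lemma \ref{lem:worstcase_PEP_smallWeight} rather than a crude stochastic-dominance argument. Once that is granted, combining the two regimes and summing over the $W_{\ell}^{(n)}(\code)$ codewords of each Lee weight $\ell$ gives \eqref{eq:UB}.
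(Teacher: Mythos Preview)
Your approach is essentially the same as the paper's: start from the union bound \eqref{eq:def:decodingFailure}, group codewords by Lee weight, and invoke Lemma \ref{lem:worstcase_PEP_smallWeight} to replace each $\pep(\bfzero\rightarrow\bfc)$ by the pairwise error probability against an all-Lee-weight-one comparison vector. The paper's proof is terse and does not separate the two regimes; it simply says ``apply the lemma to upper bound by the probability of decoding into a word whose nonzero elements are of Lee weight $1$ only.''

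Where you go beyond the paper is in flagging the $\ell>n$ case. You are right that Lemma \ref{lem:worstcase_PEP_smallWeight} as stated presupposes the existence of $\bfx_{|_{t}}$ with type $(1-t/n,t/n,0,\dots,0)$, which forces $t\leq n$; the $\min(\ell,n)$ in the corollary is precisely meant to cap the comparison at the all-ones vector $\bfx_{|_{n}}$ when $\ell>n$. The paper's proof does not spell out the extra monotonicity step needed here, whereas you correctly identify it and sketch the flattening argument. So your plan is not merely correct but in fact more explicit than the paper on this point; the only caveat is that the flattening-monotonicity you need is not literally contained in the statement of Lemma \ref{lem:worstcase_PEP_smallWeight} and must be argued from whatever mechanism underlies its proof, as you yourself note.
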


\begin{proof}
    Recall from \eqref{eq:def:decodingFailure} that the block error probability of $\code$ is upper bounded by the sum of all pairwise error probabilities.
    By applying Lemma \ref{lem:worstcase_PEP_smallWeight} to the $\pep$-terms, the pairwise error probability can further be upper bounded by the probability of sending the zero codeword but decoding into a word whose nonzero elements are of Lee weight $1$ only. 
\end{proof}

\begin{example}\label{ex:UB}
Figure \ref{fig:PEP_bound_plot} depicts the union bound provided in Corollary \ref{cor:failure_upperbound}, together with the block error probability estimated via Monte Carlo simulation. For the comparison we used a linear code over $\intmodq{7}$ of length $n = 6$ and dimension $k = 2$ with generator matrix
\begin{align*}
    \mathbf{G} = \begin{pmatrix} 1 & 0 & 3 & 3 & 3 & 0 \\ 0 & 1 & 0 & 4 & 3 & 3 \end{pmatrix}.
\end{align*}
As usually observed, the union bound provide accurate estimates at sufficiently low error probability.
\end{example}

\begin{figure}[]
    \centering
    \begin{tikzpicture}[scale = 0.9, every node/.style={scale=0.85}]
    \begin{semilogyaxis}[
        width=0.7\columnwidth,
        height=0.5\columnwidth,
        legend pos=south east,
        grid=both,
        grid style={dotted,gray!50},
        xmin=0.1,
        xmax=1,
        ymax=1,
        ylabel={Block error probability},
        xlabel={Normalized Lee weight $\delta$},
        legend style={font=\normalfont},
        legend cell align={left}
        ]
        
        \addplot[teal!90!blue,dotted,mark=x,mark options={solid}] table[x=Delta,y=avrgPEP] {PEP_Bound_n6_q7_k2_2.0.txt};\addlegendentry{Monte Carlo simulation}
        
        \addplot[purple!70!orange,dotted,mark=square,mark options={solid}] table[x=Delta,y=boundPEP] {PEP_Bound_n6_q7_k2_2.0.txt};\addlegendentry{Bound from Corollary \ref{cor:failure_upperbound}}

    \end{semilogyaxis}
\end{tikzpicture}
    \caption{Comparison of the union bound from Corollary \ref{cor:failure_upperbound} with respect to the performance measured via Monte Carlo simulation for the linear code over $\intmodq{7}$ of length $n = 6$ and dimension $k = 2$ from Example \ref{ex:UB}.}
    \label{fig:PEP_bound_plot}
\end{figure}
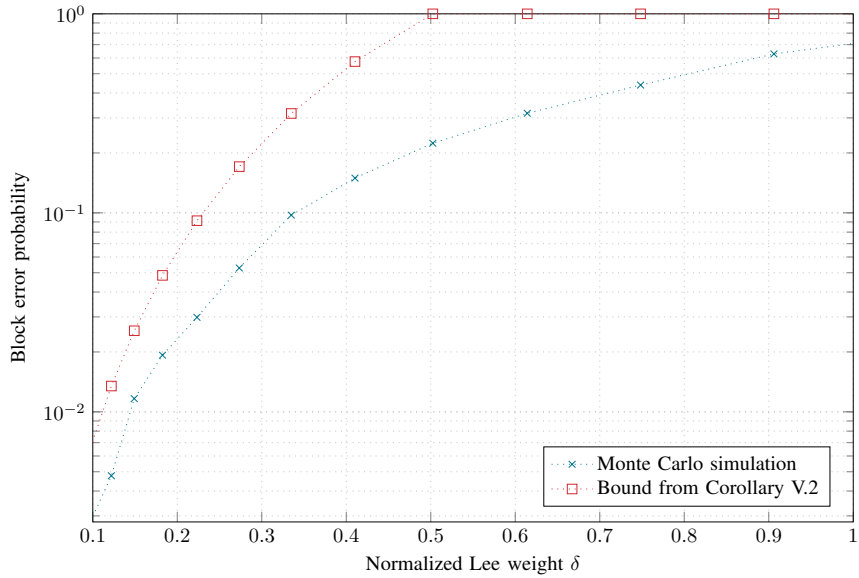

The union bound of Corollary \ref{cor:failure_upperbound} can be readily used to study the error floor performance of regular \ac{LDPC} code ensembles. To do so, it is sufficient to replace the weight enumerator $W_{\ell}^{(n)}$ in \eqref{eq:UB} with the ensemble average enumerator $\overline{W}_{\ell}^{(n)}$. An example is provided in Figure \ref{fig:LDPC_weightperformance}, where the union bound on the \ac{ML} decoding average block error probability for the $(3, 6)$-LDPC code ensemble of length $n=256$ over $\intmodq{4}$ is depicted. The result is compared with the numerical simulation for a code from the ensemble, under \ac{BP} decoding. As typical of union bounds on the block error probability, the bound is not informative above the cut-off rate of the channel. However, it provides an indication of the error probability regime at which an error floor may be observed, allowing for a quick estimation of the capability of certain code ensembles to attain given target error probabilities.

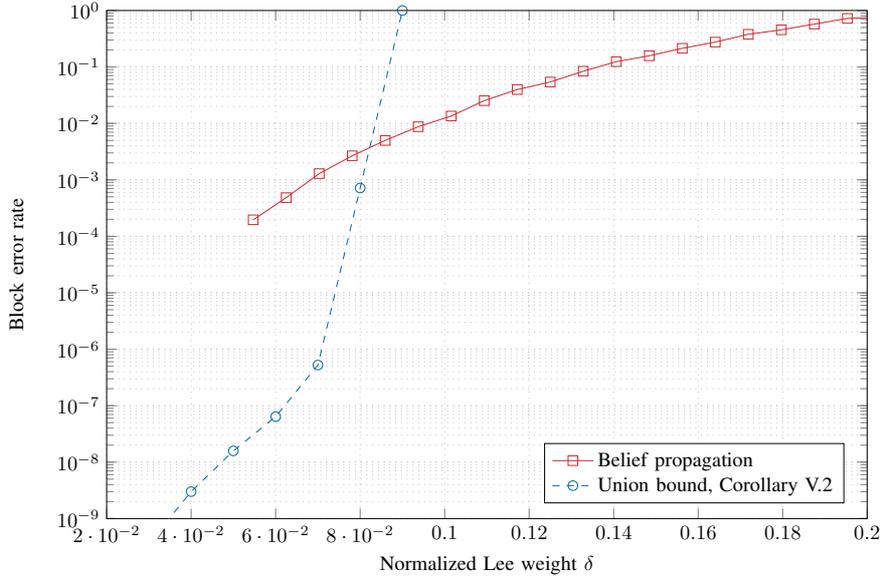
\begin{figure}%[H]
    \centering
    \begin{tikzpicture}[scale = 0.9, every node/.style={scale=0.85}]
\begin{semilogyaxis}[
    width=0.7\columnwidth,
    height=0.5\columnwidth,
    legend pos=south east,
    grid=both,
    grid style={dotted,gray!50},
    xmin=0.02,
    xmax=0.2,
    ymin=1e-9,
    ymax=1,
    xlabel={Normalized Lee weight $\delta$},
    ylabel={Block error rate},
    y label style={at={(-0.1,0.5)}},
    legend cell align=left,
    legend style={font=\normalfont},
    ]
    
    %%% NB BP Plots
    
    \addplot[purple!70!orange,solid,mark=square,mark options={solid}] table[x=NE,y=CER] {NB_BP_Z4_n256_memoryless.txt};\addlegendentry{Belief propagation}
    
    \addplot[teal!80!blue,dashed,mark=o,mark options={solid}] table[x=Delta,y=boundPEP] {2.0binomial_PEP_Bound_n256_q4.txt};\addlegendentry{Union bound, Corollary \ref{cor:failure_upperbound}}

\end{semilogyaxis}
\end{tikzpicture}
    \caption{Union bound versus belief propagation over $\intmodq{4}$ for a random $(3, 6)$ LDPC code of length $n = 256$.}
    \label{fig:LDPC_weightperformance}
\end{figure}

%%%%%%%%%%%%%%%%%%%%%%%%%%%%%%%%%%%%%%%%%%%%%%%%%%%%%%%%%%%%%%%%%
%%%%%%%%%%%%%%%%%%%%%%%%%%%%%%%%%%%%%%%%%%%%%%%%%%%%%%%%%%%%%%%%%
\subsection{Density Evolution Analysis}\label{subsec:DE}
The analysis of the Lee spectrum of LDPC code ensembles can be used, in conjunction with the union bound, to analyze the ensembles behaviour under \ac{ML} decoding at low error rates. Nevertheless, it fails to capture the block error probability behaviour in the waterfall region, under iterative decoding. We hence complement the distance spectrum analysis with a density evolution characterization of the ensemble in the limit of large block lengths.
In particular, we estimate the asymptotic iterative decoding threshold over the {\memLC} under \ac{BP} and \ac{SMP} decoding.
The iterative decoding threshold $\thr$ is defined as the largest value of the channel parameter $\delta$ for which, in the limit of large $n$ and large maximal number of iterations $\ell_\text{max}$, the symbol error probability of an \ac{LDPC} code picked randomly from a $(d_{v}, d_{c})$ code ensemble becomes vanishing small~\cite{studio3:richardson01capacity}. Owing to the complexity of tracking the evolution of the distribution of multi-dimensional messages, under \ac{BP} decoding we resort to the Monte Carlo method \cite{davey1998monte}. We denote by $\delta_{\mathsf{BP}}^{\star}$ the decoding threshold under \ac{BP} decoding.

The density evolution analysis for the \ac{SMP} decoder has been introduced in~\cite[Sec.~IV]{Lazaro19:SMP}. We will briefly sketch the idea and emphasize the respective modifications according to the new {\memLC}. For the \ac{SMP} decoder the density evolution analysis not only aims at estimating the decoding threshold $\thrSMP$ but it also provides bounds on the error probabilities $\xi$ of the extrinsic channel modelled as \ac{qSC} which are needed in the computation of the aggregated extrinsic log-likelihood vector \eqref{eq:def_E_vec}.
Since the {\memLC} is symmetric and the code is linear, we can assume that the zero codeword has been transmitted. Similar to the notation used in the description of the \ac{SMP} decoder, we let $\msg{\vn}{\cn}^{(\ell)}$ denote the message sent from variable node $\vn$ to check node $\cn$ in the $\ell$-th iteration. For every $a \in \intmodq{q}$, let us define the probability of sending a message $\msg{\vn}{\cn}^{(\ell)} = a$, knowing that originally zero has been transmitted as
\begin{align*}
    p_a^{(\ell)} := \prob \left( \msg{\vn}{\cn}^{(\ell)} = a \st X = 0 \right).
\end{align*}
Hence, recalling the {\memLC} transition probability $P_{Y \st X}(y \st x)$ from~\eqref{eq:Lee_channel}, we initialize the density evolution analysis by computing for each $a\in \intmodq{q}$ the probabilities
\begin{align*}
    p_a^{(0)} = P_{Y \st X}(a \st 0).
\end{align*}
As indicated above, except from the computation of the aggregated extrinsic likelihood vector, the remaining steps of the density evolution analysis are identical to~\cite[Sec.~IV]{Lazaro19:SMP}. In particular, we employ the \ac{qSC} approximation for the extrinsic channel.

Table~\ref{tab:dec_thresholds} records the decoding thresholds $\thrSMP$ and $\thrNBP$ for the \ac{SMP} and \ac{BP} decoder, respectively, for both $(3, 6)$ and $(4, 8)$ regular LDPC code ensembles with $q$ ranging from $5$ to $8$, as well as the Shannon limit $\thrSH$ which is given by the solution in $\delta$ of $R_2 = \log_2(q) - H(B_{\delta})$ for the rate $R_2 = 1/2$.

\renewcommand{\arraystretch}{1.2}
\begin{table}[H]
\centering
\caption{Decoding thresholds for regular \ac{LDPC} code ensembles under \ac{BP} and \ac{SMP} decoding.}
\begin{tabular}{ccccc}
 \hline\hline
 $~~~q~~~$ & $~~~(v,c)~~~$ & $~~~\thrNBP~~~$ & $~~~\thrSMP~~~$ & $~~~\thrSH~~~$
 \\ \hline\hline 
 \multirow{2}{*}{$5$} & $(3,6)$ & $0.2148$ &   $0.1039$ & \multirow{2}{*}{0.2684}
 \\
  & $(4,8)$ & $0.1802$ & $0.1200$ &
 \\\hline
 \multirow{2}{*}{$6$} & $(3,6)$ & $0.2485$ & $0.1151 $ & \multirow{2}{*}{0.3147}
 \\  
  & $(4,8)$ & $0.2217$ &   $0.1405$ &
 \\\hline
 \multirow{2}{*}{$7$} & $(3,6)$ & $0.3086$ &  $0.1261$ & \multirow{2}{*}{0.3560}
 \\
  & $(4,8)$ & $0.2686$ & $0.1539$  &
 \\\hline
 \multirow{2}{*}{$8$} & $(3,6)$ & $0.3135$ & $0.1374$ & \multirow{2}{*}{0.3950}
 \\
  & $(4,8)$ & $0.26904$ & $0.1623$ &
 \\\hline\hline
\end{tabular}
\label{tab:dec_thresholds}
\end{table}

%%%%%%%%%%%%%%%%%%%%%%%%%%%%%%%%%%%%%%%%%%%%%%%%%%%%%%%%%%%%%%%%%
%%%%%%%%%%%%%%%%%%%%%%%%%%%%%%%%%%%%%%%%%%%%%%%%%%%%%%%%%%%%%%%%%
\begin{remark}
    The choice of the \ac{DMC} used to model the extrinsic channel plays a crucial role for the \ac{SMP} algorithm, especially concerning the decoding performance. In \cite{lechner2011analysis}, for the case of \ac{BMP} decoding, it was suggested to model the \ac{VN} inbound messages as observations of a \ac{BSC}, whose transition probability was estimated by means of density evolution analysis. The approach was generalized in \cite{Lazaro19:SMP} for \ac{SMP}, where the \ac{VN} inbound messages are modelled as observations of a \ac{qSC}. In our setting we will also model the extrinsic channel as a \ac{qSC} defined in \eqref{eq:QSCapproxdec}, although in our setting the \ac{qSC} model holds only in an approximate sense.
\end{remark}

The adoption of the \ac{qSC} approximation is particularly useful from a practical viewpoint since the \ac{VN} processing in \ac{SMP} decoding becomes particularly simple if the \ac{VN}-to-\ac{CN} messages are assumed to be observations of an extrinsic \ac{qSC}. Moreover, this specific choice is motivated by the fact that, for \ac{LDPC} codes over finite fields, the extrinsic channel transition probabilities, averaged over a uniform distribution of nonzero elements in the parity-check matrix, yield (in the limit of a large block length) a \ac{qSC} \cite{Lazaro19:SMP}.
The following Lemma for $q$ prime, whose proof is trivial, supports this statement.
    \begin{lemma}\label{lem:qSC_prime}
        Consider a prime number $q$. Let $H$ be a random variable drawn uniformly at random form the multiplicative group $\units{q}$ and let $X$ be any random variable over $\intmodq{q}$. Define the random variable $V = X \cdot H$. Then $V$ follows a \ac{qSC}-like distribution given as
        \begin{align}
            \prob (V = v) 
            =
            \begin{cases}
                \prob (X = 0) & \text{if } v = 0 \\
                \frac{1}{q-1} (1 - \prob (X = 0)) & \text{else}.
            \end{cases}
        \end{align}
    \end{lemma}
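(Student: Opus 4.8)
The plan is to compute $\prob(V = v)$ directly by conditioning on the value of $X$ and invoking the law of total probability, using the standing assumption that $H \indep X$. The only structural fact required is that, since $q$ is prime, every nonzero residue is a unit, i.e.\ $\intmodq{q}\setminus\set{0} = \units{q}$, and that for any fixed $x \in \units{q}$ the map $u \mapsto x u$ is a bijection of $\units{q}$ onto itself.

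First I would describe the conditional law of $V$ given $X = x$. If $x = 0$, then $V = 0$ deterministically, so $\prob(V = 0 \st X = 0) = 1$ and $\prob(V = v \st X = 0) = 0$ for every $v \ne 0$. If $x \ne 0$, then $x \in \units{q}$ and $V = xH$; since $H$ is uniform on $\units{q}$ and left multiplication by $x$ permutes $\units{q}$, the variable $xH$ is again uniform on $\units{q}$. Hence $\prob(V = v \st X = x) = \tfrac{1}{q-1}$ for every $v \in \units{q}$ and $\prob(V = 0 \st X = x) = 0$.

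Then I would assemble the two cases. For $v = 0$, the law of total probability gives $\prob(V = 0) = \prob(X = 0)\cdot 1 + \sum_{x \ne 0}\prob(X = x)\cdot 0 = \prob(X = 0)$. For $v \ne 0$, it gives $\prob(V = v) = \prob(X = 0)\cdot 0 + \sum_{x \ne 0}\prob(X = x)\cdot \tfrac{1}{q-1} = \tfrac{1}{q-1}\bigl(1 - \prob(X = 0)\bigr)$, which is exactly the claimed \ac{qSC}-like form.

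There is essentially no obstacle here: the statement is elementary once one records that primality of $q$ makes every nonzero element invertible and that multiplication by a fixed unit is a bijection of $\units{q}$. The only point worth stating explicitly is the independence of $H$ and $X$, which is what licenses the conditioning; without it the conclusion need not hold.
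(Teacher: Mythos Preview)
Your argument is correct and complete; the paper itself omits the proof entirely, stating only that it is ``trivial,'' so your direct computation via the law of total probability and the bijectivity of multiplication by a unit is exactly the kind of elementary verification the authors had in mind. The explicit mention of the independence assumption $H \indep X$ is a useful clarification that the paper leaves implicit.
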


Even though for $q$ is non-prime the average extrinsic channel transition probabilities can not be represented by a \ac{qSC}, we still make this assumption. 
The Empirical evidence obtained by measuring the total variation distance between the true extrinsic channel and the \ac{qSC} shows that the \ac{qSC} can still be used to accurately model the actual extrinsic channel, especially if the ring possesses relatively many unit elements. More precisely, we show numerically that the total variation distance between the two message distributions tends to zero as the number of iteration grows.
We denote by $\mathcal{U}_q$ the fraction of units in $\intmodq{q}$, i.e.,
\begin{align*}
    \mathcal{U}_q := \frac{\card{\units{q}}}{\card{\intmodq{q}}}.
\end{align*}
In order to cover different cases and support the conjecture that the \ac{qSC} assumption is especially accurate for integer rings with relatively many units, we chose three integer rings having different fractions of units. Namely, we chose $\intmodq{8}$ with $\mathcal{U}_8 = 1/2$, $\intmodq{9}$ with $\mathcal{U}_9 = 2/3$ and $\intmodq{12}$ with $\mathcal{U}_{12} = 1/3$. Figures \ref{fig:TV_(3,6)}, \ref{fig:TV_(4,8)} and \ref{fig:TV_(5,10)} show the evolution of the total variation distance with the number of iterations for different regular LDPC code ensembles, respectively. In each figure and for each integer ring, we consider three different situations: one where the relative Lee weight $\delta$ is below $\thrSMP$, one where $\delta$ is close to $\thrSMP$ and one where the relative Lee weight exceeds the threshold. The figures clearly support the conjecture on the fraction of units $\mathcal{U}_q$ as well as the choice to model the average extrinsic channel transition probabilities by a \ac{qSC}.
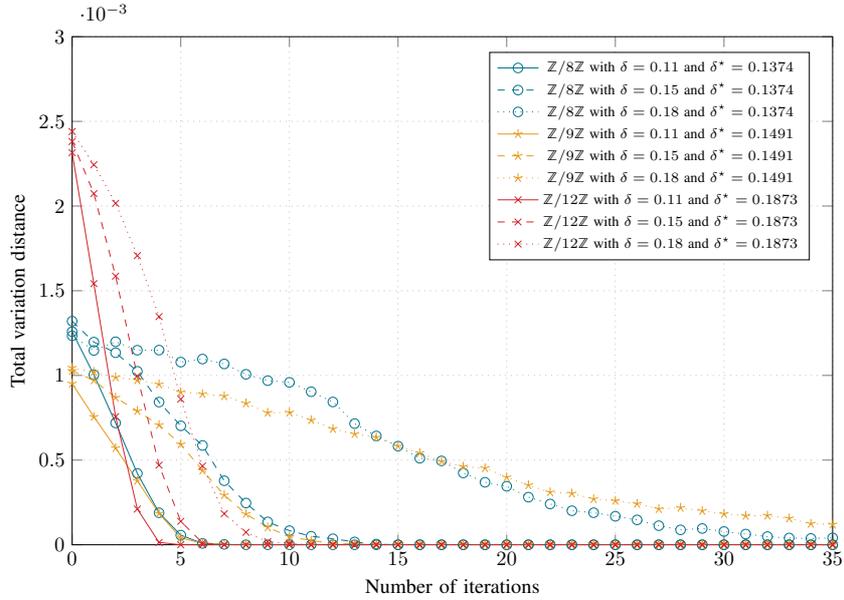
\begin{figure}[]
    \centering
    \begin{tikzpicture}[scale = 0.9, every node/.style={scale=0.85}]
    \begin{axis}[
        width=0.7\columnwidth,
	height=0.5\columnwidth,
	legend pos = north east,
	grid = both,
	grid style = {dotted,gray!50},
        xmin = 0,
	xmax = 35,
	ymin = 0,
        ymax = 3*10^-3,
	ylabel = {Total variation distance},
	xlabel = {Number of iterations},
	legend style = {font=\scriptsize},
	]
		
    %-----------------------------------------
    % q = 8
    %-----------------------------------------
    \addplot[teal!90!blue,solid,mark=o,mark options={solid}] table[x=Iter,y=TV] {TV_3,6_delta0.11_q8.txt};\addlegendentry{$\intmodq{8}$ with $\delta=0.11$ and $\delta^{\star} = 0.1374$}
    \addplot[teal!90!blue,dashed,mark=o,mark options={solid}] table[x=Iter,y=TV] {TV_3,6_delta0.15_q8.txt};\addlegendentry{$\intmodq{8}$ with $\delta=0.15$ and $\delta^{\star} = 0.1374$}
    \addplot[teal!90!blue,dotted,mark=o,mark options={solid}] table[x=Iter,y=TV] {TV_3,6_delta0.18_q8.txt};\addlegendentry{$\intmodq{8}$ with $\delta=0.18$ and $\delta^{\star} = 0.1374$}

    %-----------------------------------------
    % q = 9
    %-----------------------------------------
    \addplot[purple!40!yellow,solid,mark=star,mark options={solid}] table[x=Iter,y=TV] {TV_3,6_delta0.11_q9.txt};\addlegendentry{$\intmodq{9}$ with $\delta=0.11$ and $\delta^{\star} = 0.1491$}
    \addplot[purple!40!yellow,dashed,mark=star,mark options={solid}] table[x=Iter,y=TV] {TV_3,6_delta0.15_q9.txt};\addlegendentry{$\intmodq{9}$ with $\delta=0.15$ and $\delta^{\star} = 0.1491$}
    \addplot[purple!40!yellow,dotted,mark=star,mark options={solid}] table[x=Iter,y=TV] {TV_3,6_delta0.18_q9.txt};\addlegendentry{$\intmodq{9}$ with $\delta=0.18$ and $\delta^{\star} = 0.1491$}

    %-----------------------------------------
    % q = 12
    %-----------------------------------------
    \addplot[purple!70!orange,solid,mark=x,mark options={solid}] table[x=Iter,y=TV] {TV_3,6_delta0.11_q12.txt};\addlegendentry{$\intmodq{12}$ with $\delta=0.11$ and $\delta^{\star} = 0.1873$}
    \addplot[purple!70!orange,dashed,mark=x,mark options={solid}] table[x=Iter,y=TV] {TV_3,6_delta0.15_q12.txt};\addlegendentry{$\intmodq{12}$ with $\delta=0.15$ and $\delta^{\star} = 0.1873$}
    \addplot[purple!70!orange,dotted,mark=x,mark options={solid}] table[x=Iter,y=TV] {TV_3,6_delta0.18_q12.txt};\addlegendentry{$\intmodq{12}$ with $\delta=0.18$ and $\delta^{\star} = 0.1873$}

    \end{axis}
\end{tikzpicture}
    \caption{Evolution of the \ac{TV} distance between the extrinsic channel distribution and the \ac{qSC} for regular $(3, 6)$ LDPC code ensembles in the \ac{SMP} decoder.}
    \label{fig:TV_(3,6)}
\end{figure}
\begin{figure}[]    
    \centering
    \begin{tikzpicture}[scale = 0.9, every node/.style={scale=0.85}]
    \begin{axis}[
        width=0.7\columnwidth,
	height=0.5\columnwidth,
	legend pos=north east,
	grid=both,
	grid style={dotted,gray!50},
        xmin = 0,
	xmax=35,
	ymin=0,
        ymax = 3*10^-3,
	ylabel={Total variation distance},
	xlabel={Number of iterations},
	legend style={font=\scriptsize},
	]
		
    %-----------------------------------------
    % q = 8
    %-----------------------------------------
    \addplot[teal!90!blue,solid,mark=o,mark options={solid}] table[x=Iter,y=TV] {TV_4,8_delta0.11_q8.txt};\addlegendentry{$\intmodq{8}$ with $\delta=0.11$ and $\delta^{\star} = 0.1623$}
    \addplot[teal!90!blue,dashed,mark=o,mark options={solid}] table[x=Iter,y=TV] {TV_4,8_delta0.15_q8.txt};\addlegendentry{$\intmodq{8}$ with $\delta=0.15$ and $\delta^{\star} = 0.1623$}
    \addplot[teal!90!blue,dotted,mark=o,mark options={solid}] table[x=Iter,y=TV] {TV_4,8_delta0.18_q8.txt};\addlegendentry{$\intmodq{8}$ with $\delta=0.18$ and $\delta^{\star} = 0.1623$}

    %-----------------------------------------
    % q = 9
    %-----------------------------------------
    \addplot[purple!40!yellow,solid,mark=star,mark options={solid}] table[x=Iter,y=TV] {TV_4,8_delta0.11_q9.txt};\addlegendentry{$\intmodq{9}$ with $\delta=0.11$ and $\delta^{\star} = 0.1681$}
    \addplot[purple!40!yellow,dashed,mark=star,mark options={solid}] table[x=Iter,y=TV] {TV_4,8_delta0.15_q9.txt};\addlegendentry{$\intmodq{9}$ with $\delta=0.15$ and $\delta^{\star} = 0.1681$}
    \addplot[purple!40!yellow,dotted,mark=star,mark options={solid}] table[x=Iter,y=TV] {TV_4,8_delta0.18_q9.txt};\addlegendentry{$\intmodq{9}$ with $\delta=0.18$ and $\delta^{\star} = 0.1681$}

    %-----------------------------------------
    % q = 12
    %-----------------------------------------
    \addplot[purple!70!orange,solid,mark=x,mark options={solid}] table[x=Iter,y=TV] {TV_4,8_delta0.11_q12.txt};\addlegendentry{$\intmodq{12}$ with $\delta=0.11$ and $\delta^{\star} = 0.1795$}
    \addplot[purple!70!orange,dashed,mark=x,mark options={solid}] table[x=Iter,y=TV] {TV_4,8_delta0.15_q12.txt};\addlegendentry{$\intmodq{12}$ with $\delta=0.15$ and $\delta^{\star} = 0.1795$}
    \addplot[purple!70!orange,dotted,mark=x,mark options={solid}] table[x=Iter,y=TV] {TV_4,8_delta0.18_q12.txt};\addlegendentry{$\intmodq{12}$ with $\delta=0.18$ and $\delta^{\star} = 0.1795$}

    \end{axis}
\end{tikzpicture}
    \caption{Evolution of the \ac{TV} distance between the extrinsic channel distribution and the \ac{qSC} for regular $(4, 8)$ LDPC code ensembles in the \ac{SMP} decoder.}
    \label{fig:TV_(4,8)}
\end{figure}
\begin{figure}[]   
    \centering
    \begin{tikzpicture}[scale = 0.9, every node/.style={scale=0.85}]
    \begin{axis}[
        width=0.7\columnwidth,
	height=0.5\columnwidth,
	legend pos=north east,
	grid=both,
	grid style={dotted,gray!50},
        xmin = 0,
	xmax=35,
	ymin=0,
        ymax = 3*10^-3,
	ylabel={Total variation distance},
	xlabel={Number of iterations},
	legend style={font=\scriptsize},
	]
		
    %-----------------------------------------
    % q = 8
    %-----------------------------------------
    \addplot[teal!90!blue,solid,mark=o,mark options={solid}] table[x=Iter,y=TV] {TV_5,10_delta0.11_q8.txt};\addlegendentry{$\intmodq{8}$ with $\delta=0.11$ and $\delta^{\star} = 0.1437$}
    \addplot[teal!90!blue,dashed,mark=o,mark options={solid}] table[x=Iter,y=TV] {TV_5,10_delta0.15_q8.txt};\addlegendentry{$\intmodq{8}$ with $\delta=0.15$ and $\delta^{\star} = 0.1437$}
    \addplot[teal!90!blue,dotted,mark=o,mark options={solid}] table[x=Iter,y=TV] {TV_5,10_delta0.18_q8.txt};\addlegendentry{$\intmodq{8}$ with $\delta=0.18$ and $\delta^{\star} = 0.1437$}

    %-----------------------------------------
    % q = 9
    %-----------------------------------------
    \addplot[purple!40!yellow,solid,mark=star,mark options={solid}] table[x=Iter,y=TV] {TV_5,10_delta0.11_q9.txt};\addlegendentry{$\intmodq{9}$ with $\delta=0.11$ and $\delta^{\star} = 0.1553$}
    \addplot[purple!40!yellow,dashed,mark=star,mark options={solid}] table[x=Iter,y=TV] {TV_5,10_delta0.15_q9.txt};\addlegendentry{$\intmodq{9}$ with $\delta=0.15$ and $\delta^{\star} = 0.1553$}
    \addplot[purple!40!yellow,dotted,mark=star,mark options={solid}] table[x=Iter,y=TV] {TV_5,10_delta0.18_q9.txt};\addlegendentry{$\intmodq{9}$ with $\delta=0.18$ and $\delta^{\star} = 0.1553$}

    %-----------------------------------------
    % q = 12
    %-----------------------------------------
    \addplot[purple!70!orange,solid,mark=x,mark options={solid}] table[x=Iter,y=TV] {TV_5,10_delta0.11_q12.txt};\addlegendentry{$\intmodq{12}$ with $\delta=0.11$ and $\delta^{\star} = 0.1760$}
    \addplot[purple!70!orange,dashed,mark=x,mark options={solid}] table[x=Iter,y=TV] {TV_5,10_delta0.15_q12.txt};\addlegendentry{$\intmodq{12}$ with $\delta=0.15$ and $\delta^{\star} = 0.1760$}

    \end{axis}
\end{tikzpicture}
    \caption{Evolution of the \ac{TV} distance between the extrinsic channel distribution and the \ac{qSC} for regular $(5, 10)$ LDPC code ensembles in the \ac{SMP} decoder.}
    \label{fig:TV_(5,10)}
\end{figure}

%%%%%%%%%%%%%%%%%%%%%%%%%%%%%%%%%%%%%%%%%%%%%%%%%%%%%%%%%%%%%%%%%
%%%%%%%%%%%%%%%%%%%%%%%%%%%%%%%%%%%%%%%%%%%%%%%%%%%%%%%%%%%%%%%%%
\subsection{Numerical Results}\label{subsec:results}
We finally present numerical results showing the decoding performance (in terms of block error rates) of $(3, 6)$ regular LDPC codes of length $n = 256$ under both \ac{BP} and \ac{SMP} decoding. We chose to analyze the performances over three different integer rings, namely $\intmodq{5}$, $\intmodq{7}$ and $\intmodq{8}$. The performances will additionally be compared to the \ac{LSF} decoder presented in \cite[Algorithm 2]{santini2020low}. Following the suggestions of \cite{santini2020low}, we assumed a decoding threshold $\tau = \frac{d_{\vn}}{2}$ for the \ac{LSF} decoder. All the results were obtained using Monte Carlo simulations. The codes used in the simulations have been obtained by first generating their bipartite graphs via the \ac{PEG} algorithm \cite{HEA05}, and then by assigning to the nonzero entries in the code parity-check matrices elements sampled uniformly at random and independently from $\units{q}$. The error vectors in the {\constLC} are drawn uniformly at random from the Lee sphere of a given radius representing the desired weight according to \cite[Algorithms 1 and 2]{bariffi2022properties}, whereas in the {\memLC} the entries of the error vector are drawn according to the distribution defined in \eqref{eq:Lee_channel}. In  both cases, the performance is compared to the \ac{RCU} bounds established in Corollary \ref{cor:RCU_const_ML} and Theorem \ref{thm:RCU_mmless_ML}, respectively.\\

The block error probability evaluated over the memoryless channel is shown in Figure \ref{fig:Lee256}. The RCU bounds (dotted in the graph) show clearly the impact of the size $q$ of the finite integer ring, i.e., larger $q$ admit a larger relative Lee weight $\delta$. This is also observed in the performance under both \ac{BP} and \ac{SMP} decoding as well as in the \ac{LSF} decoder. The impact of $q$ in the \ac{SMP} is not only important for the admissible choices of $\delta$. Moreover it shows clearly the difference between $q$ prime and not. While a small gain is achieved when considering $\intmodq{8}$ instead of $\intmodq{7}$ under \ac{BP} decoding, the performance slightly suffers under \ac{SMP} decoding meaning there is almost no gain. This might be due to the \ac{qSC} assumption which holds only in an asymptotic sense for the non-field case, as discussed in Section \ref{subsec:DE}.

We observe the same effect in the performance over the {\constLC} in Figure \ref{fig:CWLee}, i.e. there is almost no gain visible when moving from $q = 7$ to $q=8$ under the \ac{SMP} decoder. Analogous to the memoryless case, we observe the same impact of the size of $\intmodq{q}$ on the possible choices of $\delta$ which is captured by the RCU bound for the \constLC. 
In both channel models we observe that the \ac{SMP} decoder outperforms the \ac{LSF} decoder despite the \ac{qSC} assumption in the extrinsic channel of the \ac{SMP}. We want to emphasize and acknowledge here that the \ac{LSF} was originally designed for low-Lee-density parity-check codes which form a special class of LDPC codes. Hence, when comparing the performances over the two decoders the difference of the code classes might be taken in consideration. Nevertheless, we will not focus deeper on this argument and leave this subject to future investigations. We believe that the additional knowledge about the marginal distribution plays a crucial part in the performance gain under \ac{SMP} decoding. Observe that the estimated threshold values obtained via density evolution analysis and stored in Table \ref{tab:dec_thresholds} match well to the actual block error rates achieved by both \ac{BP} and \ac{SMP} decoding. As expected from the predictions in Table \ref{tab:dec_thresholds}, \ac{BP} clearly outperforms \ac{SMP} decoding. However, the \ac{SMP} algorithm shows a performance that is appealing for applications demanding low-complexity decoding \cite{santini2020low}.

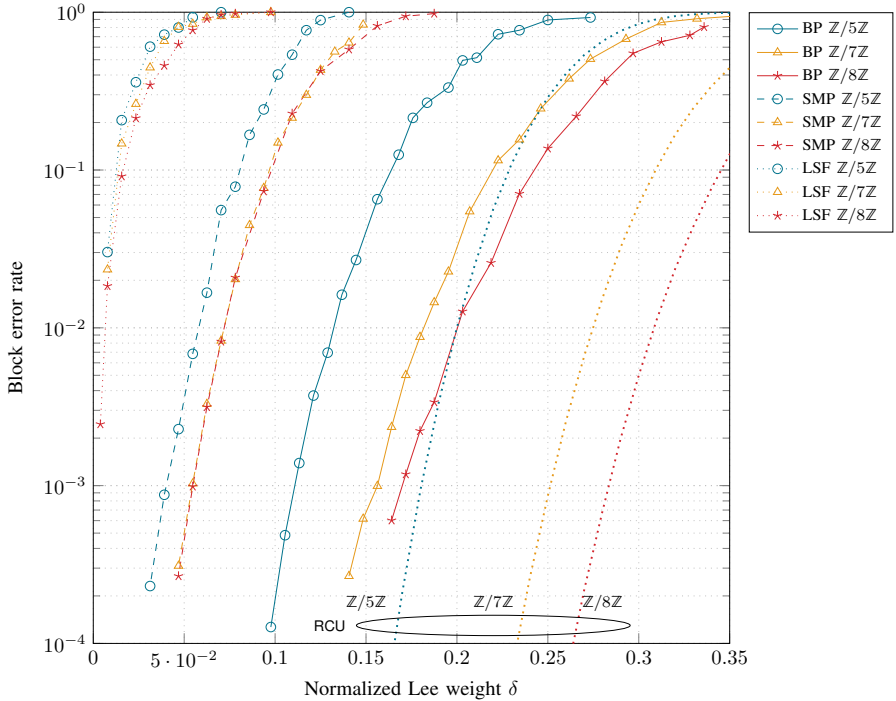
\begin{figure}%[H]
    \centering
    \begin{tikzpicture}[scale = 0.9, every node/.style={scale=0.85}]
    \begin{semilogyaxis}[
    width=0.6\columnwidth,
    height=0.6\columnwidth,
    legend pos=south west,
    grid=both,
    grid style={dotted,gray!50},
    xmin=0,
    xmax=0.35,
    ymin=1e-4,
    ymax=1,
    xlabel={Normalized Lee weight $\delta$},
    ylabel={Block error rate},
    y label style={at={(-0.1,0.5)}},
    legend pos = outer north east,
    legend cell align=left,
    legend style={font=\footnotesize},
    %title={Lee Channel \quad / \quad regular $(3, 6)$ LDPC Codes, length $n=256$
    ] 
    
    %%% NB BP Plots:
    \addplot[teal!90!blue,solid,mark=o,mark options={solid}] table[x=NE,y=CER] {NB_BP_GF5_n256_lee.txt};\addlegendentry{BP $\mathbb{Z}/5\mathbb{Z}$}
    \addplot[purple!40!yellow,solid,mark=triangle,mark options={solid}] table[x=NE,y=CER] {NB_BP_GF7_n256_lee.txt};\addlegendentry{BP $\mathbb{Z}/7\mathbb{Z}$}
    \addplot[purple!70!orange,solid,mark=star,mark=star,mark options={solid}] table[x=NE,y=CER] {NB_BP_Z8_n256_lee.txt};\addlegendentry{BP $\mathbb{Z}/8\mathbb{Z}$}
    
    %%% SMP Plots:
    \addplot[teal!90!blue,dashed,mark=o,mark options={solid}] table[x=NE,y=CER] {SMP_GF5_n256_lee.txt};\addlegendentry{SMP $\mathbb{Z}/5\mathbb{Z}$}
    \addplot[purple!40!yellow,dashed,mark=triangle,mark options={solid}] table[x=NE,y=CER] {SMP_GF7_n256_lee.txt};\addlegendentry{SMP $\mathbb{Z}/7\mathbb{Z}$}
    \addplot[purple!70!orange,dashed,mark=star,mark options={solid}] table[x=NE,y=CER] {SMP_Z8_n256_lee.txt};\addlegendentry{SMP $\mathbb{Z}/8\mathbb{Z}$}

    %%% LSF Plots:
    \addplot[teal!90!blue,dotted,mark=o,mark options={solid}] table[x=NE,y=CER] {LSF_Z5_n256_lee.txt};\addlegendentry{LSF $\mathbb{Z}/5\mathbb{Z}$}
    \addplot[purple!40!yellow,dotted,mark=triangle,mark options={solid}] table[x=NE,y=CER] {LSF_Z7_n256_lee.txt};\addlegendentry{LSF $\mathbb{Z}/7\mathbb{Z}$}
    \addplot[purple!70!orange,dotted,mark=star,mark options={solid}] table[x=NE,y=CER] {LSF_Z8_n256_lee.txt};\addlegendentry{LSF $\mathbb{Z}/8\mathbb{Z}$}

    %%% RCU Plots
    %% n = 256
    \addplot[teal!90!blue,dotted,thick] table[x=delta,y=RCU] {RCUSPB_n256_q5.txt};%\addlegendentry{RCU $\mathbb{Z}/5\mathbb{Z}$}
    \addplot[purple!40!yellow,dotted,thick] table[x=delta,y=RCU] {RCUSPB_n256_q7.txt};%\addlegendentry{RCU $\mathbb{Z}/7\mathbb{Z}$}
    \addplot[purple!70!orange,dotted,thick] table[x=delta,y=RCU] {RCUSPB_n256_q8.txt};%\addlegendentry{RCU $\mathbb{Z}/8\mathbb{Z}$}
    
    \draw[very thin] (axis cs: 0.22, 1.3e-4) ellipse (2cm and 0.15 cm);
    \node[align=center,scale=0.9] at (axis cs: 0.13, 1.3e-4) {\footnotesize \textsf{RCU}};
    \node at (axis cs: 0.15, 1.8e-4) {\footnotesize $\mathbb{Z}/5\mathbb{Z}$};
    \node at (axis cs: 0.22, 1.8e-4) {\footnotesize $\mathbb{Z}/7\mathbb{Z}$};
    \node at (axis cs: 0.28, 1.8e-4) {\footnotesize $\mathbb{Z}/8\mathbb{Z}$};	
\end{semilogyaxis}
\end{tikzpicture}

% blu: 0, 0.447, 0.741
% rosso: 0.85, 0.325, 0.098
% giallo: 0.929, 0.694, 0.125
    \caption{Block error rate vs. $\delta$ for regular $(3, 6)$ nonbinary \ac{LDPC} codes of length $n = 256$, {\memLC}. \ac{LSF} compared to the \ac{RCU} bound from Theorem~\ref{thm:RCU_mmless_ML},
    \ac{SMP} and \ac{BP} decoding.}\label{fig:Lee256}
\end{figure}
\begin{figure}%[H]
    \centering
    \begin{tikzpicture}[scale = 0.9, every node/.style={scale=0.85}]
    \begin{semilogyaxis}[
        width=0.6\columnwidth,
	height=0.6\columnwidth,
        grid=both,
        grid style={dotted,gray!50},
        xmin=0,
        xmax=0.41,
        ymin=1e-4,
        ymax=1,
        xlabel={Normalized Lee weight $\delta$},
        ylabel={Block error rate},
        y label style={at={(-0.1,0.5)}},
        legend cell align=left,
        legend pos = outer north east,
        legend style={font=\footnotesize},
    ]
    
    %%% NB BP Plots
    \addplot[teal!90!blue,solid,mark=o,mark options={solid}] table[x=NE,y=CER] {NB_BP_GF5_n256_constant.txt};\addlegendentry{BP $\mathbb{Z}/5\mathbb{Z}$}
    \addplot[purple!40!yellow,solid,mark=triangle,mark options={solid}] table[x=NE,y=CER] {NB_BP_GF7_n256_constant.txt};\addlegendentry{BP $\mathbb{Z}/7\mathbb{Z}$}
    \addplot[purple!70!orange,solid,mark=star,mark=star,mark options={solid}] table[x=NE,y=CER] {NB_BP_Z8_n256_constant.txt};\addlegendentry{BP $\mathbb{Z}/8\mathbb{Z}$}
   
    %%% SMP Plots
    \addplot[teal!90!blue,dashed,mark=o,mark options={solid}] table[x=NE,y=CER] {SMP_GF5_n256_constant.txt};\addlegendentry{SMP $\mathbb{Z}/5\mathbb{Z}$}
    \addplot[purple!40!yellow,dashed,mark=triangle,mark options={solid}] table[x=NE,y=CER] {SMP_GF7_n256_constant.txt};\addlegendentry{SMP $\mathbb{Z}/7\mathbb{Z}$}
    \addplot[purple!70!orange,dashed,mark=star,mark options={solid}] table[x=NE,y=CER] {SMP_Z8_n256_constant.txt};\addlegendentry{SMP $\mathbb{Z}/8\mathbb{Z}$}

    %%% LSF Plots
    \addplot[teal!90!blue,dotted,mark=o,mark options={solid}] table[x=NE,y=CER] {LSF_Z5_n256_constant.txt};\addlegendentry{LSF $\mathbb{Z}/5\mathbb{Z}$}
    \addplot[purple!40!yellow,dotted,mark=triangle,mark options={solid}] table[x=NE,y=CER] {LSF_Z7_n256_constant.txt};\addlegendentry{LSF $\mathbb{Z}/7\mathbb{Z}$}
    \addplot[purple!70!orange,dotted,mark=star,mark=star,mark options={solid}] table[x=NE,y=CER] {LSF_Z8_n256_constant.txt};\addlegendentry{LSF $\mathbb{Z}/8\mathbb{Z}$}

    %%% RCU Plots
    %% n = 256
    \addplot[teal!90!blue,dotted,thick] table[x=delta,y=RCU] {RCUConst256q5.txt};
    \addplot[purple!40!yellow,dotted,thick] table[x=delta,y=RCU] {RCUConst256q7.txt};
    \addplot[purple!70!orange,dotted,thick] table[x=delta,y=RCU] {RCUConst256q8.txt};
    
    \draw[very thin] (axis cs: 0.324, 1.3e-4) ellipse (2cm and 0.15 cm);
    \node[align=center,scale=0.9] at (axis cs: 0.22, 1.3e-4) {\footnotesize \textsf{RCU}};
    \node at (axis cs: 0.24, 1.85e-4) {\footnotesize $\mathbb{Z}/5\mathbb{Z}$};
    \node at (axis cs: 0.33, 1.85e-4) {\footnotesize $\mathbb{Z}/7\mathbb{Z}$};
    \node at (axis cs: 0.37, 1.85e-4) {\footnotesize $\mathbb{Z}/8\mathbb{Z}$};	
\end{semilogyaxis}
\end{tikzpicture}
    \caption{Block error rate vs. $\delta$ for regular $(3, 6)$ nonbinary \ac{LDPC} code ensembles of length $n = 256$ and rate $R = 1/2$ on the {\constLC} under \ac{LSF}, \ac{SMP} and \ac{BP} decoding compared to the \ac{RCU} bound from Theorem \ref{thm:RCU_const_ML}.}\label{fig:CWLee}
\end{figure}
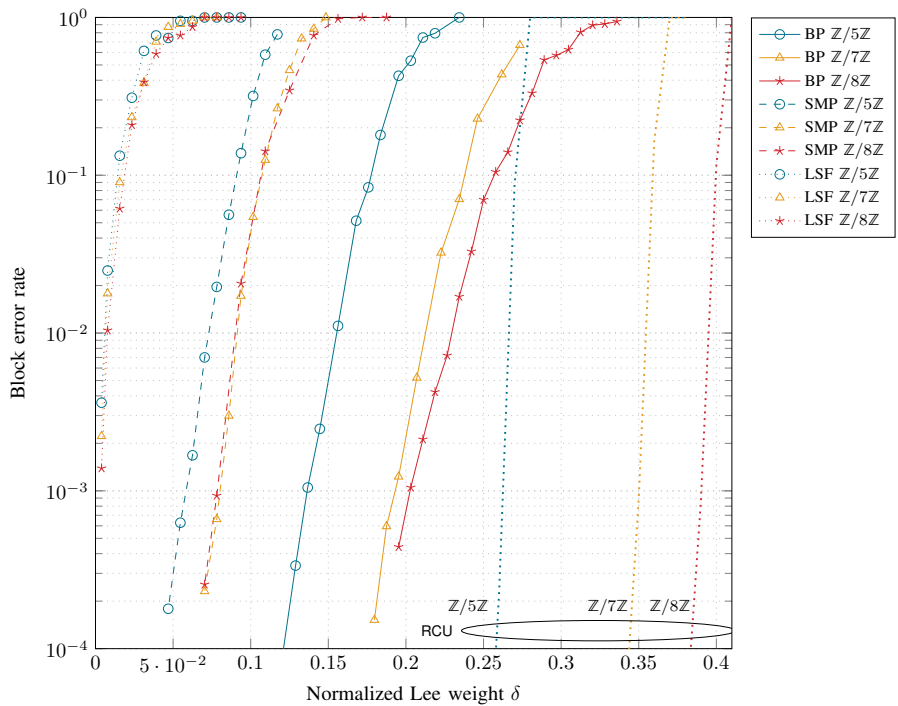
%%%%%%%%%%%%%%%%%%%%%%%%%%%%%%%%%%%%%%%%%%%%%%%%%%%%%%%%%%%%%

%%%%%%%%%%%%%%%%%%%%%%%%%%%%%%%%%%%%%%%%%%%%%%%%%%%%%%%%%%%%%
\section{Conclusions}\label{sec:conc}
In this paper we studied the decoding performance of random regular \acf{LDPC} codes over finite integer rings considering two channel models in the Lee metric, a memoryless channel model and a channel introducing an error of given Lee weight. 
We established the growth rate spectra of the Lee sphere and Lee volume, respectively. 
These results were used to derive random coding union bounds for the block error probability under maximum likelihood and minimum distance decoding for both channel models. In the case of the  {\memLC} we also derived a lower bound (in terms of a sphere packing bound) on the error probability. 
An upper bound on the \ac{ML} block error probability of linear codes based on the Lee weight enumerator of the code was introduced. The bound has been used to study the average block error probability of regular LDPC code ensembles, thanks to a derivation of the average Lee weight spectrum of the ensembles. The bound provides relevant information on the code performance in the low error probability regime (i.e., in the error floor region). The study has been complemented with a density evolution analysis. Two decoders have been considered: one based on the (non-binary) belief propagation algorithm, and a low-complexity message-passing algorithm where exchanged messages are hard symbols (i.e., ring elements).
The simulation results confirmed the outcomes of the density evolution analysis, that is belief propagation decoding outperforms symbol message passing decoding. Nevertheless, the performance under symbol message passing decoding seems a promising option for applications asking for low complexity (such as code-based cryptosystems involving the Lee metric). Furthermore, the performance analysis of regular \ac{LDPC} codes over integer residue rings that we developed might be useful to design such \ac{LDPC} codes for applications to cryptography.

In this work, we restricted ourselves to regular LDPC codes over integer residue rings. Future work includes the performance study of other families of LDPC codes over finite integer rings such as protograph-based and irregular LDPC codes.
%%%%%%%%%%%%%%%%%%%%%%%%%%%%%%%%%%%%%%%%%%%%%%%%%%%%%%%%%%%%%

%%%%%%%%%%%%%%%%%%%%%%%%%%%%%%%%%%%%%%%%%%%%%%%%%%%%%%%%%%%%%
% References
%%%%%%%%%%%%%%%%%%%%%%%%%%%%%%%%%%%%%%%%%%%%%%%%%%%%%%%%%%%%%
\bibliographystyle{IEEEtran}
\bibliography{IEEEabrv, library}

\vspace{-5mm}
%   Jessica Bariffi
\begin{IEEEbiographynophoto}{Jessica Bariffi}
    (IEEE Student Member) was born in Zurich, Switzerland in 1995. She received her M.Sc. and Ph.D. degrees in mathematics from the University of Zurich, Switzerland, in 2020 and 2024, respectively. In her dissertation (supervised by Prof. Joachim Rosenthal) she studied the algebraic structure of Lee-metric codes, the decoding performance of LDPC codes over suitably designed channels in the Lee metric, as well as Information Set Decoding in the Lee metric. She pursued her Ph.D. together with the German Aerospace Center (DLR) in Munich, Germany, where she worked in the Quantum-Resistant Cryptography group under Dr. Hannes Bartz. In July 2024 she started a postdoc position at the Technical University of Munich, Germany, at the Institute for Communications Engineering under the supervision of Prof. Antonia Wachter-Zeh.
\end{IEEEbiographynophoto}
\vspace{-10mm}
%   Hannes Bartz
\begin{IEEEbiographynophoto}{Hannes Bartz}
    (S'14-M'16) was born in Trostberg, Germany, in 1985. He received his Dipl.-Ing. and Dr.-Ing. degree from the Technical University of Munich, Germany, in 2010 and 2017, respectively. In his dissertation (supervised by Prof. Gerhard Kramer) he developed efficient algebraic decoding schemes for error-correcting codes in subspace and rank metric. In July 2017 he joined the Information Transmission Group within the Institute of Communications and Navigation at the German Aerospace Center (DLR). Since April 2021 he is leading the Quantum-Resistant Cryptography (QRC) group within the Satellite Networks department. His main research interests are code-based post-quantum cryptography and algebraic coding theory. In 2018 he has been appointed as a Lecturer at the Institute for Communications Engineering (LNT), Technical University of Munich, Germany. He received the Prof. Dr. Ralf Kötter memorial award in 2012.
\end{IEEEbiographynophoto}
\vspace{-10mm}
%   Gianluigi Liva
\begin{IEEEbiographynophoto}{Gianluigi Liva}
    (M’08–SM’14) was born in Spilimbergo, Italy, in 1977. He received the M.S. and Ph.D. degrees in electrical engineering from the University of Bologna, Italy, in 2002 and 2006, respectively. Since 2003 he has been investigating channel codes for high-data rate Consultative Committee for Space Data Systems (CCSDS) missions. From 2004 to 2005, he was involved in research at the University of Arizona, Tucson. Since 2006, he has been with the Institute of Communications and Navigation, German Aerospace Center (DLR), where he currently leads the Information Transmission Group. In 2010, he has been appointed as a Lecturer of channel coding with the Institute for Communications Engineering (LNT), Technical University of Munich (TUM). From 2012 to 2013, he was a Lecturer of channel coding with the Nanjing University of Science and Technology, China. Since 2014, he has been a Lecturer of channel codes with iterative decoding with LNT, TUM. He received the Italian National Scientific Habilitation (ASN) as Full Professor in Telecommunication Engineering in July 2017. His main research interests include satellite communications, random access techniques, and error control coding. He is/has been active in the DVB-SH, DVB-RCS, and DVB-S2 standardization groups, and in the standardization of error correcting codes for deep-space communications within the CCSDS. He was the co-chair of the 2018 IEEE European School on Information Theory, the sponsor co-chair of the IEEE Information Theory Workshop 2020 in Riva del Garda, and the TPC co-chair of the 2023 International Symposium on Topics in Coding. Since 2020, he serves as Associate Editor in Coding and Information Theory for the IEEE Transactions on Communications.
\end{IEEEbiographynophoto}
\vspace{-10mm}

%   Joachim Rosenthal
\begin{IEEEbiographynophoto}{Joachim Rosenthal}
    (Fellow, IEEE) received the Diploma degree in mathematics from the University of Basel in 1986 and the Ph.D. degree in mathematics from Arizona State University in 1990. From 1990 to 2006, he was with the University of Notre Dame, USA, where he was the holder of an Endowed Chair of applied mathematics and he was also a Concurrent Professor of electrical engineering. Since 2004, he has been a Professor of applied mathematics with the University of Zurich, where was the past Chair of the Institute of Mathematics and the Vice Dean of the College of Science. He will be (2024–2025) the President of the Swiss Mathematical Society. His current research interests include coding theory and cryptography. He has served as an organizer or the program chair for numerous international conferences, e.g., he is one of the Technical Program Chair of ISIT 2024, Athens, Greece. In addition, he has served on numerous editorial boards.
\end{IEEEbiographynophoto}

\end{document}